\DeclareMathAlphabet\mathbfcal{OMS}{cmsy}{b}{n}
\xdef\csname bf\x \endcsname{\noexpand\ensuremath{\noexpand\mathbf{\x}}} 
\xdef\csname bf\x \endcsname{\noexpand\ensuremath{\noexpand\mathbf{\x}}} 
\xdef\csname ds\x \endcsname{\noexpand\ensuremath{\noexpand\mathds{\x}}} 
\xdef\csname cal\x \endcsname{\noexpand\ensuremath{\noexpand\mathcal{\x}}} 
\xdef\csname bcal\x \endcsname{\noexpand\ensuremath{\noexpand\mathbfcal{\x}}} 
\newcommand{\vUn}{\mathds{1}} 
\newcommand{\out}{\mathop{\otimes}} 
\newcommand{\T}{{\sf T}} 
\DeclareMathOperator*{\vectorize}{vec} 
\DeclareMathOperator*{\Card}{Card} 
\newcommand{\triples}[3]{\left\{#1,#2,#3\right\}} 
\newcommand{\jkl}{\triples{j}{k}{\ell}} 
\newcommand{\Am}[1]{\bfA^{(#1)}} 
\newcommand{\am}[2]{\bfa^{(#1)}_{#2}} 
\newcommand{\ua}{\underline{\bfa}} 
\newcommand{\uAm}[1]{\underline{\bfA}^{(#1)}} 
\newcommand{\uam}[2]{\ua^{(#1)}_{#2}} 
\newcommand{\uamr}[1]{\uam{#1}{r}} 
\newcommand{\amr}[1]{\am{#1}{r}} 
\newcommand{\lbd}{\bm{\lambda}} 
\newcommand{\cpdM}{\cpdsetp{\lbd; \Am{1},\ldots, \Am{M}}} 
\newcommand{\cpdsetp}[1]{\left[\!\left[#1\right]\!\right]} 
\newcommand{\Htrip}[3]{\bcalH^{(#1#2#3)}}
\newcommand{\Hjkl}{\Htrip{j}{k}\ell}
\def\cprank{{\mathrm{rank}_{\text{CP}}}} 
\def\cpprank{{\mathrm{rank}_{\text{CP+}}}} 
\newcommand{\model}{(\mu,\Tht)} 
\newcommand{\modelone}{(\mu_{1},\Tht_{1})} 
\newcommand{\modell}[1]{(\mu_{#1},\Tht^{#1}_1)} 
\newcommand{\tht}{\bm{\theta}} 
\newcommand{\Tht}{\bm{\Theta}} 
\newcommand{\jac}[1]{\bcalJ_{#1}}
\newcommand{\jacmt}{\bcalJ_{\mu}(\tht)}
\newcommand{\Rmax}{R_{\text{max}}}
\def\rank{{\mathrm{rank}}} 
\newcommand{\kron}{\mathop{\boxtimes}} 
\newcommand{\Bm}[1]{\bfB^{(#1)}} 
\newcommand{\Cm}[1]{\bfC^{(#1)}} 
\def\imag{{\mathrm{Im}}} 
\newtheorem{definition}{Definition}[section]
\newtheorem{lemma}[definition]{Lemma}
\newtheorem{remark}[definition]{Remark}
\newtheorem{proposition}[definition]{Proposition}
\newtheorem{theorem}[definition]{Theorem}
\newtheorem{example}[definition]{Example}
\newtheorem{conjecture}[definition]{Conjecture}
\journal{Signal Processing}
\begin{document}

\begin{frontmatter}

\title{Coupled tensor models for probability mass function estimation: Part II, Uniqueness of the model.}

\author{Philippe FLORES$^{a,1}$, Konstantin USEVICH$^{2}$, David BRIE$^{2}$}

\affiliation{organization={Corresponding author -- $^{1}$CNRS, Universit\'e Grenoble Alpes, Grenoble INP, GIPSA-lab ; 11 rue des Mathématiques, 38402 Saint-Martin-d'Hères, France -- $^{2}$CNRS, Universit\'e de Lorraine, CRAN ; Campus Sciences BP 70239, Vandœuvre-lès-Nancy, France}} 

\begin{abstract}
    In this paper, uniqueness properties of a coupled tensor model are studied.
    This new coupled tensor model is used in a new method called Partial Coupled Tensor Factorization of 3D marginals or PCTF3D.
    This method performs estimation of probability mass functions by coupling 3D marginals, seen as order-3 tensors.
    The core novelty of PCTF3D's approach (detailed in the part I article) relies on the partial coupling which consists on the choice of 3D marginals to be coupled.
    Tensor methods are ubiquitous in many applications of statistical learning, with their biggest advantage of having strong uniqueness properties.
    In this paper, the uniqueness properties of PCTF3D's constrained coupled low-rank model is assessed.
    While probabilistic constraints of the coupled model are handled properly, it is shown that uniqueness highly depends on the coupling used in PCTF3D.
    After proposing a Jacobian algorithm providing maximum recoverable rank, different coupling strategies presented in the Part I article are examined with respect to their uniqueness properties.
    Finally, an identifiability bound is given for a so-called \emph{Cartesian coupling} which permits enhancing sufficient bounds of the literature.
\end{abstract} 

\begin{keyword}
  Coupled tensor models \sep Hypergraphs \sep Recoverability \sep Sum-to-one constraints \sep Jacobian \sep Identifiability.
\end{keyword}

\end{frontmatter}

\section{Introduction} \label{sec:intro}

While matrices can be decomposed with the Singular Value Decomposition \cite{eckart_approximation_1936}, high-order tensors \cite{comon_tensors_2014} can be factorized with similar decompositions such as the Tucker decomposition \cite{tucker_some_1966} or the Canonical Polyadic Decomposition (CPD) \cite{hitchcock_expression_1927}.
These decompositions are all connected to some notion of \emph{rank}.
For any of those decompositions, a question always remains: for which ranks can one ensure that a decomposition is unique ?
The main advantage of high-order tensors is that decompositions are unique under mild conditions -- that is for higher ranks -- compared to matrix decompositions \cite{kolda_tensor_2009}.
Also, tensor decompositions are unique only up to trivial ambiguities (scaling and permutations of rank-one terms).

Nevertheless, tensors suffer from the curse of dimensionality which states that the complexity of a problem increases exponentially with its number of dimensions.
In the context of probability mass function (PMF) estimation, this curse prevents to scale tensor methods to large datasets.
In a Part I article \cite{pctf3d_part1}, a PMF estimation method called Partial Coupled Tensor Factorization of 3D marginals (PCTF3D) was proposed.
PCTF3D tames the curse of dimensionality as it couples partially 3D-marginals -- seen as order-3 tensors -- to obtain the higher-order CPD of a PMF.
PCTF3D is a generalization of the method \cite{n_kargas_tensors_2018} (called FCTF3D for Full Coupled Tensor Factorization of 3D marginals) that permits to reduce the complexity while keeping reasonable estimation performances.

Although PCTF3D shows promising results, it lacks uniqueness guarantees compared to FCTF3D.
Indeed, PCTF3D introduced a new constrained coupled model.
In this article, the uniqueness properties of the model introduced by PCTF3D are studied.
To do so, both recoverability and identifiability are examined.

The aim of this paper is to study both recoverability and identifiability of our newly-proposed coupled tensor model.
To do so, Section \ref{sec:preli} presents preliminary results on the CP decomposition. In Section \ref{sec:cpd_simplex}, the probabilistic simplex constraints are introduced with some uniqueness results in this particular case.
In Section \ref{sec:PolAddModel}, we propose an algorithm that provides maximum recoverable rank in the case of polynomial additive models.
Then, Section \ref{sec:recoAlgo} shows that PCTF3D's model is, after reparameterization linked to simplex constraints, in fact a polynomial additive model.
Key results on PCTF3D's recoverability are then presented in Section \ref{sec:bounds}, while Section \ref{sec:balVSrngreco} gives insights on recoverability for the two mains coupling strategies presented in \cite{pctf3d_part1}: random and balanced.
Finally, a sufficient identifiability condition is proven in Section \ref{sec:cartesianProd}.

\subsection*{Notations}
Scalars (respectively vectors and matrices) are denoted as both upper and lowercase (respectively lowercase bold and uppercase bold) letters.
A higher-order tensor is an array whose number of dimensions $M\geq3$ and is denoted with a calligraphic bold letter.
The set of integers $\{1,\ldots, M\}$ is denoted $\cpdsetp{1,M}$ in the following.
The notation $\vUn_R$ refers to a column vector of size $I$ whose entries are all ones.
$\vectorize(.)$ denotes the column-major vectorization operator for a tensor or a matrix.
The operator $.^\T$ applied to a vector or a matrix designates the transposition.
The identity matrix of size $R$ will be denoted as $\bfI_R$.
Finally, the operator $\Card\left\{.\right\}$ denotes the number of elements of a given set.

\section{Background: the Canonical Polyadic tensor decomposition}\label{sec:preli}

\subsection{Definition of the CP decomposition} \label{sec:tens:defCPD}
An order-$M$ tensor of size $I_1\times\dots\times I_M$ is an $M$-way array with entries $\bcalX = [\calX_{i_1,\ldots,i_M}]_{i_1,\ldots,i_M=1}^{I_1,\ldots, I_M}$ (note that tensors are in fact multilinear operators \cite{comon_tensors_2014}).
Matrices, vectors and scalars can be viewed respectively as tensors of order 2, 1 and 0, but, in what follows, the term \emph{tensor} will refer to tensors of order $M\geq3$.
\begin{definition}[Outer product and rank-one tensors]
The outer (tensor) product of a set of $M$ vectors is an order-$M$ tensor $\bcalX$ of size $I_1\times\dots\times I_M$ whose entries are defined by:
\begin{equation*}
    x_{i_1\cdots i_M} = \prod\limits_{m=1}^M a^{(m)}_{i_m},
\end{equation*}
where $\left( \bfa^{(m)} \right)_{m=1}^M$, $\bfa^{(m)}\in\dsR^{I_m}$ is a tuple of vectors.
The outer product is usually denoted as $\bcalX = \bfa^{(1)}\out \cdots \out \bfa^{(M)}$.
 A tensor $\bcalX$ is said to be \emph{rank-one} if it can be written as an outer product.
\end{definition}

\begin{definition}[CPD, \cite{hitchcock_expression_1927}]
    The \emph{Canonical Polyadic Decomposition} (CP decomposition or CPD) of a tensor $\bcalX\in\dsR^{I_1\times\cdots\times I_M}$ is a decomposition of $\bcalX$ a sum of $R$ rank-one tensors:
    \begin{equation}
        \bcalX = \sum_{r=1}^R \lambda_r\amr{1}\out\cdots\out\amr{M},
        \label{eq:tens:fullcpd}
    \end{equation}
    where $\amr{m}\in\dsR^I$ for $m\in\cpdsetp{1,M}$ and $r\in\cpdsetp{1,R}$ are called the \emph{factors} of the decomposition and $\lambda_r \in \dsR$.
    For convenience, the factors of the decomposition \eqref{eq:tens:fullcpd} are grouped into \emph{factor matrices} $\Am{m}$, $m\in\cpdsetp{1,M}$ and the weights in $\lambda_r$ into a \emph{loading vector} $\lbd$:
    \begin{equation*}
    \Am{m} = \begin{bmatrix}\am{m}{1} & \cdots & \am{m}{R}\end{bmatrix}, \quad \lbd = \begin{bmatrix}\lambda_1 & \cdots & \lambda_R\end{bmatrix}^\T,
    \end{equation*}
    resulting in the compact CPD notation:
    \begin{equation}
        \bcalX = \cpdM.
        \label{eq:tens:cpd}
    \end{equation} 
\end{definition}

\begin{definition}[rank of a tensor]
    For a higher-order tensor $\bcalX$, the \emph{tensor rank} of $\bcalX$ (or also the \emph{CP rank of} $\bcalX$) is defined as the smallest integer such that \cref{eq:tens:fullcpd} holds and will be denoted $\cprank(\bcalX)$.
\end{definition}
There exist other notions of rank \cite{kolda_tensor_2009} (multilinear, block-term) which are different from the CP-rank.
However, in the rest of the paper, only the CPD will be used hence the term \emph{rank} will only refer to the CP-rank.

For non-negative tensors (tensors with non-negative entries), the \emph{non-negative rank} of $\bcalX$, denoted as $\cpprank(\bcalX) \geq \cprank(\bcalX) $ is the proper extension of the rank accounting for non-negative factors.
However, there are many cases where the two ranks coincide $\cpprank(\bcalX) = \cprank(\bcalX) $ \cite{qi_semialgebraic_2016}.
This is the situation considered in the remaining of the paper.

\subsection{Uniqueness of the CPD} \label{sec:tens:uniqueCP}

The biggest advantage of tensor decompositions over their matrix counterparts is their uniqueness under relatively mild conditions.

\begin{definition} 
The CPD \eqref{eq:tens:cpd} of a tensor $\bcalX$ is called unique is all other possible $R$-term CPDs are given by permutation and rescaling of elements, that is
\[
\bcalX = \cpdsetp{\bm{D}_0 \bm{\Pi}\lbd;\Am{1}\bm{\Pi}\bm{D}_1,\ldots,\Am{M}\bm{\Pi}\bm{D}_M},
\]
where $\bm{\Pi}$ is an $R\times R$ permutation matrix and $\bm{D}_m$ is a set of diagonal $R\times R$ nonsingular matrices, and $\bm{D}_0 = {\prod\limits_{m=1}^M \bm{D}^{-1}_m}$.

The nonnegative CPD of a nonnegative tensor is called unique if the same holds for nonnegative diagonal matrices.
\end{definition}

Denoting $ \kappa(\bfA)$, the Kruskal rank of matrix $\bfA$ -defined as the maximum number of columns such that every subset is linearly independent - the Kruskal condition gives a sufficient uniqueness condition for order-3 tensors \cite{kruskal_three-way_1977}.
\begin{proposition}[Kruskal condition, \cite{kruskal_three-way_1977}]
    If the matrices $\Am1\in\dsR^{I_1\times R}$, $\Am2\in\dsR^{I_2\times R}$ and $\Am3\in\dsR^{I_3\times R}$ satisfy
    \begin{equation*}
        \kappa(\Am{1})+\kappa(\Am{2})+\kappa(\Am{3}) \geq 2R+2,
    \end{equation*}
    then, the tensor $\bcalX = \cpdsetp{\Am1,\Am2,\Am3}$ is of rank $R$ and has a unique CP decomposition.
    \label{prop:tens:Kruskal3D}
\end{proposition}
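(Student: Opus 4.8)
The plan is to reproduce Kruskal's classical argument \cite{kruskal_three-way_1977}, which reduces the whole statement to a single combinatorial lemma about two matrices with comparable sparsity patterns. Write $\mathbf A=\Am1$, $\mathbf B=\Am2$, $\mathbf C=\Am3$ with Kruskal ranks $k_{\mathbf A}=\kappa(\mathbf A)$, $k_{\mathbf B}=\kappa(\mathbf B)$, $k_{\mathbf C}=\kappa(\mathbf C)$, and suppose $\bcalX=\cpdsetp{\mathbf A,\mathbf B,\mathbf C}=\cpdsetp{\bar{\mathbf A},\bar{\mathbf B},\bar{\mathbf C}}$ is any competing $R$-term decomposition, the loadings having been absorbed into the first factor of each. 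The target is a single $R\times R$ permutation $\bm\Pi$ and nonsingular diagonal matrices with $\bar{\mathbf A}=\mathbf A\bm\Pi\mathbf D_1$, $\bar{\mathbf B}=\mathbf B\bm\Pi\mathbf D_2$, $\bar{\mathbf C}=\mathbf C\bm\Pi\mathbf D_3$.

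First I would introduce the slice-mixing device. For a weight vector $\mathbf t$ the associated linear combination of frontal slices admits two factorizations,
\[
\mathbf X(\mathbf t)=\mathbf A\,\mathrm{diag}(\mathbf C^{\T}\mathbf t)\,\mathbf B^{\T}=\bar{\mathbf A}\,\mathrm{diag}(\bar{\mathbf C}^{\T}\mathbf t)\,\bar{\mathbf B}^{\T}.
\]
Writing $\omega(\cdot)$ for the number of nonzero entries of a vector, the right-hand representation exhibits $\mathbf X(\mathbf t)$ as a sum of $\omega(\bar{\mathbf C}^{\T}\mathbf t)$ rank-one terms, so $\rank\,\mathbf X(\mathbf t)\le\omega(\bar{\mathbf C}^{\T}\mathbf t)$. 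From the left, setting $w=\omega(\mathbf C^{\T}\mathbf t)$ and keeping only the $w$ active columns, Sylvester's rank inequality together with the defining property of the Kruskal rank (any $w\le k_{\mathbf A}$ columns of $\mathbf A$, resp.\ any $w\le k_{\mathbf B}$ of $\mathbf B$, are independent) gives the lower bound
\[
\rank\,\mathbf X(\mathbf t)\ \ge\ \min(w,k_{\mathbf A})+\min(w,k_{\mathbf B})-w.
\]

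Second, I would feed these two bounds into Kruskal's permutation lemma, whose hypothesis (for the mode-$\mathbf C$ factors) asks that $\omega(\mathbf C^{\T}\mathbf t)\le\omega(\bar{\mathbf C}^{\T}\mathbf t)$ for every $\mathbf t$ with $\bar w:=\omega(\bar{\mathbf C}^{\T}\mathbf t)\le R-k_{\mathbf C}+1$. Combining the two displayed inequalities yields $\min(w,k_{\mathbf A})+\min(w,k_{\mathbf B})-w\le\bar w\le R-k_{\mathbf C}+1$, and a short case split on whether $w$ exceeds $k_{\mathbf A}$ or $k_{\mathbf B}$ closes the verification: when $w\le\min(k_{\mathbf A},k_{\mathbf B})$ the lower bound reads $w\le\bar w$ outright, whereas in the extreme case $w>k_{\mathbf A}$ and $w>k_{\mathbf B}$ it collapses to $k_{\mathbf A}+k_{\mathbf B}+k_{\mathbf C}-w\le R+1$, so the sum hypothesis forces $w\ge R+1$, impossible since $w\le R$ (the mixed cases are excluded the same way). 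This is precisely where the threshold $2R+2$ is consumed: it is the exact budget that makes every branch contradict $w>\bar w$. The permutation lemma then returns $\bar{\mathbf C}=\mathbf C\bm\Pi_{\mathbf C}\mathbf D_{\mathbf C}$, and since the whole construction is symmetric under permuting the three tensor modes, the same reasoning yields analogous relations for $\mathbf A$ and $\mathbf B$.

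Finally I would reconcile the three permutations: substituting the three proportionality relations back into $\bcalX$ and matching rank-one terms (which are individually identifiable because the Kruskal condition makes the $R$ terms jointly independent, a fact that simultaneously certifies $\cprank(\bcalX)=R$ and minimality) forces $\bm\Pi_{\mathbf A}=\bm\Pi_{\mathbf B}=\bm\Pi_{\mathbf C}$ and $\mathbf D_1\mathbf D_2\mathbf D_3=\mathbf I_R$, i.e.\ trivial ambiguity only. I expect the genuine difficulty to lie not in the rank bookkeeping above but in the permutation lemma itself: its proof is a delicate combinatorial induction on sparsity levels, tracking how the supports of $\bar{\mathbf A}^{\T}\mathbf x$ and $\mathbf A^{\T}\mathbf x$ nest as $\mathbf x$ ranges over all directions, and it is the one ingredient I would cite from \cite{kruskal_three-way_1977} rather than reprove.
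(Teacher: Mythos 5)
The paper does not prove this proposition: it is stated as a classical result and attributed directly to \cite{kruskal_three-way_1977}, so there is no in-paper argument to compare against. Your sketch is a faithful reconstruction of Kruskal's original proof (slice mixing $\mathbf A\,\mathrm{diag}(\mathbf C^{\T}\mathbf t)\,\mathbf B^{\T}$, the Sylvester lower bound versus the $\omega(\bar{\mathbf C}^{\T}\mathbf t)$ upper bound, the case split that consumes the $2R+2$ budget, and the final reconciliation of the three permutations via full column rank of the Khatri--Rao products), and it is sound modulo the permutation lemma, which you correctly identify as the genuinely hard ingredient and appropriately cite rather than reprove.
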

A generalization of Kruskal's sufficient condition to order-$M$ tensors was provided in \cite{sidiropoulos_uniqueness_2000} and states that a CPD \eqref{eq:tens:cpd} is unique (for $\lbd$ without zeros) if:
\begin{equation}\label{eq:tens:kruskalM}
    2R+(M-1) \leq \sum\limits_{m=1}^M \kappa(\Am{m}).
\end{equation} 

\subsection{Generic uniqueness}\label{sec:chap2:generic}

In general, a property is said \emph{generic} if it holds almost everywhere \cite{comon_generic_2009}.
In this subsection, we introduce the notion of generic uniqueness in the case of tensor decompositions and some classical results around this notion.
\begin{definition} {\textbf{Generic uniqueness} -- }
    The CP model of rank $R$ \eqref{eq:tens:cpd} is said to be \emph{generically unique} (or \emph{identifiable}) if the CPD is unique for all possible factors except a set of factors of Lebesgue measure zero.
    Equivalently, the CPD is unique (with probability 1) for factor matrices drawn from an absolutely continuous distribution.
\end{definition}
In other words, if generic uniqueness is ensured, the probability of drawing random factors such that the CPD is not unique is zero.
\begin{example}
    By using the generalized Kruskal condition \eqref{eq:tens:kruskalM} in the case of cubic tensors, the CPD is generically unique if 
    \begin{equation*}
        2R+(M-1) \leq \min(I,R)M.
    \end{equation*}
    Indeed, it is known that $\kappa(\Am{m}) = \min(I,R)$ for factor matrices drawn from an absolutely continuous distribution.
    Therefore, the bound follows by replacing the Kruskal rank with its generic value in \eqref{eq:tens:kruskalM}.
\end{example}

In \cite{chiantini_generic_2012}, milder sufficient conditions were proved in the case of tensors of order 3, stating that the CPD is generically unique if
\begin{equation*}
    R\leq 4^{\log_2(I)-1}.
\end{equation*}
These are the results that were used in \cite{n_kargas_tensors_2018} for deriving uniqueness conditions for coupled factorization of marginals.

There are many recent results on generic rank and generic uniqueness and in many cases, the CP models for ranks below the generic ranks are identifiable \cite{chiantini_algorithm_2014,qi_semialgebraic_2016}.
In particular, the following result will be used.
\begin{theorem}[{\cite[Corollary 6.2]{bocci_refined_2014}}]
    Let $\bcalX$ a tensor of size $I_1\times I_2 \times I_3$ such that $2<I_3\leq I_2\leq I_1$. Then, the (complex-valued) $R$-term CP decomposition of $\bcalX$ is identifiable if:
    \begin{equation}
        R\leq \frac{I_1 I_2 I_3}{I_1+I_2+I_3-2}-I_1.
        \label{eq:thm_ident_3d}
    \end{equation}
\end{theorem}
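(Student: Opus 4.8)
Since $\bcalX$ is taken complex-valued, the statement is really one about \emph{generic identifiability} of secant varieties, and the plan is to translate the rank-$R$ CPD into the projective geometry of the Segre variety $X=\dsP^{I_1-1}\times\dsP^{I_2-1}\times\dsP^{I_3-1}\subset\dsP^{I_1I_2I_3-1}$, whose points are exactly the rank-one tensors. A generic rank-$R$ tensor is then a generic point of the $R$-th secant variety $\sigma_R(X)$, and $R$-identifiability is precisely the statement that the rational addition map sending an unordered set of $R$ points of $X$ (with projective coefficients) to the generic point of their linear span is generically injective. Working with unordered sets of \emph{projective} points automatically absorbs the trivial scaling and permutation ambiguities, so generic injectivity of this map is exactly uniqueness in the sense of the definitions above.

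First I would split the problem into two sub-claims. (i) \textbf{Non-defectivity:} the parametrization of $\sigma_R(X)$ by $R$-tuples of points of $X$ is generically finite as soon as $\sigma_R(X)$ attains its expected dimension $R(I_1+I_2+I_3-2)-1$, since $\dim X = I_1+I_2+I_3-3$ and the ambient dimension is $I_1I_2I_3-1$; a fibre-dimension count then gives generic finiteness. This holds throughout the subgeneric range $R<\tfrac{I_1I_2I_3}{I_1+I_2+I_3-2}$, which strictly contains the range allowed by \eqref{eq:thm_ident_3d} because its right-hand side lies below that threshold by exactly $I_1$. (ii) \textbf{Generic injectivity:} upgrading ``finite'' to ``degree one'' is the actual content, and it is here that the hypothesis $R\le\tfrac{I_1I_2I_3}{I_1+I_2+I_3-2}-I_1$, with its characteristic subtraction of the largest dimension, is consumed.

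For step (ii) the engine is Terracini's lemma combined with the notion of weak defectivity. At a generic point $z=\sum_{r=1}^R p_r$ of $\sigma_R(X)$, with the $p_r\in X$ generic, Terracini's lemma identifies $T_z\sigma_R(X)$ with the span $\langle T_{p_1}X,\ldots,T_{p_R}X\rangle$. Generic injectivity then follows once one shows that the associated \emph{contact locus} -- the points of $X$ whose tangent space lies in a generic tangent hyperplane to $\sigma_R(X)$ along $z$ -- is zero-dimensional, reduced, and equal to $\{p_1,\ldots,p_R\}$; equivalently, $X$ fails to be $R$-weakly defective in the sense of Chiantini--Ciliberto. The argument of \cite{bocci_refined_2014} establishes this by an inductive flattening over the three modes: projecting $X$ away from the tangent space at one $p_r$ collapses it onto a Segre variety of smaller size, and careful bookkeeping of the Hilbert function of the scheme of points, fed through this induction, forces the contact scheme to have length exactly $R$. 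It is this balancing of the codimension gained at each mode against $\dim X$ that produces both the denominator $I_1+I_2+I_3-2$ and the corrective term $-I_1$.

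I expect the main obstacle to be step (ii), and within it the control of the contact scheme: ruling out a positive-dimensional family of alternative decompositions genuinely requires the weak-defectivity analysis rather than a naive comparison of tangent-space dimensions, and the sharpness of the $-I_1$ correction is the delicate point. A strictly weaker but elementary alternative would bypass all of this -- reshape $\bcalX$ into a matrix flattening of size $I_1\times I_2I_3$ and invoke the generalized Kruskal condition \eqref{eq:tens:kruskalM} -- but this yields identifiability only in a smaller range of $R$, and closing precisely that gap is what the refined secant-variety method of \cite{bocci_refined_2014,chiantini_algorithm_2014} is designed to accomplish.
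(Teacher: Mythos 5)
The paper does not prove this statement at all: it is imported verbatim from \cite[Corollary 6.2]{bocci_refined_2014} and used as a black box, so there is no internal proof to compare yours against. Judged on its own terms, your outline correctly identifies the framework in which the cited result lives (secant varieties of the Segre variety, Terracini's lemma, weak defectivity and contact loci in the sense of Chiantini--Ciliberto, and an inductive projection argument), and your reading of the role of the $-I_1$ correction is consistent with how the bound arises in that literature.

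As a proof, however, it has two genuine gaps. First, in step (i) you assert that non-defectivity of $\sigma_R(X)$ ``holds throughout the subgeneric range'' and that ``a fibre-dimension count then gives generic finiteness.'' The dimension count only yields the conditional statement (if the expected dimension is attained, then the fibres are generically finite); it does not establish that the expected dimension is attained. Non-defectivity of three-factor Segre varieties across the full subgeneric range is not a formal consequence of any dimension count --- it is the content of a still partly open programme (Abo--Ottaviani--Peterson), and the cited paper must prove the specific instances it needs rather than invoke them. Second, step (ii), which you yourself identify as carrying all the content, is not actually argued: the zero-dimensionality and reducedness of the contact locus, the inductive projection onto smaller Segre varieties, and the Hilbert-function bookkeeping are all delegated to ``the argument of \cite{bocci_refined_2014}.'' The proposal is therefore an accurate annotated citation rather than a proof. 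Given that the paper itself treats the theorem exactly the same way, that level of detail is defensible, but the overclaim in step (i) should be removed or weakened if the sketch is meant to stand as a justification.
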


Note that in the cubic case ($I_1=I_2=I_3=I$), the identifiability results go back to Strassen \cite{strassen_rank_1983}.
This condition shows that for cubic tensors, the maximum rank grows quadratically in $I^2/3$ which is better than the growing in $I^2/16$ of \cite{chiantini_generic_2012}.
Stronger results are available in \cite{chiantini_algorithm_2014} for small (not too large) $I$.
\begin{remark}
    Many identifiability results in the literature are proved for complex-valued tensors.
    However, as shown in \cite[Corollary 30]{qi_semialgebraic_2016}, for the same tensor size $I \times J \times K$, complex identifiability implies real identifiability and also nonnegative identifiability.
    \label{rem:tens:equivCRRplus}
\end{remark}

\section{Uniqueness of coupled CP decompositions with simplex constraints} \label{sec:cpd_simplex}

\subsection{Coupled CP decompositions with simplex constraints} 
\label{sec:pctf3drecalls}
Let $\bcalH$ be an (unknown) order-$M > 3$ tensor following a rank-$R$ CP model with simplex constraints 
\begin{equation}
 \bcalH = \cpdsetp{\lbd;\Am1,\ldots,\Am{M}}, 
    \end{equation}
subject to :\\
\begin{minipage}{0.45\linewidth}
    \begin{equation} \label{eq:lbdNN}
        \lbd\geq0
    \end{equation}
    \begin{equation} \label{eq:lbdS21}
        \vUn^\T_R\lbd = 1
    \end{equation}
\end{minipage} \hfill \begin{minipage}{0.45\linewidth}
    \begin{equation} \label{eq:AmNN}
        \Am{m}\geq0
    \end{equation}
    \begin{equation} \label{eq:AmS21}
        \vUn^\T_I\Am{m} = \vUn^\T_R
    \end{equation}
\end{minipage}
\\[2ex]
This is exactly the Probability Mass Function (PMF) model for high-dimensional discrete random vectors, described in \cite{pctf3d_part1}.
Defining a 3D-marginal as
\[
(\Hjkl)_{i_j i_k i_\ell} := 
\sum\limits^{I}_{\substack{i_t =1, \\ t \in \cpdsetp{1,M} \setminus \jkl}}  \mathcal{H}_{i_1\ldots i_M},
\]
i.e., the summation is performed along all modes excluding $\{j,k,l\}$, 
then each $\{\Hjkl\}_{\jkl\in\calT}$ admit a rank-$R$ CP decomposition with simplex constraints. Then, considering a collection of $T$ order-3 known tensors, indexed by triplets
\begin{equation}\label{eq:cpd_coupled}
    \Hjkl = \cpdsetp{\lbd;\Am{j},\Am{k},\Am{\ell}}, \quad \text{for all } \jkl\in\calT,
\end{equation}
\[
 \calT = \left\{\{j_1,k_1,\ell_1\},\{j_2,k_2,\ell_2\},\ldots,\{j_T,k_T,\ell_T\} \right\} \subset 2^{\cpdsetp{1,M}}
\] defines a rank $R$ coupled CP decompositions of order-3 tensors with simplex constraints. 
Following \cite[Section 4]{pctf3d_part1}, the fully coupled CP decompositions will refer to the case where all the $\binom{M}{3}$ possible triplets are considered while the so-called \emph{partial coupling} comes to the choice of $T< \binom{M}{3}$ triplets.

The question at hand is to give conditions on the maximal rank under which the coupled decompositions yield unique loading and factor matrices $\lbd;\Am1,\ldots,\Am{M}$ and thus uniquely recover the unknown high-order tensor $\bcalH$.

\subsection{Uniqueness (identifiability results)}
Two identifiability conditions are proved in \cite{n_kargas_tensors_2018}. 
\begin{theorem}[{\cite[Theorem 1]{n_kargas_tensors_2018}}]
\label{th:kargas_1}
    Let $\bcalH$ a tensor of size $I^M$ having a rank-$R$ CP model with simplex constraints.\\
    If $M \leq I$, then, $\bcalH$ is almost surely (a.s) identifiable from the full set of 3D-marginals $\Hjkl$ if 
    $$R \leq I(M-2)$$.
    If $M > I$, then, $\bcalH$ is a.s. identifiable from the full set of 3D-marginals $\Hjkl$ if 
    $$
    R \leq \left( \lfloor \frac{\sqrt{MI-1}}{I} \rfloor I - 1 \right)^2 
    $$
\end{theorem}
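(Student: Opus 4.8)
The plan is to reduce the coupled, simplex-constrained identifiability question to the generic uniqueness of a \emph{single} order-3 tensor assembled from the available marginals, and then to read off the two stated regimes from an optimal grouping of the $M$ modes. Recall that identifiability here means that $\lbd$ and the factors $\Am1,\ldots,\Am M$ are a.s. the unique solution (up to one common column permutation and scaling) reproducing all $\binom{M}{3}$ marginals $\Htrip{j}{k}{\ell}$. First I would partition $\cpdsetp{1,M}$ into three disjoint, nonempty groups $S_1,S_2,S_3$ and stack the corresponding factor matrices vertically into
\[
\bfU = \left[\Am{m}\right]_{m\in S_1},\quad \bfV = \left[\Am{m}\right]_{m\in S_2},\quad \bfW = \left[\Am{m}\right]_{m\in S_3},
\]
of sizes $(|S_1|I)\times R$, $(|S_2|I)\times R$, $(|S_3|I)\times R$, and form the order-3 tensor $\bcalY = \cpdsetp{\lbd;\bfU,\bfV,\bfW}$. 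The key observation is that every entry of $\bcalY$ is indexed by one mode from each group, hence equals an entry of a \emph{cross-group} marginal $\Htrip{m_1}{m_2}{m_3}$ with $m_1\in S_1,m_2\in S_2,m_3\in S_3$; since the groups are disjoint these triplets are valid and, for the \emph{full} marginal set, all of them are known. Thus $\bcalY$ is a fully known order-3 tensor, and a unique CPD of $\bcalY$ returns $\bfU,\bfV,\bfW$ up to one common permutation and scaling, hence every block $\Am{m}$ under the \emph{same} permutation; the shared $\lbd$ makes the column correspondence consistent across groups, and the simplex constraints \eqref{eq:AmS21}, \eqref{eq:lbdS21} pin down the residual diagonal scaling, so the whole model is recovered.

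It then remains to certify generic uniqueness of $\bcalY$ and to choose the grouping optimally. Rather than the symmetric bounds \eqref{eq:tens:kruskalM} or \eqref{eq:thm_ident_3d}, the relevant regime here is the one in which the third factor has \emph{full column rank}: I would place most modes in $S_3$ so that $|S_3|I\ge R$ and $\bfW$ is generically of rank $R$, and then invoke a classical generic identifiability result for order-3 tensors possessing a full-column-rank factor, which guarantees uniqueness as soon as $R\le(|S_1|I-1)(|S_2|I-1)$. The two cases of the theorem then arise from optimizing $|S_1|,|S_2|,|S_3|$ against these two requirements.

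For $M\le I$ I would take $S_1=\{j\}$, $S_2=\{k\}$ as singletons and $S_3$ the remaining $M-2$ modes, so that $\bfU=\Am{j},\bfV=\Am{k}$ are $I\times R$ and $\bfW$ has $(M-2)I$ rows. The uniqueness condition becomes $R\le(I-1)^2$ while full column rank of $\bfW$ requires $R\le(M-2)I$; since $(M-2)I\le(I-1)^2$ exactly when $M\le I$, the binding constraint is the latter and yields $R\le I(M-2)$. For $M>I$ the cap $(I-1)^2$ is wasteful, so I would instead enlarge $S_1,S_2$ to $q$ modes each and leave $M-2q$ modes in $S_3$; now $\bfU,\bfV$ have $qI$ rows, the uniqueness condition reads $R\le(qI-1)^2$, and full column rank of $\bfW$ requires $(qI-1)^2\le(M-2q)I$. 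Taking $q$ as large as this last inequality permits gives, after simplification, $q=\lfloor\sqrt{MI-1}/I\rfloor$ and the bound $R\le(qI-1)^2$.

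The step I expect to be the main obstacle is the genericity bookkeeping attached to the simplex constraints. One must show that the \emph{stacked, constrained} matrices $\bfU,\bfV,\bfW$ still attain full column rank almost everywhere on the constrained parameter set, despite the per-block sum-to-one relations $\vUn_I^\T\Am{m}=\vUn_R^\T$; a clean route is to reparameterize away one redundant row per block and argue genericity on the free parameters, as is done later for the polynomial additive reformulation. In parallel one must verify that the third-order identifiability result, usually stated for complex tensors, transfers to the real nonnegative setting, which follows from the equivalence recalled in \cref{rem:tens:equivCRRplus}. The remaining work, namely the combinatorial optimization of the group sizes and the floor manipulation that produces the exact expression in the $M>I$ case, is then routine.
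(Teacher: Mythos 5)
This statement is quoted from \cite{n_kargas_tensors_2018} and is not proved in the paper itself; the closest material is \Cref{sec:cartesianProd}, where \Cref{lem:stacking} and the remark following it describe exactly your stacking construction and identify the two partitions ($\{1\},\{2\},\cpdsetp{3,M}$ for Theorem 1 and a balanced three-way split for Theorem 2) underlying the proofs in that reference. Your reduction of the coupled problem to a single known order-3 tensor $\cpdsetp{\lbd;\bfU,\bfV,\bfW}$, the appeal to generic uniqueness for tensors with a full-column-rank factor (conditions $R\le(I_1-1)(I_2-1)$ and $R\le I_3$), and the optimization over group sizes yielding $q=\lfloor\sqrt{MI-1}/I\rfloor$ are all faithful to the original argument.

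The gap sits exactly where you flag ``the main obstacle,'' and your proposed fix does not deliver the stated constants. Since every column of each block $\Am{m}$ sums to one, a column of the stacked matrix $\bfW=[\Am{m}]_{m\in S_3}$ lies in an affine subspace of dimension $|S_3|(I-1)$ not containing the origin, so its generic rank on the \emph{constrained} parameter set is $\min(R,\,|S_3|(I-1)+1)$ rather than $\min(R,\,|S_3|I)$. Hence for $M\ge 4$ the full-column-rank hypothesis fails generically whenever $(M-2)(I-1)+1<R\le (M-2)I$, so the bound $R\le I(M-2)$ cannot be certified by this route; the same shrinkage applies to $\bfU,\bfV$ in the $M>I$ case, replacing $(qI-1)^2$ by $(q(I-1))^2$. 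This is precisely the content of \Cref{rem:constraints_Bm}: generic uniqueness results for $IM_1\times IM_2\times IM_3$ tensors cannot be invoked because the constrained factors form a measure-zero subset of that parameter space. Your suggested remedy --- reparameterizing away one row per block --- is the paper's own strategy, but it shrinks the effective dimensions from $|S_i|I$ to $|S_i|(I-1)+1$ and therefore proves \Cref{thm:IdentCoupledModel} and the bound \eqref{eq:ident3Deven}, a different (and, per the paper, stronger) result, not the constants stated here. To defend the statement as written you would need a separate argument that the non-uniqueness locus of the unconstrained stacked tensor meets the constraint set in a set of relative measure zero, which neither your sketch nor the paper supplies.
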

\begin{theorem}[{\cite[Theorem 2]{n_kargas_tensors_2018}}]
\label{th:kargas_2}
    Let $\bcalH$ a tensor of size $I^M$ having a rank-$R$ CP model with simplex constraints and let $\alpha$ be the largest integer such that $2 ^\alpha \leq \lfloor \frac{N}{3} \rfloor I $
    \\
    Then, $\bcalH$ is almost surely (a.s) identifiable from the full set of 3D-marginals $\Hjkl$ if
    $$R \leq 4^{\alpha - 1}$$
    which is implied by 
    $$R \leq \frac{\left(\lfloor \frac{M}{3} \rfloor I + 1\right)^2}{16}.$$
\end{theorem}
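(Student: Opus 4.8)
The plan is to collapse the coupled, simplex-constrained problem into the generic identifiability of a single cubic order-3 tensor and then invoke the Chiantini--Ottaviani bound $R \leq 4^{\log_2(I)-1}$ of \cite{chiantini_generic_2012}. First I would partition the whole mode set $\cpdsetp{1,M}$ into three disjoint groups $G_1,G_2,G_3$ of almost equal size, so that each group contains at least $\lfloor M/3\rfloor$ modes. For each group I form the factor matrix $\Bm{i}$ obtained by stacking vertically the factor matrices $\Am{m}$, $m\in G_i$; it has $J_i\geq J:=\lfloor M/3\rfloor I$ rows and $R$ columns. An entrywise check shows that the order-3 tensor $\bcalY=\cpdsetp{\lbd;\Bm1,\Bm2,\Bm3}$ is exactly the block tensor whose blocks are the 3D-marginals $\Hjkl$ with $j\in G_1$, $k\in G_2$, $\ell\in G_3$. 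Hence $\bcalY$ is entirely determined by the full set of 3D-marginals, and identifying $\lbd;\Am1,\ldots,\Am{M}$ reduces to uniquely decomposing $\bcalY$ and reading off each $\Am{m}$ as a row-block of the corresponding $\Bm{i}$.

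Next I would establish identifiability of $\bcalY$. Because the original factors are drawn from an absolutely continuous distribution and the groups are disjoint, the stacked matrices $\Bm1,\Bm2,\Bm3$ are jointly generic. I would then restrict to a cubic $J\times J\times J$ sub-tensor (selecting $J$ rows per mode): its factors are generic $J\times R$ sub-blocks, so the Chiantini--Ottaviani condition applies and guarantees identifiability as soon as $R\leq 4^{\log_2 J-1}$. The remaining rows of each $\Bm{i}$ are then pinned down linearly: once the rank-one structure and $\lbd$ are fixed, every extra row solves a linear system governed by a Khatri--Rao product of two factors, which has full column rank because $R\leq 4^{\alpha-1}\leq J^2/4 < J^2 \leq J_jJ_k$. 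Writing $\alpha$ for the largest integer with $2^\alpha\leq J$, we have $\alpha\leq\log_2 J$, hence $4^{\alpha-1}\leq 4^{\log_2 J-1}$, so $R\leq 4^{\alpha-1}$ indeed suffices.

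It then remains to derive the clean numerical bound and to transfer the conclusion to the constrained model. The implication from $R\leq(\lfloor M/3\rfloor I+1)^2/16$ follows from the maximality of $\alpha$: since $2^{\alpha+1}>J$ with both sides integers, $2^{\alpha+1}\geq J+1$, whence $4^{\alpha-1}=2^{2\alpha+2}/16\geq (J+1)^2/16=(\lfloor M/3\rfloor I+1)^2/16\geq R$. Identifiability of $\bcalY$ returns $\Bm1,\Bm2,\Bm3$ and $\lbd$ up to a single common column permutation $\bm{\Pi}$ and group-wise scalings; since each $\Am{m}$ is a row-block of one $\Bm{i}$, every factor inherits the same permutation and the scaling of its group, which is precisely the trivial CPD ambiguity, subsequently fixed by the sum-to-one constraints \eqref{eq:AmS21} and \eqref{eq:lbdS21} through the simplex results of \Cref{sec:cpd_simplex}.

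The hard part will be the passage from the purely generic statement to the simplex-constrained, nonnegative regime: Chiantini--Ottaviani is phrased for generic (complex) factors, whereas here the factors are column-stochastic and nonnegative, hence confined to a measure-zero subset of $\dsR^{I\times R}$. I would overcome this by combining \Cref{rem:tens:equivCRRplus} (complex identifiability implies real and nonnegative identifiability) with the simplex-specific uniqueness of \Cref{sec:cpd_simplex}, which together ensure that the non-identifiable locus does not contain the simplex manifold and that the residual scaling freedom is removed by the constraints. A secondary point to verify is that the vertical stacking induced by the coupling does not push $\Bm{i}$ into this bad locus, i.e. that the structured factors remain generic within the constrained set.
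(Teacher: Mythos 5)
First, a point of comparison: the paper does not actually prove \Cref{th:kargas_2} --- it is quoted verbatim from \cite[Theorem 2]{n_kargas_tensors_2018} --- so the relevant benchmark is the original proof strategy, which the paper reconstructs and critiques in \Cref{sec:cartesianProd}. Your reduction is precisely that strategy: a balanced three-way partition of the modes, vertical stacking of the factors into $\Bm{1},\Bm{2},\Bm{3}$ so that the block tensor $\bcalY=\cpdsetp{\lbd;\Bm{1},\Bm{2},\Bm{3}}$ collects the corresponding 3D marginals (this is \Cref{lem:stacking}), an appeal to the Chiantini--Ottaviani bound on the stacked tensor, and the elementary estimate $4^{\alpha-1}=4^{\alpha+1}/16\geq (J+1)^2/16$. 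Those parts, including the arithmetic, are fine.

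The genuine gap is the sentence asserting that $\Bm{1},\Bm{2},\Bm{3}$ are ``jointly generic.'' They are not: by \eqref{eq:AmS21} every column of $\Bm{i}$ satisfies $M_i$ independent affine conditions (\Cref{rem:constraints_Bm}), so the stacked factors are confined to a measure-zero subset of $\dsR^{IM_i\times R}$ and the generic identifiability results of \cite{chiantini_generic_2012,bocci_refined_2014} cannot be invoked for the $IM_1\times IM_2\times IM_3$ tensor, nor for a $J\times J\times J$ sub-tensor whose factor blocks inherit the same constraints. You flag this difficulty in your last paragraph, but the proposed repair does not close it: \Cref{rem:tens:equivCRRplus} transfers complex identifiability to real and to \emph{nonnegative} factors, and nonnegativity is a full-dimensional restriction, whereas the sum-to-one constraint has positive codimension --- exactly the situation that remark does not cover; and \Cref{sec:cpd_simplex} contains no independent uniqueness result to fall back on (it only states the two theorems being discussed). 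The device that actually turns this outline into a proof is the paper's reparameterization $\Bm{i}=\bfQ_i\Cm{i}$, where $\Cm{i}\in\dsR^{(M_i(I-1)+1)\times R}$ ranges over a set of positive Lebesgue measure, reducing the question to generic identifiability of a tensor of size $((I-1)M_1+1)\times((I-1)M_2+1)\times((I-1)M_3+1)$ (\Cref{thm:IdentCoupledModel}, via \Cref{lem:reduced_parameterization} and \Cref{lem:identifiable-submodel}); without this step, or an equivalent dimension-reduction argument, your proof does not go through as written.
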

Depending on the values of $I$ and $M$, one can choose the most favorable rank bound.
However, these rank bounds are far from the maximum attainable rank for both the fully and partially coupled CP decompositions.
Examining in details the proofs of theorems \ref{th:kargas_1} and \ref{th:kargas_2}, it appears that they do not account for the simplex constraints of the coupled CP decompositions.
Also, they rely on a particular type of coupling that will be referred to as \emph{Cartesian coupling} which serves as a basis ingredient for proving the sufficient identifiability condition in section \Cref{sec:cartesianProd}.
To properly account for the simplex constraints, results from real algebraic geometry are adapted to the case considered.
Also, the relaxed notion of \emph{recoverability} of $\bcalH$ allows deriving rank bounds depending explicitly on the coupling strategy.

\section{Polynomial additive models and identifiability}\label{sec:PolAddModel}

In order to study the coupled uniqueness properties of our model, the Canonical Polyadic Decomposition is seen as a particular case of algebraic models: polynomial additive models.
In this subsection, we recall results on uniqueness of polynomial additive models from \cite{breiding_algebraic_2021} which uses the tools of real algebraic geometry \cite{qi_semialgebraic_2016}.
We will see that this framework permits handling simplex constraints properly for the study of both recoverability and identifiability. 

\subsection{Polynomial and additive mappings}
\begin{definition}
A polynomial model is a couple $(\mu, \Tht)$ of a polynomial map $\mu: \dsR^{n} \to \dsR^S$ and a set of parameters $\Tht \subseteq \dsR^{n}$, which is assumed to be polyhedral (i.e., a set defined by affine inequalities $\mathbf{A} \mathbf{x} \le \mathbf{b}$).
\end{definition}
\begin{remark}
Examples of polyhedral sets include $\Tht = \dsR^{n}$ (the whole space), $\Tht = \dsR^{n}_{+}$ (positive orthant) and $\Tht = \Delta_{n}$ (simplex).
Most of the statements in this section can be applied to the more general case when $\Tht$ is a semialgebraic subset \cite{qi_semialgebraic_2016} (defined by polynomial inequalities), but, for simplicity, only the polyhedral case will be considered. 
\end{remark}

\begin{definition}[Additive model] \label{def:additive}
Let us consider a base polynomial model $\modelone$ where $\Tht_1\subseteq\dsR^{n_1}$.
Then, the $R$-term additive model $\modell{R}$ is defined as 
\[
\mu_R(\tht) = \mu_1(\tht_1)+\cdots+\mu_1(\tht_R),
\]
where $\tht = \left(\tht_1, \cdots, \tht_R \right)$, $\tht_r \in \Tht_1$ for all $r =1,\ldots, R$.
\end{definition}

\begin{example} \label{ex:cpModel}
    The CP model can be viewed as a special case of polynomial additive models.
    In this case, we can choose the mapping $\mu_1: \dsR^{1+IM} \to \dsR^{I^{M}}$ that maps the factors $(\lambda_1, \am{1}{1},\ldots, \am{M}{1})$ to a (vectorized) rank-one tensor
    \begin{equation}
        \label{eq:paramRank1}
        \mu_1(\tht_r) = \vectorize\left(\lambda_1 \am{1}{1}\out\cdots\out\am{M}{1}\right),
    \end{equation}
So for $\tht = \left(\tht_1, \cdots, \tht_R \right)$, and $\tht_r = (\lambda_r, \am{1}{r},\ldots, \am{M}{r})$, 
we get
\[
\mu_R(\tht) = \sum\limits_{r=1}^R \vectorize\left( \lambda_r \amr{1}\out\cdots\out\amr{M}\right) = \vectorize\left(\cpdM\right),
\]
Choosing $\Tht_1 = \dsR^{1+IM}$ corresponds to the real CP model while choosing $\Tht = \dsR^{1+IM}_{+}$ corresponds to the non-negative CP model.
\end{example}

Additive polynomial models generalize tensor decompositions and can be studied in the framework of X-rank decompositions \cite{qi_semialgebraic_2016}.
Similarly to tensor decompositions, uniqueness of additive models can be defined.
\begin{definition}[Essential uniqueness of additive models]
    \label{def:essential}
Let $\modell{R}$ be an $R$-term additive model.
We say that $\modell{R}$ is \emph{essentially unique} at a given $\tht\in \left(\tht_1, \cdots, \tht_R \right)$ if for 
\begin{equation}\label{eq:unqueness_decomp}
\bfy = \mu_R(\tht) = \mu_1(\tht_1)+\cdots+\mu_1(\tht_R)
\end{equation}
all alternative decompositions arise only from permuting terms in \eqref{eq:unqueness_decomp} (formally, elements in $\mu^{-1} (\bfy)$ are equal to $\tht$ up to a permutation of terms $\tht_r$ and change of parameters $\tht_r$ that do not change the one-term result $\mu_1(\tht_r)$).

The model $\modell{R}$ is called \emph{identifiable} if it is essentially unique for a generic $\tht\in\Tht$ (that is for all $\tht$ except a subset of Lebesgue measure zero).
\end{definition}

For the case of CPD or nonnegative CPD (see \Cref{ex:cpModel}), uniqueness and identifiability in the sense of \Cref{def:essential} corresponds exactly to uniqueness and generic uniqueness of CP decompositions.

\subsection{Unique parameterization of additive models and non-ambiguous parameterizations}
In this paper, it will be more convenient to work with non-ambiguous parameterizations of the polynomial additive models,
so that the uniqueness in the sense of \Cref{def:essential} is equivalent to uniqueness of parameters except permutations.
\begin{definition}
For a model $\modelone$, we say that it is non-ambiguous at $\theta$ if $\mu_1^{-1}(\mu_1(\theta)) = \{\theta\}$ (equivalently, $\mu_1(\theta) = \mu_1(\theta')$ implies $\theta = \theta'$).
\end{definition}
Note that for tensor models, because of the scale ambiguity, the default parameterization \eqref{eq:paramRank1} is ambiguous, but it becomes non-ambiguous if we restrict the model as follows.

\begin{example}\label{def:rank1_parameterization}
For $I_1,I_2,I_3$, define a model $\mu_1: \Tht_1 \to \dsR^{I_1I_2I_3}$, with $\Tht_1 = \dsR^{I_1+I_2+I_3-2}$ which maps $\theta = (\lambda,\underline{\bfa},\underline{\bfb},\underline{\bfc})$ with $\lambda \in \dsR$, $\underline{\bfa} \in \dsR^{I_1-1}$,$\underline{\bfb}\in \dsR^{I_2-1}$,$\underline{\bfc} \in \dsR^{I_3-1}$,
\begin{equation}\label{eq:rank1_parameterization_ones}
\mu_1: \theta \mapsto\vectorize\left( \lambda \left[\begin{smallmatrix}\underline{\bfa} \\1 \end{smallmatrix}\right] \out \left[\begin{smallmatrix}\underline{\bfb} \\1 \end{smallmatrix}\right] \out \left[\begin{smallmatrix}\underline{\bfc} \\1 \end{smallmatrix}\right] \right).
\end{equation}
The parameterization in \eqref{eq:rank1_parameterization_ones} is non-ambiguous for almost all $\theta \in \Theta_1$ (except the ones with $\lambda=0$).
It also represents almost all rank-one $I_1 \times I_2 \times I_3$ tensors (i.e., except a set of measure zero).
\end{example}

\begin{remark}
    Parameterization \eqref{eq:rank1_parameterization_ones} is often used for studying statistical properties of tensor models, such as Cram\'{e}r-Rao bounds for CPD \cite{prevost_constrained_2022}.
\end{remark}

\begin{remark}[Uniqueness in non-ambiguous parameterizations]
    Consider the $R$-term polynomial additive model $\modell{R}$ and $\theta = (\theta_1,\ldots,\theta_R)$ so that $\mu_1$ is non-ambiguous at each $\theta_k$.

    Then $\modell{R}$ is unique at $\theta$ if and only if all other decompositions of $\bfy = \mu_R(\theta)$ in \eqref{eq:unqueness_decomp} are obtained by permuting $\theta_k$.
\end{remark}

In this paper, we will use non-ambiguous parameterizations for studying identifiability of coupled CP modes, and thus we will need the following lemma.
\begin{lemma}\label{lem:reduced_parameterization}
If the $R$-term CPD of an $I_1 \times I_2 \times I_3$ tensor is generically unique, then the $R$-term additive polynomial model for \eqref{eq:rank1_parameterization_ones} is identifiable
\end{lemma}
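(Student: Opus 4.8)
The plan is to exhibit the reduced parameterization \eqref{eq:rank1_parameterization_ones} as a global slice for the scaling ambiguity of the ordinary CP decomposition, and then to push the hypothesis through this slice. Write the full order-3 CP parameter space as $\calP=\dsR^{I_1\times R}\times\dsR^{I_2\times R}\times\dsR^{I_3\times R}\times\dsR^{R}$ with coordinates $(\bfA,\bfB,\bfC,\lbd)$, and let $\Phi\colon\calP\to\dsR^{I_1I_2I_3}$ be the CP map $\Phi(\bfA,\bfB,\bfC,\lbd)=\vectorize\big(\cpdsetp{\lbd;\bfA,\bfB,\bfC}\big)$. Let $B\subseteq\calP$ be the non-uniqueness locus, i.e.\ the parameters at which the $R$-term CPD of $\Phi(\bfA,\bfB,\bfC,\lbd)$ is not unique up to permutation and scaling; by hypothesis $B$ is Lebesgue-null in $\calP$. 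The model $\modell{R}$ attached to \eqref{eq:rank1_parameterization_ones} is precisely $\Phi$ restricted to the slice $S\subseteq\calP$ on which the last row of each of $\bfA,\bfB,\bfC$ equals $\vUn_R^\T$; the free coordinates $(\lambda_r,\underline{\bfa}_{r},\underline{\bfb}_{r},\underline{\bfc}_{r})_{r=1}^R$ are identified, through an affine embedding $\iota$, with $\theta\in\Tht_R:=\Tht_1^R$. Everything therefore reduces to showing that $B\cap S$ is negligible inside $S$.

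The main step --- and the main obstacle --- is to transfer this measure-zero statement from $\calP$ to the lower-dimensional slice $S$: a priori a null subset of $\calP$ could meet $S$ in a set of full measure, since $S$ is itself null in $\calP$. What rescues the argument is the \emph{scale invariance} of $B$. Let $G=(\dsR^*)^{3R}$ act on $\calP$ by the trivial CP scaling ambiguity, $(\bfa_r,\bfb_r,\bfc_r,\lambda_r)\mapsto(\alpha_r\bfa_r,\beta_r\bfb_r,\gamma_r\bfc_r,\lambda_r/(\alpha_r\beta_r\gamma_r))$. This leaves $\Phi$ unchanged, so $B=\Phi^{-1}(U)$, where $U$ is the set of tensors whose $R$-term CPD is not unique, is a union of $G$-orbits. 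On the open dense set $\calP^\circ\subseteq\calP$ where the last entry of every column of $\bfA,\bfB,\bfC$ is nonzero, dividing each column by its last entry realises a diffeomorphism $\calP^\circ\cong S\times G$ whose Jacobian is a nowhere-vanishing monomial in the scales; hence Lebesgue measure on $\calP^\circ$ is mutually absolutely continuous with the product of Lebesgue measures on $S$ and $G$. Each orbit meets $S$ in a single point, so under this identification $B\cap\calP^\circ$ corresponds to $(B\cap S)\times G$, and Fubini's theorem forces $B\cap S$ to be null in $S$ --- otherwise $B$ would carry positive measure in $\calP$. Crucially, this also yields $B\cap S\neq S$, so the non-uniqueness parameters form a null set in $\Tht_R$ and $\modell{R}$ is essentially unique for generic $\theta\in\Tht_R$.

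Finally, the normalisation upgrades ``unique up to permutation and scaling'' to ``unique up to permutation'', which is the content of essential uniqueness in the non-ambiguous parameterization. Fix a generic $\theta\notin\iota^{-1}(B)$ and suppose $\theta'\in\Tht_R$ satisfies $\mu_R(\theta')=\mu_R(\theta)$. Generic uniqueness of the CPD means the rank-one summands $\mu_1(\theta'_r)$ agree, as tensors, with $\mu_1(\theta_{\pi(r)})$ for some permutation $\pi$. Both $\theta$ and $\theta'$ lie in $S$, so every summand is written in the parameterization \eqref{eq:rank1_parameterization_ones}, which by \Cref{def:rank1_parameterization} is non-ambiguous (equivalently, the identity is the only rescaling preserving the last-entry-equals-one normalisation). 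Hence equality of the rank-one tensors forces $\theta'_r=\theta_{\pi(r)}$ for each $r$, so $\theta'$ is a permutation of $\theta$. This is exactly identifiability of $\modell{R}$ in the sense of \Cref{def:essential}.
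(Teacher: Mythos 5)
Your proof is correct and rests on the same key idea as the paper's: the scaling action fibres the full CP parameter space over the normalized slice $S$ as a product (up to a nonvanishing Jacobian), so nullity of the non-uniqueness locus transfers between the two, and the last-entry normalization removes the scaling ambiguity so that uniqueness up to permutation and scaling becomes uniqueness up to permutation of the reduced parameters. The paper runs the argument in the contrapositive --- a positive-measure ball of bad reduced parameters would sweep out, under arbitrary nonzero scalings, a positive-measure set of bad full CP parameters --- whereas you argue forward via Fubini, but this is the same computation.
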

\begin{proof}
The proof of this technical lemma is contained in \ref{sec:tech_proofs}.
\end{proof}

\subsection{Recoverability and Jacobian of parameterization}
A key tool for studying identifiability is the relaxed notion of recoverability \cite{breiding_algebraic_2021}.
\begin{definition}[Recoverability] \label{def:recoverability}
    The model $(\mu,\Tht)$ is said \emph{recoverable} at $\tht\in\Tht$ if there exists a finite number of elements in the pre-image of $\mu^{-1}(\mu(\tht))$.
    The model $(\mu,\Tht)$ is \emph{generically recoverable} if it is recoverable for a generic $\tht\in\Tht$ (i.e., for all $\tht$ except a subset of Lebesgue measure zero).
\end{definition}
For polynomial additive models, recoverability is a necessary condition to identifiability.
We formulate this fact for the case of non-ambiguous parameterization, although it can be formulated for more general cases in \cite{breiding_algebraic_2021}.
\begin{remark}[Identifiability implies recoverability]\label{rem:identIrecov}
    Let $\modell{R}$ be an identifiable $R$-term polynomial additive model, such as $\mu_1$ is a non-ambiguous parameterization for almost all $\theta_1 \in \Tht_1$.
    Then $\modell{R}$ is generically recoverable.
\end{remark}

The notion of recoverability is particularly convenient because it can be studied using the Jacobian of the parameterization.
\begin{definition}[Jacobian matrix of a parametrization]
    For a model $(\mu,\Tht)$, the \emph{Jacobian of the parametrization} at a given $\tht = \begin{bmatrix}
        \theta_1 & \cdots & \theta_n
    \end{bmatrix}^\T \in\Tht$ is an $S\times n$ matrix defined by:
    \begin{equation*}
        \jacmt = \left( \frac{\partial y_s}{\partial \theta_i} \right)_{s=1,i=1}^{S,n},
    \end{equation*}
    where $\bfy$ is the vector such that $\bfy = \mu(\tht) = \begin{bmatrix}
        y_1(\tht) & \cdots & y_S(\tht)
    \end{bmatrix}^\T$.
\end{definition}
Then the following theorem links generic recoverability to the rank of the Jacobian.
\begin{proposition}[{Special case of \cite[Th. 4.9]{breiding_algebraic_2021}}]\label{prop:q5brei}
    The following statements are equivalent:
    \begin{enumerate}
        \item There exists a $\tht^*\in\mathbb{R}^{n}$ such that the $ \jac{\mu}(\tht^{*})$ is full column rank.
        \item The model $(\mu,\mathbb{R}^{n})$ is generically recoverable.
    \end{enumerate}
\end{proposition}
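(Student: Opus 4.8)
The plan is to route both statements through a single geometric invariant, the dimension $d = \dim X$ of the Zariski closure $X = \overline{\mu(\dsR^{n})}$ of the image. Since $\dsR^{n}$ is irreducible, $X$ is an irreducible (semi)algebraic set, and I would establish the chain $\text{(1)} \iff d = n \iff \text{(2)}$, proving the two equivalences independently. The two classical inputs are: the generic rank of the Jacobian equals $\dim X$, and the generic fiber of the dominant map $\mu\colon \dsR^{n}\to X$ has dimension $n-d$.

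For $\text{(1)} \iff d = n$, I would first note that the entries of $\jacmt$ are polynomials in $\tht$, so for each $k$ the locus $\{\tht : \rank\,\jacmt \geq k\}$ is the nonvanishing of some $k\times k$ minor, hence Zariski-open. Consequently the maximal attainable rank $r_{\max}$ is achieved on a Zariski-open dense set $U\subseteq\dsR^{n}$, and statement (1) — that column rank $n$ is attained at some $\tht^{*}$ — is equivalent to $r_{\max}=n$. I would then use the standard identity $r_{\max}=\dim X$: on the set of maximal rank, $\mu$ is a submersion onto a submanifold of the image of dimension $r_{\max}$, so $\dim X = r_{\max}$. Thus (1) holds iff $r_{\max}=n$ iff $d=n$.

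For $\text{(2)} \iff d = n$, I would argue both directions through the fiber structure. If $d<n$, then on the dense open set $U$ the rank is locally constant and equal to $d$; the (real) constant-rank theorem then makes each fiber $\mu^{-1}(\mu(\tht))$ a local manifold of dimension $n-d>0$, hence infinite, so the model is not generically recoverable. Conversely, if $d=n$ then $\dim X = n = \dim\dsR^{n}$, so $\mu$ is a dominant map between sets of equal dimension and is therefore generically finite-to-one: over a dense open subset of $X$ every fiber is finite, which is exactly generic recoverability in the sense of \Cref{def:recoverability}. Chaining with the previous paragraph closes the equivalence.

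The hard part will be making the fiber statement rigorous over $\dsR$ rather than $\dsC$, since recoverability counts real preimages. The generic-rank identity and especially the ``$d=n \Rightarrow$ generic fibers finite'' step are cleanest over an algebraically closed field, whereas here $\Tht=\dsR^{n}$; a positive-dimensional complex fiber can have few real points, so one cannot naively transfer dimensions. The safe route is to stay within the real setting and use the semialgebraic dimension together with the real fiber-dimension theorem of \cite{breiding_algebraic_2021,qi_semialgebraic_2016}, exploiting that $\jacmt$ at a real $\tht$ is a real matrix whose rank controls the local real fiber dimension via the constant-rank theorem. Keeping the two notions aligned — so that ``generically finitely many real solutions'' matches ``$0$-dimensional generic fiber'' — is the only delicate point; everything else reduces to the two standard facts above, which is why the result is a special case of \cite[Th.~4.9]{breiding_algebraic_2021}.
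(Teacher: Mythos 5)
The paper does not actually prove this proposition: it is imported verbatim as a special case of \cite[Th.\ 4.9]{breiding_algebraic_2021}, with \cite[Lemma 1]{qi_semialgebraic_2016} invoked only for the follow-up remark that full rank at one point implies full rank at a generic point. So there is no in-paper argument to compare against, and your sketch must be judged on its own. The route you take --- channelling both statements through $d=\dim\overline{\mu(\dsR^{n})}$, identifying $d$ with the generic Jacobian rank via the openness of the rank strata, and using the constant-rank theorem to show that $d<n$ forces positive-dimensional (hence infinite) fibers on a dense open set --- is the standard proof of the cited theorem, and the direction $\neg(1)\Rightarrow\neg(2)$ is complete and purely real as you present it.

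The direction $(1)\Rightarrow(2)$ is where the sketch does not close, and you correctly flag the spot without filling it. Two things are missing. First, ``dominant map between equidimensional sets is generically finite-to-one'' is a theorem over algebraically closed fields; in the semialgebraic category it must be re-derived from the fiber-dimension theorem. Second, and more importantly, even granting that the locus $B=\{\tht: \dim_{\tht}\mu^{-1}(\mu(\tht))\ge 1\}$ has dimension at most $n-1$ (it avoids the open set where $\jacmt$ is an immersion, since an immersion is locally injective), \Cref{def:recoverability} requires the \emph{entire} fiber to be finite, so the set you must show is Lebesgue-null is $\mu^{-1}(\mu(B))$, not $B$; a parameter with locally finite fiber at itself can still have an infinite fiber elsewhere. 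This can be repaired inside the real setting (split $\mu^{-1}(\mu(B))$ into its part in $B$ and its part in the locally-finite locus, and bound the latter by $\dim\mu(B)\le n-1$ using that $\mu$ restricted to it has zero-dimensional fibers), but the cleanest closure is complexification: the minors of $\jacmt$ are real polynomials, so the generic rank of the complexified Jacobian on $\dsC^{n}$ equals $n$ as well; the complex map is then generically finite, its non-finite locus is a proper subvariety of $\dsC^{n}$ defined over $\dsR$ and hence meets $\dsR^{n}$ in a measure-zero set; and for real $\tht$ outside it the real fiber is contained in the finite complex fiber. Note that your stated worry --- a positive-dimensional complex fiber with few real points --- only threatens $(2)\Rightarrow(1)$, which you have already handled purely over $\dsR$; for $(1)\Rightarrow(2)$ the containment of the real fiber in the complex one works in your favor.
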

\begin{remark} \label{rem:certificate}
    In the previous proposition, if at a single point $\tht^*$ the Jacobian is full column rank, then it is full column rank for a generic point $\tht\in\mathbb{R}^{n}$ (\emph{almost everywhere} on $\mathbb{R}^{n}$).
    This follows from \cite[Lemma 1]{qi_semialgebraic_2016}.
\end{remark}

\subsection{An algorithm for checking recoverability of polynomial additive models}
\Cref{prop:q5brei} is particularly useful to check recoverability of additive polynomial models, as we show below. 
Indeed, if the rank of Jacobian is maximal at a point $\tht\in\Tht_1^R$, i.e. 
\begin{equation}
    \rank(\jacmt) = Rn_1,
    \label{eq:fullRank}
\end{equation}
then the model $\modell{R}$ is recoverable by \Cref{prop:q5brei}.
Note that the additivity of the model ensures that the Jacobian matrix has the following block structure:
\begin{equation}
    \jacmt = \begin{bmatrix}
        \jac{\mu_1}(\tht_1) & \cdots & \jac{\mu_1}(\tht_R)
    \end{bmatrix} \label{eq:blockRank}
\end{equation}
where $\jac{\mu_1}(\tht_r)$ represents the Jacobian of the rank-1 parametrization $\mu_1$ applied to the $r$-th block of parameters $\tht_r$.
If \Cref{eq:fullRank} holds, then any subsets of block columns of the Jacobian is also full column rank.
This means that the same $R'$-term models $R'< R$ are also generically recoverable.
Let us denote $\Rmax$ the integer such that \eqref{eq:fullRank} holds for $R\leq \Rmax$ and does not hold for $R > \Rmax$.
We call such $\Rmax$ the recoverability bound because the model $(\mu,\mathbb{R}^{n})$ is necessary not recoverable if the Jacobian is not full column rank.

The above discussion leads us to the following algorithm for checking the identifiability numerically (which is a generalization of the approach used in \cite{comon_generic_2009} for tensors).
\begin{algorithm2e}[hbt!]
        \renewcommand{\algorithmcfname}{Algorithm}
        \SetAlgoLined
        \KwIn{$M$, $I$, $\calT$}
        \KwSty{Initialization}{: $R=1$, $\tht = \tht_1 \in \dsR^{n_1}$ random.}
    
        \While{$\rank(\jacmt) = Rn_1$}{
    
        $\tht_{R+1} \in \dsR^{n_1}$ random. \\
    
            $\tht = (\tht_1,\ldots,\tht_{R},\tht_{R+1})$ \\
            
            $R \leftarrow R+1$
          } 
        \KwOut{$\widehat{\Rmax} = R-1$}
        \caption{Lower bound on maximum recoverable rank $\Rmax$ for an additive polynomial model $\modell{R}$.}
        \label{alg:rmax}
\end{algorithm2e}

The principle of \Cref{alg:rmax} is to increase the rank of a randomly generated decomposition while \eqref{eq:fullRank} holds for the model defined in \Cref{def:additive}.

\begin{remark}
    In \Cref{alg:rmax}, a parameter $\tht$ that satisfies \eqref{eq:fullRank} gives a certificate of generic recoverability for $R \le \widehat{\Rmax}$.
    This follows from Remark \ref{rem:certificate}.
    Such a point $\tht$ presents a computer proof of recoverability, as long as the absence of numerical errors can be guaranteed (this can be done, for example, by choosing parameters with rational entries, as in \cite{chiantini_algorithm_2014}).
\end{remark}
Note that it may happen that a chosen random parameter in $\tht$ the Jacobian drops rank, this is why $\widehat{\Rmax}$ returned by \Cref{alg:rmax} gives a lower bound.
However, we can try several realizations of $\tht$ (or $\tht_{R+1}$), to overcome this problem.

\subsection{Recoverability and identifiability for submodels and reparameterization}
We close this section by several remarks on submodels and equivalent models.

\begin{lemma}\label{lem:identifiable-submodel}
    Let $\modell{R}$ an identifiable (respectively generically recoverable) additive model and consider a subset of parameters $\widetilde{\Tht}_1 \subseteq \Tht_1$ of same dimension (i.e., ${\widetilde{\Tht}_1}$ is not Lebesgue measure zero in ${\Tht}_1$), then $(\mu_r,\widetilde{\Tht}_1^R)$ is identifiable (respectively recoverable).
\end{lemma}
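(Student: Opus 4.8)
The plan is to reduce the claim to measure-theoretic bookkeeping, exploiting that both identifiability and generic recoverability are \emph{generic} properties (they hold off a set of measure zero), together with the fact that the underlying \emph{pointwise} notions — essential uniqueness at $\tht$ and recoverability at $\tht$ — can only improve when the domain is shrunk.

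First I would unwind \Cref{def:essential} and \Cref{def:recoverability}: the hypothesis that $\modell{R}$ is identifiable (resp. generically recoverable) means there is a set $Z\subseteq\Tht_1^R$ of Lebesgue measure zero such that the model is essentially unique (resp. recoverable) at every $\tht\in\Tht_1^R\setminus Z$. The structural observation is that passing from the domain $\Tht_1^R$ to the subdomain $\widetilde{\Tht}_1^R\subseteq\Tht_1^R$ only shrinks the relevant pre-image: $\mu_R^{-1}(\mu_R(\tht))\cap\widetilde{\Tht}_1^R$ is a subset of $\mu_R^{-1}(\mu_R(\tht))\cap\Tht_1^R$. Hence a finite pre-image stays finite, so recoverability at $\tht$ is inherited, and a pre-image consisting only of permutations of $\tht$ (up to the harmless one-term reparameterizations of \Cref{def:essential}) stays so, so essential uniqueness at $\tht$ is inherited. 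Consequently the restricted model $(\mu_R,\widetilde{\Tht}_1^R)$ enjoys the same pointwise property at every $\tht\in\widetilde{\Tht}_1^R\setminus Z$.

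It then remains to verify that $Z\cap\widetilde{\Tht}_1^R$ is negligible \emph{relative to} $\widetilde{\Tht}_1^R$, and this is exactly where the same-dimension hypothesis enters. Writing $d=\dim\Tht_1$ and measuring genericity by the intrinsic $d$-dimensional Lebesgue measure of the polyhedron $\Tht_1$ (which may be lower-dimensional than its ambient space, e.g. a simplex), the assumption that $\widetilde{\Tht}_1$ is not of measure zero in $\Tht_1$ says precisely that $\widetilde{\Tht}_1$ has positive $d$-measure. By Tonelli's theorem applied to the product, $\widetilde{\Tht}_1^R$ then has positive $Rd$-measure inside the $Rd$-dimensional set $\Tht_1^R$; and since $Z$ has $Rd$-measure zero in $\Tht_1^R$, so does $Z\cap\widetilde{\Tht}_1^R$. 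Because $\widetilde{\Tht}_1^R$ is of full dimension $Rd$, this coincides with being negligible relative to $\widetilde{\Tht}_1^R$. Combining this with the inheritance argument shows that $(\mu_R,\widetilde{\Tht}_1^R)$ is essentially unique (resp. recoverable) off a relatively measure-zero set, which is identifiability (resp. generic recoverability) of the submodel.

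The only delicate step, and the one I expect to require the most care, is the measure-theoretic translation of the previous paragraph: one must insist on the same-dimension hypothesis, since if $\widetilde{\Tht}_1$ were lower-dimensional (hence itself of measure zero in $\Tht_1$), the bad set $Z$ could a priori contain all of $\widetilde{\Tht}_1^R$ and genericity would be destroyed. Making precise that ``negligible in $\Tht_1^R$'' and ``negligible relative to the full-dimensional subset $\widetilde{\Tht}_1^R$'' coincide is the crux; everything else follows immediately from the monotonicity of the pre-image under domain restriction.
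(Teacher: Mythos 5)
Your argument is correct and follows essentially the same route as the paper's proof: identify the measure-zero set of bad parameters in $\Tht_1^R$, observe that its intersection with $\widetilde{\Tht}_1^R$ remains negligible relative to $\widetilde{\Tht}_1^R$ because the latter has full dimension, and conclude. You are in fact somewhat more careful than the paper, which leaves implicit the monotonicity of the pre-image under domain restriction (and the recoverability case altogether), but the underlying idea is identical.
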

\begin{proof}
    We provide a proof for the case of identifiability.
    Let $\calY$ be the subset of $\Tht$ that contains the non-identifiable parameters.
    Because $\model$ is identifiable, $\calY$ is of Lebesgue measure zero in $\Tht$.
    Then, $\calY\cap\widetilde{\Tht}$ is also of Lebesgue measure zero in $\widetilde{\Tht}$ which implies identifiability of $(\mu,\widetilde{\Tht})$.
\end{proof}

\begin{remark}
    \Cref{lem:identifiable-submodel}, in fact, explains why identifiability of real CPD implies identifiability of nonnegative CPD.
    Indeed, this follows from the fact that the positive orthant $\widetilde{\Tht}_1 = \dsR^{n}_{+}$ is a polyhedral subset of the same dimension of $\Tht_1 = \dsR^{n_1}$.
    This argument will be particularly useful for studying coupled tensor models.
\end{remark}

Finally, we note that a change of parameterization does not change the identifiability/recoverability of the model.

\begin{remark}\label{rem:linear_reparameterization}
Let $(\mu,{\Tht}_1)$ be an additive polynomial model and $(\widetilde{\mu},\widetilde{\Tht}_1)$ be a linear reparameterization.
(i.e., let $\phi: \widetilde{\Tht}_1 \to \Tht$ be an affine bijective map, where $\widetilde{\Tht}_1 \subset \dsR^{\widetilde{n}_1}$ and $\widetilde{\mu} = \mu_1\circ \phi$.

Then the model defined by $(\widetilde{\mu}, \widetilde{\Tht}_1)$ is identifiable (resp. recoverable) if and only if $({\mu}, {\Tht}_1)$ identifiable (resp. recoverable).
\end{remark}

\section{Coupled tensor models as polynomial additive mappings} \label{sec:recoAlgo}

\subsection{PCTF3D as a polynomial map: unconstrained case}
In this subsection, we are going to represent the PCTF3D model as a polynomial map, mapping the weights $\lbd$ and the factors $\Am{1},\ldots,\Am{M}$ to the marginals $\Hjkl$.
We first consider the case without constraints (allowing $\Am{m} \in \dsR^{I \times R}$ and $\lbd \in \dsR^{I}$).
Note that, by definition of the CPD, \eqref{eq:cpd_coupled} can be equivalently rewritten as 
\begin{equation}
 \Hjkl = \sum\limits_{r=1}^R \lambda_{r} \amr{j} \circ \amr{k} \circ \amr{\ell}, \quad \text{for all } \jkl\in\calT \label{eq:cpd_coupled_additive}
\end{equation}
In case $R=1$ (rank-one model), the mapping from factors to coupled rank-one marginals \eqref{eq:cpd_coupled_additive} is the following polynomial map:
\begin{equation}
    \label{eq:model_rank1}
    \begin{array}{cc|ccc}
        \mu_1 & : & \dsR^{IM+1} & \to & \dsR^{TI^3} \\
            & & \tht_1 = \left(\lambda_1, \am{1}{1}, \ldots, \am{M}{1}\right) & \mapsto &\lambda_{1} \vectorize \Big( \am{j_1}{1} \circ \am{k_1}{1} \circ \am{\ell_1}{1},\ldots,\am{j_M}{1} \circ \am{k_M}{1} \circ \am{\ell_M}{1} \Big) \\
        \end{array},
\end{equation}
where each size-$(I^3)$ subvector of $\mu(\tht)$ corresponds to a 3D marginal $\Htrip{j}{k}{\ell}$, $\jkl \in \calT$.

Then the rank-$R$ case can (without constraints) can be considered as an additive model, which follows from linearity of \eqref{eq:cpd_coupled_additive}.
Consider splitting of the parameter vector as follows:
\begin{equation} \label{eq:model_parameters_blocks}
\tht = (\tht_1,\ldots,\tht_R)\in \dsR^{R(IM+1)}, \text{ with } 
\tht_r = \left(\lambda_r, \am{1}{r}, \ldots, \am{M}{r}\right) \in \dsR^{IM+1}.
\end{equation}
Then the following lemma holds true
\begin{lemma}
The tensors $\{\Htrip{j}{k}{\ell} \}_{(j,k,\ell) \in \calT}$ are obtained as \eqref{eq:cpd_coupled_additive} (or \eqref{eq:cpd_coupled}) from the factors $\Am{1},\ldots,\Am{M}$ and $\lbd$
if and only if
\begin{equation}\label{eq:pctf3d_model_unconstrained}
\Big( \vectorize {\Htrip{j_1}{k_1}{\ell_1},\ldots,\vectorize \Htrip{j_T}{k_T}{\ell_T} } \Big) = \mu(\tht) := \mu_1(\tht_1) + \cdots + \mu_1(\tht_R),
\end{equation}
with $\mu_1$ as in \eqref{eq:model_rank1} and $\tht_r$ is as in \eqref{eq:model_parameters_blocks}.
\end{lemma}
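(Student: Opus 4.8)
The plan is to prove the equivalence by a direct block-wise identification, since both sides merely repackage the coupled CP decompositions \eqref{eq:cpd_coupled_additive} through the linear operations of vectorization and stacking. The essential observation is that the target space $\dsR^{TI^3}$ splits into $T$ consecutive blocks of size $I^3$, each labelled by a triplet $\{j_t,k_t,\ell_t\}\in\calT$, and that $\mu$ acts block-wise in a way that mirrors the triplet-indexed system \eqref{eq:cpd_coupled_additive}.

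First I would evaluate $\mu_1$ on a single parameter block $\tht_r = (\lambda_r,\am{1}{r},\ldots,\am{M}{r})$ and read off from \eqref{eq:model_rank1} that its $t$-th size-$I^3$ subvector is exactly $\lambda_r\vectorize(\am{j_t}{r}\circ\am{k_t}{r}\circ\am{\ell_t}{r})$. Then I would invoke the additive structure $\mu(\tht)=\sum_{r=1}^R\mu_1(\tht_r)$ from \Cref{def:additive}: because both vectorization and the stacking of blocks are linear, the summation over $r$ commutes past them, so the $t$-th block of $\mu(\tht)$ equals $\vectorize\big(\sum_{r=1}^R\lambda_r\,\am{j_t}{r}\circ\am{k_t}{r}\circ\am{\ell_t}{r}\big)$.

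Finally, since $\vectorize$ is a bijection on $\dsR^{I^3}$, equating this block with the $t$-th block of the left-hand side, namely $\vectorize\Htrip{j_t}{k_t}{\ell_t}$, is equivalent to the $t$-th instance of \eqref{eq:cpd_coupled_additive}. As the stacked identity \eqref{eq:pctf3d_model_unconstrained} holds if and only if all $T$ of its blocks agree, it holds if and only if \eqref{eq:cpd_coupled_additive} is satisfied for every triplet in $\calT$, which settles both directions of the equivalence simultaneously.

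The statement is essentially definitional, so there is no genuine difficulty; the only point requiring care is the indexing. One must keep the rank index $r$ (over which the additive sum runs) strictly separate from the triplet index $t$ (which labels the $I^3$-blocks and the coupled equations), so that linearity is applied to the summation over $r$ alone while the block decomposition is read off in $t$.
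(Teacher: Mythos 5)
Your proof is correct and matches the paper's (implicit) reasoning: the paper states this lemma without a written proof, noting only that it ``follows from linearity'' of \eqref{eq:cpd_coupled_additive}, and your block-wise identification --- reading off the $t$-th size-$I^3$ subvector of $\mu_1(\tht_r)$, commuting the sum over $r$ past the linear operations of vectorization and stacking, and using bijectivity of $\vectorize$ --- is exactly that argument spelled out. Your care in separating the rank index $r$ from the triplet index $t$ is the right point to emphasize.
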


\subsection{Taking into account and relaxing the sum to one constraint}
The map $\mu$ defined in \eqref{eq:pctf3d_model_unconstrained} is an additive model in the sense of \Cref{def:additive} if we don't take into account constraints on factors.
Now, if we add constraints (\ref{eq:lbdNN}-\ref{eq:AmS21}) in the model \eqref{eq:pctf3d_model_unconstrained}, then they are represented by the set of parameters
\begin{equation}\label{eq:Tht_pctf3d}
    \Tht = \left\{\left(\lambda_1, \am{1}{1}, \ldots, \am{M}{1}, \ldots, \lambda_R, \am{1}{R}, \ldots, \am{M}{R} \right) \in \dsR^{R(IM+1)}: \lbd\geq0, \vUn^\T_R\lbd = 1, \am{m}{r}\geq0, \vUn^\T_I\am{m}{r} = 1
    \right\}.
\end{equation}
However, the set $\Tht$ in \eqref{eq:Tht_pctf3d} cannot be written as $\Tht_1^{R}$, so the constrained (original PCTF3D) model is not strictly an additive model in the sense of \Cref{def:additive}, due to the constraint $\lambda_1 + \cdots + \lambda_R = 1$.
However, the constraint set remains invariant under permutations of blocks of parameters $\tht_r$, and identifiability of the PCTF3D model can be introduced as follows.
For this we formally introduce identifiability of the PCTF3D model as follows.
\begin{definition}\label{def:PCTF3D_uniqueness}
    For fixed $I,M$, $\calT$, and $R$, let $(\mu,\Tht)$ be the PCTF3D model defined by \eqref{eq:pctf3d_model_unconstrained} with $\Tht$ as in \eqref{eq:Tht_pctf3d}.
    Such model $(\mu,\Tht)$ is called identifiable if for a general $\tht \in \Tht$, $\mu^{-1}(\tht)$ consists only of block permutations of $\tht_k$ in \eqref{eq:model_parameters_blocks}.
\end{definition}
In the rest of the subsection, we will show that the constraint set can be enlarged (relaxed) to obtain an additive model in the sense of \Cref{def:additive}, and we can study identifiability/recoverability for the relaxed constraint set instead PCTF3D. 

Consider the enlarged set of constraints
\begin{equation} \label{eq:Tht_pctf3d_ext}
    \Tht' = \left\{\left(\lambda_1, \am{1}{1}, \ldots, \am{M}{1}, \ldots, \lambda_R, \am{1}{R}, \ldots, \am{M}{R} \right) \in \dsR^{R(IM+1)}: \lambda_r\geq0, \am{m}{r}\geq0, \vUn^\T_I\am{m}{r} = 1 \right\}, \\
\end{equation}
so that $\Tht \subset \Tht'$.
It is easy to see that $\Tht' = \Tht_1^{R}$ with 
 \begin{equation}\label{eq:Tht_pctf3d_ext_rank1}
    \Tht_1 = \left\{\left(\lambda_1, \am{1}{1}, \ldots, \am{M}{1} \right) \in \dsR^{IM+1}: \lambda_1\geq0, \am{m}{1}\geq0, \vUn^\T_I\am{m}{1} = 1 \right\} = \dsR_+ \times \Delta_{I}^{M}.
\end{equation}

We conclude the subsection by the desired relaxation proposition, which will allow us to work with the constraint set \eqref{eq:Tht_pctf3d_ext} from now on.

\begin{proposition}\label{prop:relaxing_sum_to_one}
    The partially coupled model (i.e., $(\mu, \Tht)$ with $\Tht$ as in \eqref{eq:Tht_pctf3d}) is generically unique in the sense of \Cref{def:PCTF3D_uniqueness} (resp. generically recoverable in the sense of \Cref{def:recoverability}) if and only if the model $(\mu, \Tht')$ (i.e., the polynomial additive model $(\mu,\Tht') = \modell{R}$ with $\mu_1$ as in \eqref{eq:model_rank1} and $\Tht_1$ as in \eqref{eq:Tht_pctf3d_ext_rank1}) is identifiable (resp. generically recoverable).
\end{proposition}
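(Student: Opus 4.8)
The plan is to establish the equivalence between identifiability/recoverability of the constrained PCTF3D model $(\mu,\Tht)$ and the relaxed additive model $(\mu,\Tht')$ by exploiting the fact that $\Tht$ is obtained from $\Tht'$ by imposing a single additional affine constraint, $\lambda_1+\cdots+\lambda_R = 1$, which cuts out a subset of lower dimension but whose geometry is compatible with the scaling structure of the map $\mu$. The core idea is that the sum-to-one constraint on the $\lambda_r$ is a \emph{normalization} that can be absorbed: any point of $\Tht'$ can be rescaled to a point of $\Tht$ without changing the (projective) decomposition problem, because the weights $\lambda_r$ enter $\mu_1$ linearly in \eqref{eq:model_rank1} while the factors $\am{m}{r}$ are already normalized by $\vUn^\T_I\am{m}{r}=1$.

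\textbf{Step 1: Record the dimension and permutation structure.} First I would observe that $\Tht$ is a polyhedral subset of $\Tht'$ of codimension one (a hyperplane section $\{\vUn^\T_R\lbd=1\}$ intersected with the orthant), and that both sets are invariant under block permutations of the $\tht_r$, so the notion of identifiability in \Cref{def:PCTF3D_uniqueness} matches essential uniqueness in \Cref{def:essential}. This is needed so that the two uniqueness notions being compared are literally the same up to the ambient constraint set.

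\textbf{Step 2: Reduce the relaxed model to the constrained one via scaling.} The key step is to exhibit a correspondence between decompositions in $\Tht'$ and in $\Tht$. Given any $\tht=(\tht_1,\ldots,\tht_R)\in\Tht'$ with $\bfy=\mu(\tht)$ and $s:=\sum_r\lambda_r>0$, the rescaled parameter obtained by replacing each $\lambda_r$ with $\lambda_r/s$ lies in $\Tht$ and corresponds to the marginals $\bfy/s$; conversely normalization is undone by multiplying back by $s$. Because $\mu_1$ is homogeneous of degree one in $\lambda_1$, this scaling is a bijection between the fibers of $\mu$ over $\bfy$ and over $\bfy/s$ that commutes with block permutations. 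Hence a nonpermutation alternative decomposition exists in one model if and only if one exists in the other, giving equivalence of essential uniqueness; the same fiber correspondence gives equivalence of finiteness of fibers, i.e. recoverability. I would also invoke \Cref{lem:identifiable-submodel} to pass genericity between $\Tht$ and $\Tht'$, noting that $\Tht$ is not of full dimension in $\Tht'$, so I must argue along the scaling rays rather than by a measure-zero inclusion — this is where care is required.

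\textbf{The hard part} will be handling the genericity/measure-theoretic bookkeeping across the codimension drop: $\Tht$ has Lebesgue measure zero inside $\Tht'$, so \Cref{lem:identifiable-submodel} does not apply directly in that direction, and I cannot simply restrict a generic statement on $\Tht'$ to $\Tht$. The clean resolution is to view the scaling action of $\dsR_+$ (rescaling all $\lambda_r$ simultaneously) as foliating $\Tht'$ into rays, each meeting the slice $\Tht$ in exactly one point; a property holding for a generic point of $\Tht$ then holds on a full-measure union of rays, hence generically on $\Tht'$, and vice versa. Making this ray-wise transfer rigorous — and checking that the set where $s=0$ (which is measure zero and where the correspondence degenerates) can be discarded — is the only genuinely delicate point; everything else is the routine verification that the scaling is a bijection respecting $\mu$ and permutations.
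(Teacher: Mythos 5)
Your proposal is correct and follows essentially the same route as the paper, which slices $\Tht'$ into the level sets $\{\sum_{r}\lambda_r=\alpha\}$ and uses the normalization $\lambda_r\mapsto\lambda_r/\alpha$ as a fiber-preserving bijection onto $\Tht$, discarding the measure-zero set where $\sum_{r}\lambda_r=0$ — your foliation by scaling rays is the same decomposition viewed transversally. The one point you leave implicit that the paper states explicitly is that $\sum_{r}\lambda_r$ equals the total sum of the entries of any marginal (a consequence of the sum-to-one constraints on the factors), so the entire fiber $\mu^{-1}(\mu(\tht))$ lies in a single level set; this is what upgrades your ``bijection between fibers over $\bfy$ and $\bfy/s$'' to an actual equivalence of essential uniqueness between the two constraint sets.
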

\begin{proof}
    The proof is contained in \ref{sec:tech_proofs}.
\end{proof}

\subsection{Reparameterization of the simplex constraint on factors} \label{sec:reparametrization}
The final building block of our proof is to reparameterize the simplex constraint by removing variables, which essentially follows the idea of \Cref{def:rank1_parameterization}.
For that, consider the parametrization $\calP$ which maps a vector $\ua\in\mathbb{R}^{I-1}$ onto a sum-to-one vector $\bfa = \calP(\ua)$:
\begin{equation}
    \label{eq:projTrunc}
    \calP(\ua) = \begin{bmatrix}
        a_1 &
        \cdots &
        a_{I-1} &
        1-\sum\limits_{i=1}^{I-1} a_i
    \end{bmatrix}^\T.
\end{equation}
With truncated factors $\uamr{m} = \begin{bmatrix}
a^{(m)}_{r,1} \cdots a^{(m)}_{r,I-1} \end{bmatrix}^{\T}$, it is possible to define a vector of parameters $\tht$ such that:
\begin{equation}
    \label{eq:deftht}
    \tht = (\tht_1,\ldots,\tht_R) = \vectorize\left(\lambda_1, \uam{1}{1}, \ldots, \uam{M}{1}, \ldots, \lambda_R, \uam{1}{R}, \ldots, \uam{M}{R} \right),
\end{equation}
and a set of parameters $\Tht_1$ defined by
\begin{equation}
    \label{eq:defTht}
    \Tht_1 = \left\{ \tht\in\dsR^{1+M(I-1)} \;|\; \lbd\geq0, \; \amr{m}\geq0, \; \vUn^\T_I\amr{m}\leq1 \right\} = \dsR_+ \times \blacktriangle_{I-1}^{M},
\end{equation}
where $\blacktriangle_{I-1}$ denotes the simplex interior. 
The advantage of $\Tht_1$ defined in \eqref{eq:defTht} is that it is a subset of positive Lebesgue measure of $\dsR^{1+M(I-1)}$.

Then it is possible to define the function $\mu$ that maps the parameters onto the set of 3D marginals of $\calT$:
\begin{equation}
    \mu :\left| \begin{array}{ccl}
        \mathbb{R}^{(1+M(I-1))R} & \longrightarrow & \mathbb{R}^{TI^3} \\
        \tht & \longmapsto & \bfy = \vectorize\left( \bcalH^{(\tau_1)}, \ldots, \bcalH^{(\tau_T)} \right) \\
    \end{array} \right., \label{eq:defMu} 
\end{equation}
where 
\begin{equation}
    \bcalH^{(\tau_t)} = \cpdsetp{\lbd;\calP(\uAm{j_\tau}),\calP(\uAm{k_\tau}),\calP(\uAm{\ell_\tau})},
    \label{eq:mappinghtau}
\end{equation}
and the reduced factor matrices ${\uAm{m}}\in\mathbb{R}^{(I-1)\times R}$ are defined as $\uAm{m} = \begin{bmatrix}
    \uam{m}{1} & \cdots & \uam{m}{R} 
\end{bmatrix}$.

\begin{remark}\label{rem:reparameterization_PCTF3D}
The model $(\mu,\Tht)$ defined by \eqref{eq:defTht}-\eqref{eq:mappinghtau} is a polynomial additive model as introduced in \Cref{sec:PolAddModel}.
In addition, by \Cref{rem:linear_reparameterization}, this model is equivalent to the polynomial additive model $\modell{R}$ with $\mu_1$ as in \eqref{eq:model_rank1} and $\Tht_1$ as in \eqref{eq:Tht_pctf3d_ext_rank1}).
\end{remark}

\section{Recoverability: key results}\label{sec:bounds}
\subsection{Reminder: hypergraphs, triplets and pairs}
Let us recall the basic considerations on the couplings $\calT$.
Recall the following notation.
\begin{itemize}
\item $T =\Card(\calT)$ the number of triplets;
\item $d_m$ denotes the number of triplets in which the variable (vertex) is contained; i.e., $d_m = \Card\left\{ \triples{m}k\ell \in\calT\right\}$; in particular, we have
\begin{itemize}
\item $d_1 + \cdots + d_M = 3T$, as each triplet is counted three times; 
\item $d_m \ge 1$ for couplings considered in this work (from \cite{pctf3d_part1}, a valid coupling must contain all $M$ variables);
the vector $\bfd = (d_1,\ldots,d_M)$ is called the sequence degree vector.
\end{itemize} 
\item We will also need the number of pairs appearing in triplets:
\[
P = \Card\left\{ \{j,k\}\in\cpdsetp{1,M} \;|\; j\neq k \; \text{and} \; \exists \tau\in\calT, \{j,k\}\subset\tau \right\},
\]
i.e., the number of edges in the usual graph (2-graph) induced by the hypergraph.
\end{itemize} 
These quantities will influence identifiability bounds as we will show below.
We first remark that $d_m$, $P$ and $T$ are linked and not all combinations are possible.
To give some figures, distribution of both quantities are given in \Cref{fig:repart} for different values of $T$ in the case of $M=7$ variables.
\begin{figure} 
    \centering 
    \begin{subfigure}[b]{0.45\linewidth}
    \centerline{\includegraphics[height=8cm, clip, trim={50 20 70 60}]{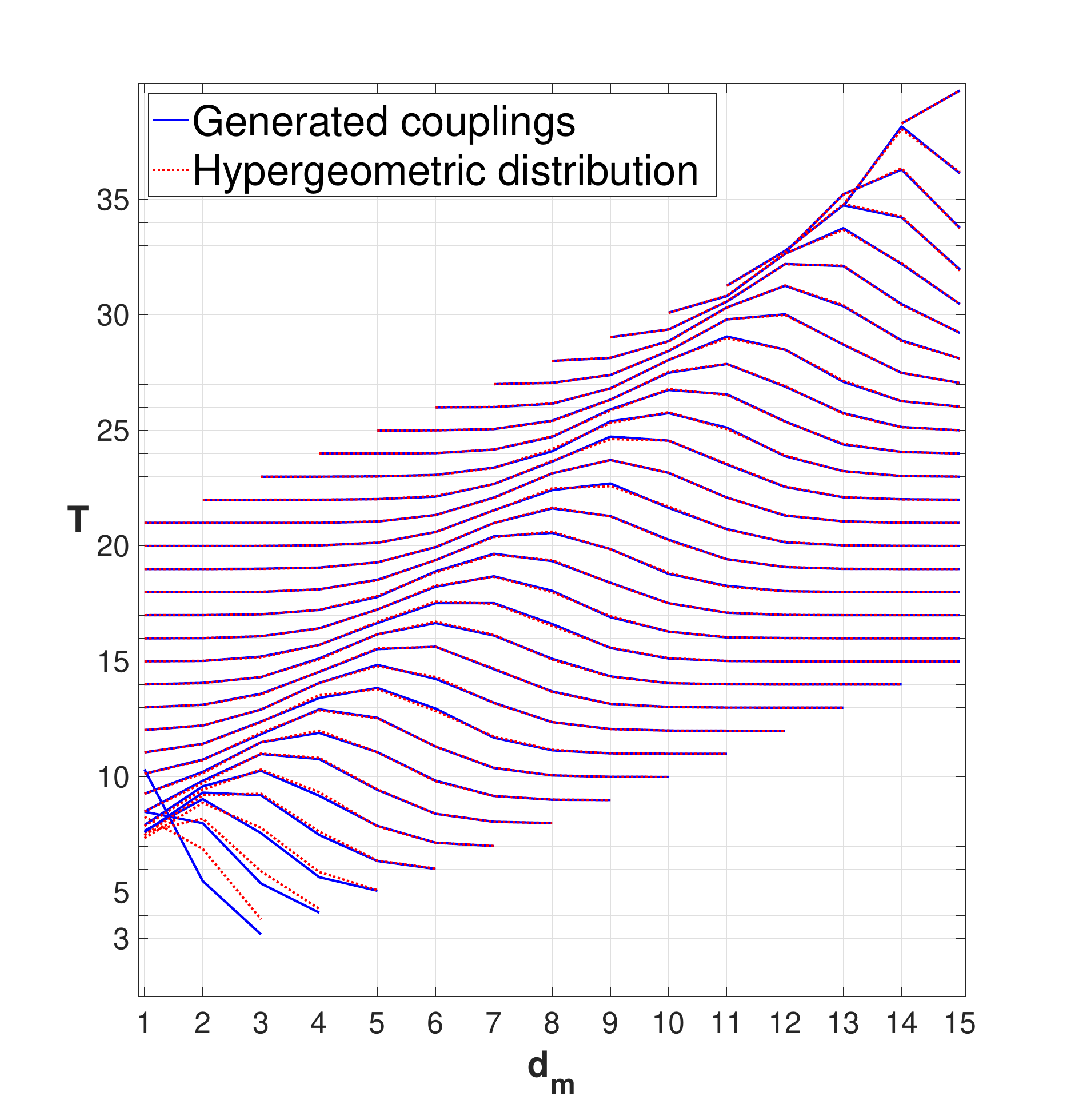}}
    \caption{ \centering Both empirical and theoretical distribution of $d_m$.}
    \label{fig:repart:deg}
    \end{subfigure}
    \hfill
    \begin{subfigure}[b]{0.45\linewidth} 
    \centerline{\includegraphics[height = 8cm, clip, trim={50 20 70 60}]{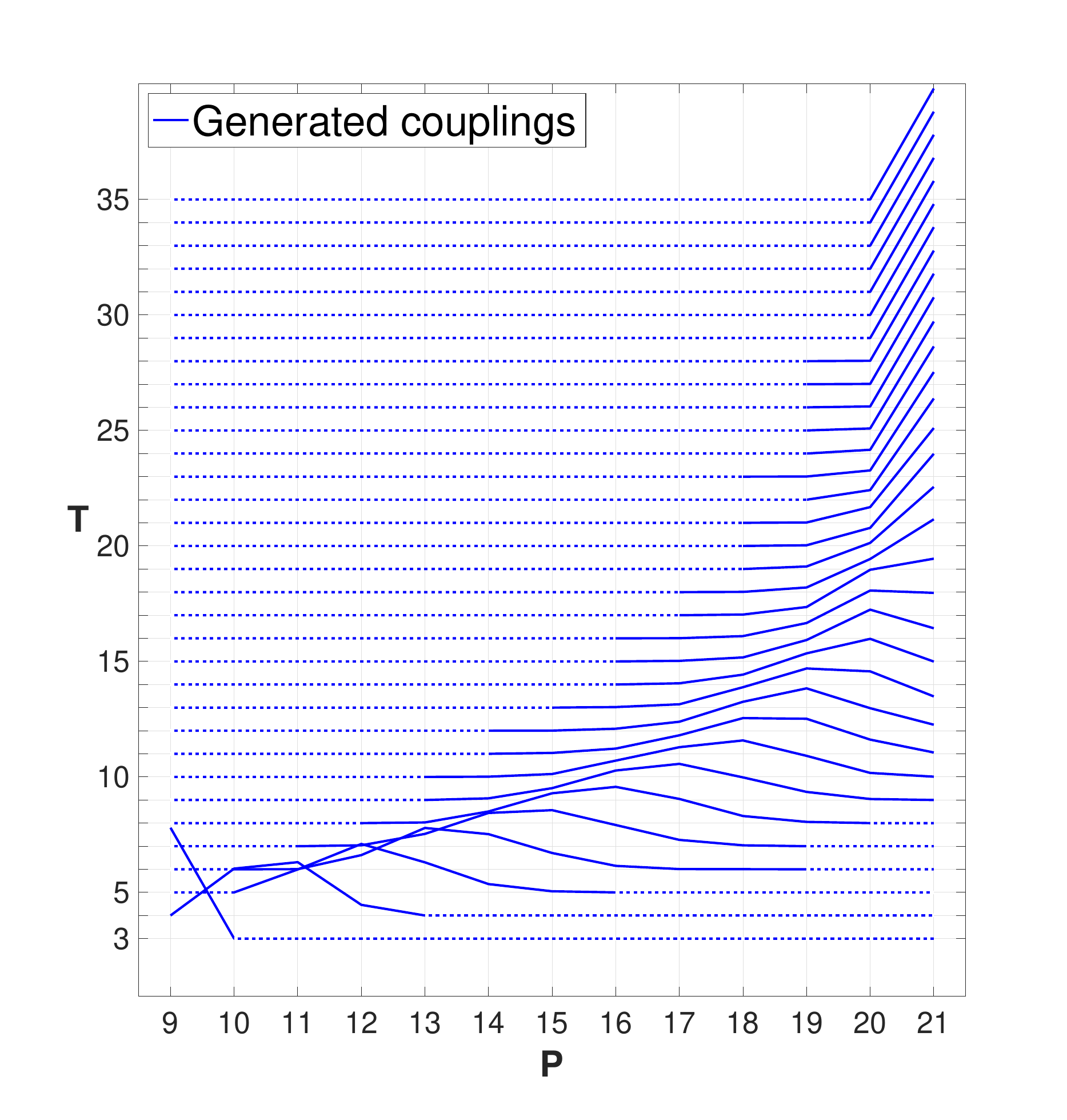}}
    \caption{\centering Empirical distribution of $P$.}
    \label{fig:repart:pair}
    \end{subfigure}
    \label{fig:repart}
    \caption{Distributions of both $d_m$ and $P$ for random couplings and different values of $T$ ($M=7$).}
\end{figure}

Without validity constraints on $\calT$ (ensured by the connectedness of $\calT$), $d_m$ follows a hypergeometric distribution, as seen in \Cref{fig:repart:deg}.
Concerning $P$, \Cref{fig:repart:pair} shows that $P$ increases with $T$, but achieves its maximum possible number $\binom{M}2$ starting around $T = 25$.

\subsection{Recoverability necessary condition and number of degrees of freedom} \label{sec:degreeFreedom}
Our main result shows that the maximum attainable recoverable rank for the PCTF3D model is governed by the following constant.
\begin{proposition}
        \label{prop:necessaryCondT}
        For a coupling strategy $\calT$, if a model $(\mu,\mathbb{R}^{n})$ with $\mu$ given in \Cref{eq:defMu} is recoverable, then
        \begin{equation}
            R \leq \left\lfloor \frac{ 1+M(I-1)+P(I-1)^2+T(I-1)^3}{1+M(I-1)}\right\rfloor.
            \label{eq:neceCond}
        \end{equation}
\end{proposition}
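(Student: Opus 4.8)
The plan is to combine the Jacobian criterion for recoverability (\Cref{prop:q5brei} together with the discussion of \eqref{eq:fullRank}) with a dimension count on the affine subspace of $\dsR^{TI^3}$ in which the image of the map $\mu$ from \eqref{eq:defMu} is confined. By \Cref{prop:q5brei}, recoverability of $(\mu,\dsR^{n})$ forces $\jacmt$ to be full column rank at a generic $\tht$, i.e. $\rank(\jacmt) = R n_1 = R\bigl(1+M(I-1)\bigr)$, where $n_1 = 1+M(I-1)$ is the number of parameters per rank-one block in \eqref{eq:defTht}. Hence it suffices to produce a linear subspace $V\subseteq\dsR^{TI^3}$ of dimension at most $1+M(I-1)+P(I-1)^2+T(I-1)^3$ that contains every column of $\jacmt$: the bound \eqref{eq:neceCond} then follows by dividing by $n_1$ and taking the floor, since $R$ is an integer.

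To build $V$, I would use the functional ANOVA (inclusion–exclusion) decomposition of each 3D marginal. Any order-3 tensor on modes $\{j,k,\ell\}$ splits uniquely into a constant term, three centered 1D components (one per mode), three centered 2D components (one per pair) and one centered 3D component, of respective dimensions $1$, $I-1$, $(I-1)^2$ and $(I-1)^3$ (and indeed $1+3(I-1)+3(I-1)^2+(I-1)^3 = I^3$). Applying this to each $\bcalH^{(\tau_t)}$ in \eqref{eq:mappinghtau}, the key point is that the simplex reparameterization $\calP$ of \eqref{eq:projTrunc} makes the lower-order pieces \emph{shared} across marginals: because every factor $\calP(\uam{m}{r})$ sums to one, summing $\bcalH^{(\tau)}$ over the mode $\ell$ leaves $\sum_r \lambda_r\, \calP(\uam{j}{r})\out\calP(\uam{k}{r})$, which depends only on the pair $\{j,k\}$ and not on $\ell$; a further summation over $k$ leaves $\sum_r \lambda_r\, \calP(\uam{j}{r})$, depending only on the single mode $j$; and the total mass of every marginal equals the common value $\sum_r\lambda_r$.

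Consequently the constant component is a single shared degree of freedom, the centered 1D component attached to mode $m$ is common to all triplets of $\calT$ containing $m$, and the centered 2D component attached to a pair $\{p,q\}$ is common to all triplets containing that pair. Counting independent components across the whole coupled family then yields: one constant ($1$ dof), one 1D block per variable ($M$ blocks of $(I-1)$ dof each), one 2D block per pair actually occurring in $\calT$ ($P$ blocks of $(I-1)^2$ dof each), and one 3D block per triplet ($T$ blocks of $(I-1)^3$ dof each). The image of $\mu$ thus lies in an affine subspace whose linear direction $V$ has dimension at most $1+M(I-1)+P(I-1)^2+T(I-1)^3$. Since every column of $\jacmt$ is a directional derivative of $\mu$, hence tangent to this image, we get $\rank(\jacmt)\le\dim V$, which is exactly the inequality needed.

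The main obstacle is the bookkeeping in the second and third steps: one must verify carefully that the sharing of the 1D and 2D components is governed precisely by mode-incidence and pair-incidence in $\calT$, so that the counts $M$, $P$ and $T$ appear and nothing is double-counted. For a \emph{necessary} condition only the upper bound $\dim V\le D$ is required, so linear independence of these blocks need not be established; it is enough to exhibit the image as lying in the span of the stated $D$ coordinate directions, which reduces the whole argument to the ANOVA splitting together with the sum-to-one identities above.
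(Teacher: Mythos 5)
Your proposal is correct and follows essentially the same route as the paper: the paper's Lemma~\ref{prop:nobs} performs exactly this degrees-of-freedom count (constant, $M$ one-dimensional, $P$ two-dimensional and $T$ three-dimensional components obtained by contracting marginals with the all-ones vector), and the proposition then follows from \Cref{prop:q5brei} by comparing the $R(1+M(I-1))$ Jacobian columns with that dimension. Your ANOVA phrasing and the observation that only the upper bound on $\dim V$ is needed are just a slightly more explicit packaging of the same argument.
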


The proof of proposition follows from the results in \Cref{sec:PolAddModel}.
In particular, by \Cref{prop:q5brei}, if a CP model is recoverable, then the Jacobian matrix $\jacmt$ must be full column rank.
Thus, a necessary condition for recoverability is that the number of columns of $\jac{\mu}(\tht)$ does not exceed the dimension of the image space.

In the following proposition, the dimension of the image is obtained by counting the degrees of freedom of the space of observations.
This number is actually smaller than $TI^3$ (the number of elements in $\mu(\tht)$), due to redundancy of information in different marginals.
\begin{lemma}
        \label{prop:nobs}
        For a coupling $\calT$, the dimension of the image of the parametrization $\mu$ (from \eqref{eq:defMu}) denoted $N_{\text{obs}}(\calT)$ and is given by:
        \begin{equation}
                N_{\text{obs}}(\calT) = \dim(
\imag(\mu)) = 1+M(I-1)+P(I-1)^2+T(I-1)^3.
                \label{eq:NobsT}
        \end{equation}
\end{lemma}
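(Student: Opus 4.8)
The plan is to exploit that $\mu$ factors through a single \emph{linear} ``marginalization'' map, which reduces the computation of $\dim(\imag(\mu))$ to a rank computation that the interaction (ANOVA) decomposition diagonalizes. Write $\mu = \Phi\circ\nu$, where $\nu:\tht\mapsto\bcalH=\cpdsetp{\lbd;\calP(\uAm1),\ldots,\calP(\uAm M)}\in\dsR^{I^M}$ is the polynomial map producing the full order-$M$ tensor, and $\Phi:\dsR^{I^M}\to\dsR^{TI^3}$ is the linear map that stacks the $T$ marginalizations $\Phi_\tau$ sending a tensor to its 3D marginal $\Htrip{j}{k}{\ell}$, for $\tau=\jkl\in\calT$. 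Since $\Phi$ is linear, the chain rule gives $\jacmt=\Phi\,\jac{\nu}(\tht)$, so the column space of $\jacmt$ is contained in $\imag(\Phi)$ and $\dim(\imag(\mu))\le\rank(\Phi)$ for every $\tht$; this inequality is already what \Cref{prop:necessaryCondT} needs. For the equality claimed in the lemma, I would note that the nonnegative tensors form a full-dimensional (solid) cone in $\dsR^{I^M}$ and that every such tensor is reached by $\nu$ for $R$ large enough, so $\Phi(\imag(\nu))$ is a full-dimensional cone inside $\imag(\Phi)$ and thus spans it, giving $\dim(\imag(\mu))=\rank(\Phi)$. It then remains to compute $\rank(\Phi)$.

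To diagonalize $\Phi$, split each mode's space as $\dsR^I=\dsR\vUn_I\oplus\dsR^I_0$ into constants and the mean-zero hyperplane $\dsR^I_0=\{\bfx\in\dsR^I:\vUn_I^\T\bfx=0\}$ (dimension $I-1$), and take the induced tensor product to obtain the interaction decomposition $\dsR^{I^M}=\bigoplus_{S\subseteq\cpdsetp{1,M}}W_S$, where $W_S=\bigotimes_{m=1}^M U_m$ with $U_m=\dsR^I_0$ for $m\in S$ and $U_m=\dsR\vUn_I$ for $m\notin S$, so that $\dim W_S=(I-1)^{|S|}$. The elementary fact underlying the computation is that summing over a mode $m$ sends $\dsR^I_0$ to $0$ and multiplies $\dsR\vUn_I$ by $I$; hence, for a triplet $\tau$, the block $\Phi_\tau$ vanishes on $W_S$ as soon as some summed-out mode $m\in S\setminus\tau$ carries a mean-zero factor, i.e.\ whenever $S\not\subseteq\tau$, and restricts to a nonzero multiple of the canonical isomorphism onto the matching interaction subspace of $\dsR^{I^\tau}$ whenever $S\subseteq\tau$.

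Consequently $\ker(\Phi)=\bigcap_{\tau\in\calT}\ker(\Phi_\tau)=\bigoplus_{S\text{ not covered}}W_S$, where $S$ is \emph{covered} if $S\subseteq\tau$ for some $\tau\in\calT$, and therefore $\rank(\Phi)=\sum_{S\text{ covered}}(I-1)^{|S|}$. I would finish by counting covered subsets by cardinality: the empty set is always covered (one term, contributing $1$); every singleton $\{m\}$ is covered because $d_m\ge1$ for a valid coupling ($M$ terms, contributing $M(I-1)$); a pair $\{j,k\}$ is covered exactly when it appears in some triplet, which by definition happens for $P$ pairs (contributing $P(I-1)^2$); a $3$-element set is covered iff it equals some $\tau\in\calT$, giving the $T$ triplets (contributing $T(I-1)^3$); and no set of size $\ge 4$ can lie inside a $3$-element triplet. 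Summing yields $\rank(\Phi)=1+M(I-1)+P(I-1)^2+T(I-1)^3=N_{\text{obs}}(\calT)$, as claimed.

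The ANOVA bookkeeping in the second step is routine once the decomposition is set up. I expect the main obstacle to be the two ``behavioural'' claims: that $\Phi_\tau$ annihilates precisely the $W_S$ with $S\not\subseteq\tau$ (which needs the clean statement that marginalizing over a mean-zero mode returns $0$, while marginalizing over a constant mode only rescales), and the equality $\dim(\imag(\mu))=\rank(\Phi)$ rather than a mere inequality, which relies on the image saturating $\imag(\Phi)$ for large $R$. Only the inequality $\dim(\imag(\mu))\le\rank(\Phi)$ is actually used in \Cref{prop:necessaryCondT}, so if the saturation argument proves delicate one can safely restrict attention to that direction.
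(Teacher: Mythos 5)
Your argument is correct, and at its combinatorial core it performs exactly the paper's count: one degree of freedom at order $0$, $I-1$ per variable, $(I-1)^2$ per covered pair, $(I-1)^3$ per triplet. The difference is in the formalization. The paper's proof argues informally that the lower-order marginals are obtained by contracting the observed 3D marginals with $\vUn_I$ and then tallies the ``new free entries'' contributed at each order; you instead factor $\mu=\Phi\circ\nu$ through the full order-$M$ tensor, diagonalize the linear marginalization operator $\Phi$ via the interaction decomposition $\dsR^{I^M}=\bigoplus_S W_S$, and identify $\ker\Phi$ with the sum of the uncovered $W_S$. This buys rigor: the paper's bookkeeping of free entries is precisely your observation that the covered $W_S$ map injectively into linearly independent summands of the target, so your version is a clean proof of what the paper only sketches. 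It also surfaces two points the paper leaves implicit: (i) only the inequality $\dim(\imag(\mu))\le\rank(\Phi)$ is needed for \Cref{prop:necessaryCondT}, and (ii) the stated equality can only hold once $R$ is large enough for the image to saturate $\imag(\Phi)$ (for small $R$ the image has dimension at most $R(1+M(I-1))$); your saturation step addresses (ii), which the paper does not. One small simplification is available there: since the parameter set is all of $\dsR^{n}$ you do not need the nonnegative-cone detour --- every elementary tensor $\bfe_{i_1}\out\cdots\out\bfe_{i_M}$ has sum-to-one factors, so $\imag(\nu)$ already linearly spans $\dsR^{I^M}$ for $R$ large and $\Phi(\imag(\nu))=\imag(\Phi)$ outright.
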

\begin{proof}
    To count the number of possibly different observations of the set of $T$ marginals, we separate the count for each lower-order marginals.
    In our case of 3D-marginals coupling, we must count up to order-3 marginals.
    The lower-order marginals can be obtained from the observations by contracting the 3D marginals with the vector of ones.
    There is $1$ order-0 marginal which is the sum of all entries of $\bcalY$ ($\bfy = \mu(\tht)$).
    Because, we relaxed the sum-to-one constraint on $\lbd$, this sum is not fixed hence count as 1 in the sum \eqref{eq:NobsT}.

    For 1D-marginals, there are $M$ different marginals.
    For the marginal $\bfh^{(m)}$, the sum over its entries is equal to the order-0 marginal.
    Therefore, the $I$ values of $\bfh^{(m)}$ are constrained by one linear equation.
    Hence, each 1D-marginal contributes to $I-1$ values.

    With the same reasoning, an order-2 marginal contributes for $(I-1)^2$ free entries as well as an order-3 marginal contributes for $(I-1)^3$ new free entries.
    The proof is complete because there are $T$ order-3 marginals and $P$ different pairs of variables hence $P$ order-2 marginals.
\end{proof}

\Cref{prop:nobs} is key to proving proposition \Cref{prop:necessaryCondT}. 

\begin{proof}[Proof of \Cref{prop:necessaryCondT}]
By \Cref{prop:q5brei}, if the model is recoverable, then the Jacobian matrix must be full column rank.
A necessary condition for that is (see also \eqref{eq:fullRank})
    \begin{equation*}
        R(1+M(I-1))\leq N_{\text{obs}}(\calT),
    \end{equation*}
with $ N_{\text{obs}}(\calT)$ as in \Cref{prop:nobs}, which leads to the proposed condition on $R$.
\end{proof}

\subsection{Recoverability results for fully coupled tensor factorization}

When all triplets are considered in $\calT$, PCTF3D becomes equivalent with full coupling hence FCTF3D which was presented in \cite{n_kargas_tensors_2018}.
The number of triplets in this case if $\binom{M}3$ and the number of pairs is $\binom{M}2$.
In this case, \Cref{prop:necessaryCondT} implies that the model can be recoverable only if:
\begin{equation}
    \label{eq:neceCondFull}
    R\leq \left\lfloor \frac{1+M(I-1)+\binom{M}2(I-1)^2+\binom{M}3 (I-1)^3}{1+M(I-1)} \right\rfloor.
\end{equation}

By applying the \Cref{alg:rmax} to this case, it is possible to determine the recoverability bound for the fully coupled case.
In \Cref{fig:RmaxFull}, the blue curve represents the most favorable identifiability sufficient condition proposed in \cite{n_kargas_tensors_2018}.
The dotted orange curve represents \Cref{eq:neceCondFull} which is a necessary condition for identifiability and recoverability.
Finally, the orange full line represents the rank $R_\text{max}$ obtained with \Cref{alg:rmax}.
\Cref{fig:RmaxFull} shows that the necessary identifiability condition is way above the sufficient conditions provided by \cite{n_kargas_tensors_2018}.
Moreover, recoverability results of \Cref{alg:rmax} are achieved for ranks up to the necessary condition of \Cref{eq:neceCondFull}.
\begin{figure}
        \centering
        \centerline{\includegraphics[width = \linewidth,clip,trim = {150 0 160 0}]{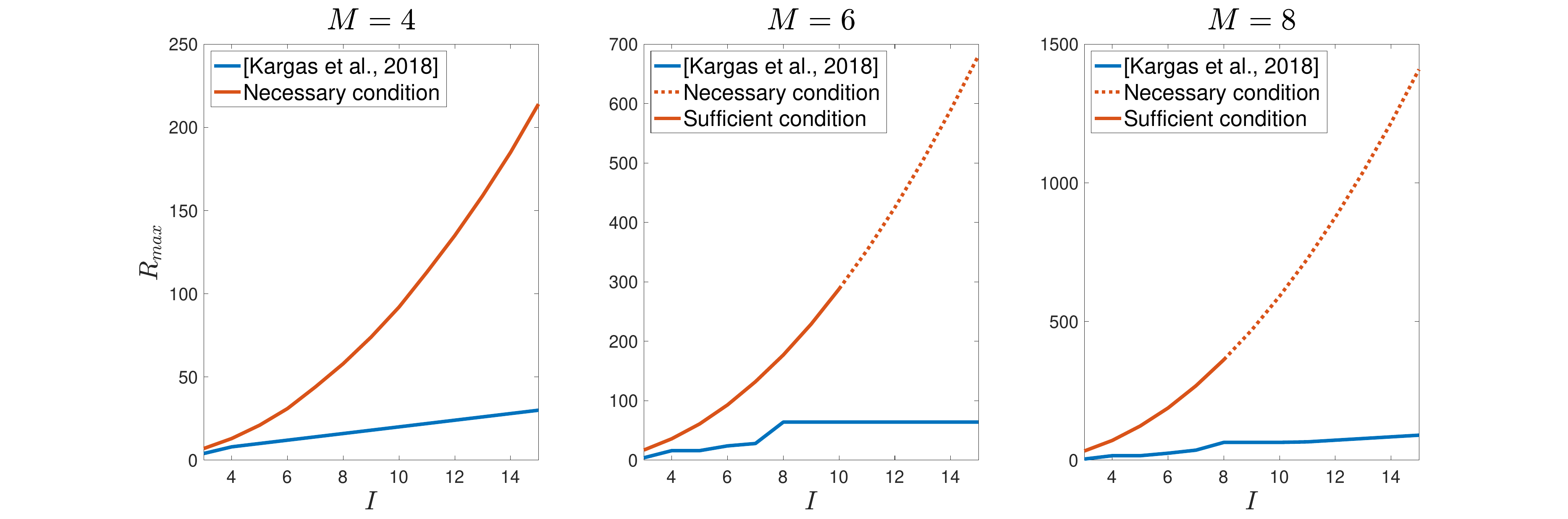}}
        \caption{Recoverability bounds for different values of $M$ and $I$ in the case of the FCTF3D.}
        \label{fig:RmaxFull}
\end{figure}

Our computations show that the model recoverability is achieved for all ranks up to the necessary condition of \Cref{prop:necessaryCondT}, which allows us to formulate the following conjecture.
\begin{conjecture}
    \label{conj:recovFull}
    For the full coupling $\calT$, the coupled CP model 
    \eqref{eq:pctf3d_model_unconstrained} is recoverable if and only if \eqref{eq:neceCondFull} holds.
\end{conjecture}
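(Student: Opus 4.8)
The reverse (necessity) direction of the equivalence is already Proposition \ref{prop:necessaryCondT}, so the entire content of \Cref{conj:recovFull} lies in the sufficiency direction: whenever \eqref{eq:neceCondFull} holds, the full coupling is recoverable. By \Cref{prop:q5brei} together with \Cref{rem:certificate}, it is enough to exhibit a single parameter $\tht^{*}$ at which the block Jacobian \eqref{eq:blockRank} has full column rank $R(1+M(I-1))$; such a point would certify generic recoverability for every rank below the bound as well, since, by the discussion preceding \Cref{alg:rmax}, full column rank at rank $R$ automatically forces full column rank of every sub-collection of block columns, hence recoverability for all $R'<R$.

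The plan is to reformulate the statement as a non-defectivity statement for a secant-type variety. The image of $\mu$ lies in a join of the $\binom{M}{3}$ spaces of order-3 tensors, constrained so that the three factor matrices attached to each triplet are drawn from a common pool $\Am{1},\ldots,\Am{M}$. Full column rank of $\jac{\mu}(\tht^{*})$ is precisely a Terracini-type condition: the $R$ tangent spaces to the coupled rank-one locus, taken at $R$ generic points, must sum to a space of the expected dimension $\min\{R(1+M(I-1)),\,N_{\text{obs}}(\calT)\}$. Thus the conjecture is equivalent to the assertion that this coupled (shared-factor) secant configuration is never defective in the regime \eqref{eq:neceCondFull}, the expected dimension being exactly $N_{\text{obs}}(\calT)$ as computed in \Cref{prop:nobs}.

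To attack the sufficiency I would proceed by induction on $(M,I)$ combined with a specialization and inheritance argument, in the spirit of the Abo--Ottaviani--Peterson treatment of Segre varieties: choose the $R$ factor points $\tht_r^{*}$ with structured coordinates (for instance columns built from distinct nodes, so that the induced Vandermonde-type submatrices are invertible), degenerate a subset of the shared factors so as to lower $I$ or $M$, and then use the block structure \eqref{eq:blockRank} to split the Jacobian into a part governed by a smaller coupled model plus a transverse complement coming from the newly added marginals. The bookkeeping provided by \Cref{prop:nobs}---the clean additive split of $N_{\text{obs}}$ into the order-$0$, order-$1$, order-$2$ and order-$3$ contributions---is what makes such an induction plausible, as it tracks exactly how many fresh degrees of freedom each added variable or triplet introduces.

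The hard part---and the reason this is recorded as a conjecture rather than a theorem---is the boundary (perfect) case, in which $R(1+M(I-1))$ meets $N_{\text{obs}}(\calT)$ and one must rule out defectivity at the very last admissible rank. This is the coupled analogue of the maximal-rank / non-defectivity problem for secant varieties of Segre varieties, which is genuinely delicate and is known to admit sporadic defective exceptions already in the single-tensor setting (Alexander--Hirschowitz-type phenomena). Because the factors are \emph{shared} across all $\binom{M}{3}$ marginals, one cannot simply invoke known single-tensor non-defectivity results blockwise; the shared-factor constraints couple the tangent spaces and must be controlled simultaneously, which is where I expect the argument to resist a clean induction. In the absence of a general argument closing this boundary case, the full-rank Jacobian certificates produced by \Cref{alg:rmax} at explicit (if desired, rational) points for every tested $(M,I)$ furnish strong computational evidence for the claim.
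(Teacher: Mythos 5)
This statement is presented in the paper as a conjecture, and the paper offers no proof of it: the only support given is the necessity direction (which is exactly \Cref{prop:necessaryCondT} specialized to $T=\binom{M}{3}$ and $P=\binom{M}{2}$) together with the numerical certificates produced by \Cref{alg:rmax} and reported in \Cref{fig:RmaxFull}. Your proposal diagnoses this situation correctly: you isolate sufficiency as the open content, you observe that by \Cref{prop:q5brei} and \Cref{rem:certificate} it reduces to exhibiting a single parameter at which the block Jacobian \eqref{eq:blockRank} has full column rank, and you correctly note that full rank at rank $R$ forces full rank for every $R'<R$. Your Terracini reformulation is not a genuinely different route from the paper's: the full-column-rank Jacobian condition \emph{is} the Terracini condition for this coupled secant configuration, and the expected dimension you describe is exactly $N_{\text{obs}}(\calT)$ from \Cref{prop:nobs}; it is the algebraic-geometry vocabulary for the same criterion.

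What you have not done is prove sufficiency, and you say so explicitly: the induction-plus-specialization scheme in the style of Abo--Ottaviani--Peterson is a plausible program, but the shared-factor coupling across triplets prevents a blockwise appeal to known non-defectivity results for Segre varieties, and the boundary case where $R(1+M(I-1))$ meets $N_{\text{obs}}(\calT)$ is precisely where such arguments typically fail. Since the paper itself leaves the statement as a conjecture backed only by computation, your treatment is as complete as the paper's own; just be sure the write-up does not present the sketch as a proof. One further caveat worth recording: \Cref{alg:rmax} certifies only the finitely many $(M,I,R)$ triples actually tested, so the computational evidence, however suggestive, cannot substitute for the inductive step you would need to cover all parameter regimes.
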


\section{Random and balanced couplings and analysis of defective cases} \label{sec:balVSrngreco}

In \cite{pctf3d_part1}, two main strategies were compared: random and balanced couplings.
While random couplings consist in picking $T$ triplets randomly, the balanced coupling strategy constrains the sequence of degrees to be as constant as possible.
By doing so, a balanced coupling ensures that any variable appears evenly in terms of occurrences in $\calT$.
In this subsection, we will compare recoverability for both coupling strategies.
To do such comparison, we generated couplings for both strategies and run \Cref{alg:rmax} for both strategies.

\paragraph{Random couplings}
To illustrate the behavior of the identifiability bound in the random case, we computed $\Rmax$ for 1000 different realizations of $\calT$ in the case of $M=8$ variables and $I=4$ bins per dimension.
\Cref{fig:RmaxRand} shows how the rank $R_\text{max}$ is distributed with respect to the values of $T$.
Unlike the fully coupled case which is deterministic, the bound \eqref{eq:neceCond} is not always achieved as the left plot of Figure \ref{fig:RmaxRand} suggests.
As the number of triplets increases, the bound is increasing which is coherent.
However, for a notable number of random couplings, $R_\text{max}$ is not exceeding the values 10 and 16.
Those two cases are going to be properly studied in the following subsection.
Consequently, the more triplets are considered, the less variability is expected in the recoverability bound.
\begin{figure}
        \centering
        \centerline{\includegraphics[width = \linewidth,clip,trim = {150 0 160 30}]{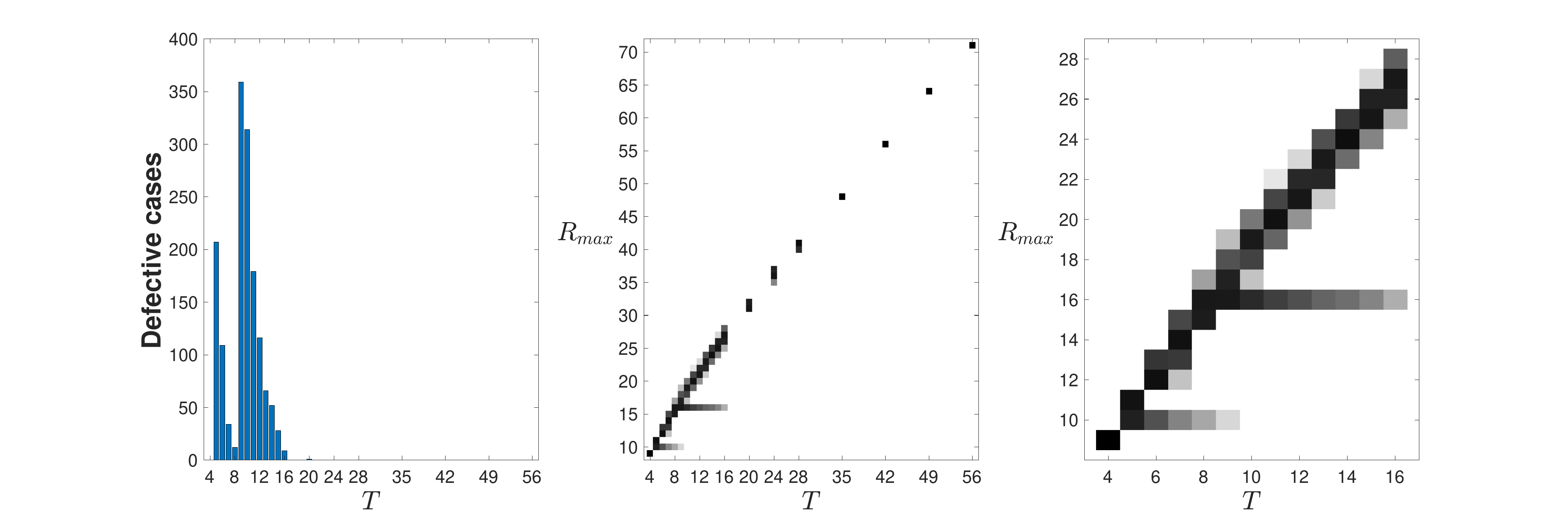}}
        \caption{Recoverability bounds in the case of random partial coupling.
        \textbf{Left plot}: Number of defectives cases (over 1000 realizations) for which $R_\text{max}$ is not equal to the maximal rank provided by the condition \eqref{eq:neceCond}.
        \textbf{Middle plot}: Distribution of $R_\text{max}$ regarding $T$.
        \textbf{Right plot}: Zoom on the middle plot around the lower values of $T$.
        Defective cases appear distinctly at ranks $R_\text{max}=10$ and $R_\text{max}=16$.}
        \label{fig:RmaxRand}
\end{figure} 

\paragraph{Balanced couplings}
Similarly to random couplings, we proceed to the same experiment where $M=8$ and $I=4$.
However, because the number of Lyndon words is limited, it is not possible to create 1000 different balanced couplings at a fixed value of $T$ (see \Cref{tab:numTrialsBal}).
\begin{table}
    \label{tab:numTrialsBal} 
    \centering
    \begin{tabular}{cccccccccccccc}
        \toprule
        $T$ & 4 & 5 & 6 & 7 & 8 & 9 & 10 & 11 & 12 & 13 & 14 & 15 & 16 \\ \midrule
        Number of trials & 36 & 10 & 60 & 30 & 187 & 30 & 13 & 247 & 13 & 7 & 195 & 21 & 313 \\
        \bottomrule
    \end{tabular}
    \caption{\centering Number of trials regarding the number of triplets considered in $\calT$ for the balanced couplings experiment.}
\end{table}
\begin{figure}
    \centering
    \centerline{\includegraphics[width = 0.6\linewidth,clip,trim = {50 10 110 10}]{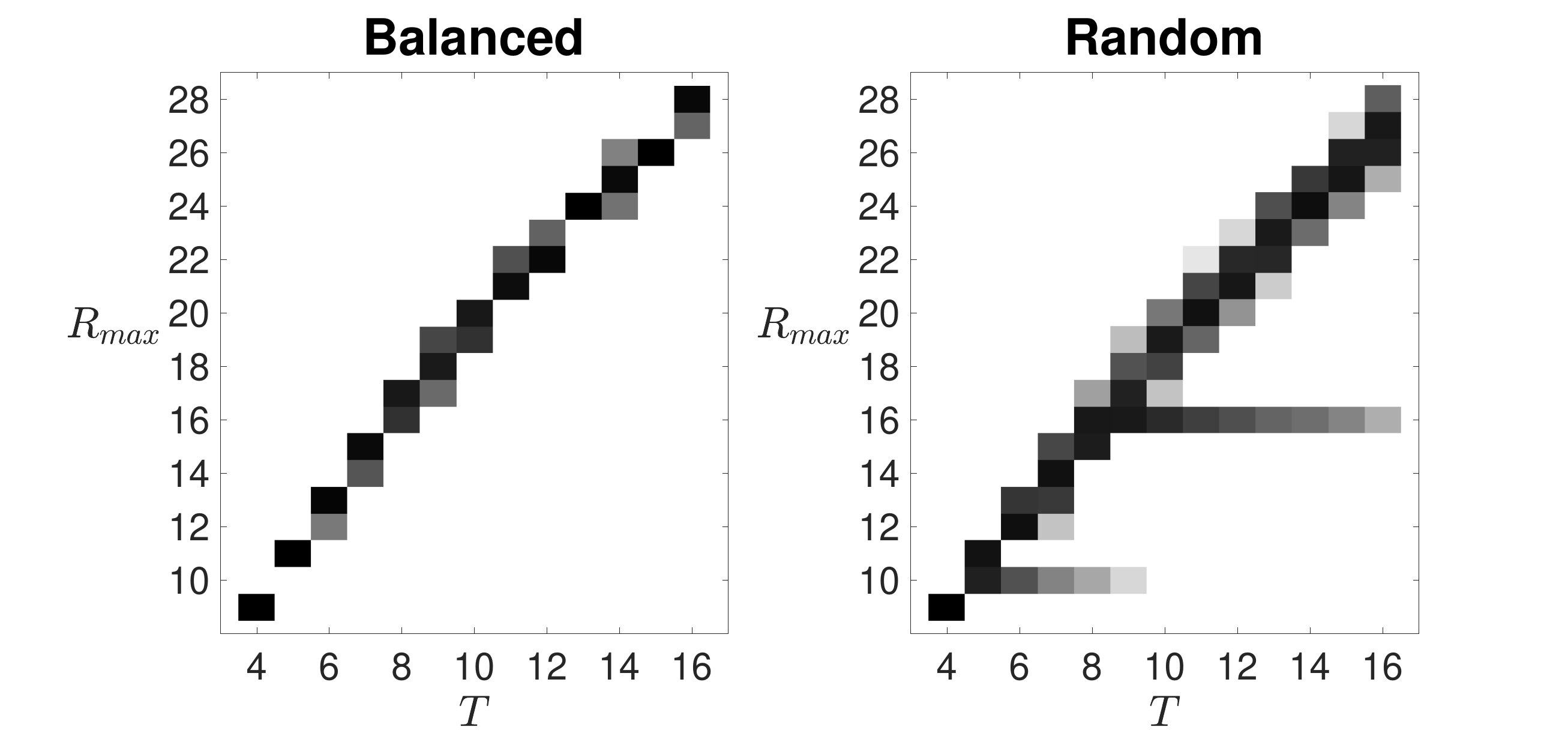}}
    \caption{Recoverability bounds in the case of balanced partial couplings.
    \textbf{Left plot}: Distribution of $R_\text{max}$ regarding $T$ for balanced couplings.
    \textbf{Right plot}: For comparison, the same distribution is shown for randomly chosen triplets.}
    \label{fig:RmaxBal}
\end{figure} 
\Cref{fig:RmaxBal} presents the empirical distribution of $R_\text{max}$ versus $T$.
For balanced strategies, the number of defect cases is not displayed because the recoverability is verified up to the necessary condition \eqref{eq:neceCond} in every case of the experiment.
We conjecture that for balanced couplings the defective cases never occur.
In the following subsections, defective cases observed in \Cref{fig:RmaxRand} are examined.

\subsection{Incidence matrix and Jacobian structure} \label{sec:jacobianMat}
In order to understand the peculiarities of random couplings, we need a closer look at the structure of the Jacobian.

\paragraph{Block structure of the Jacobian}
In \Cref{sec:PolAddModel}, we saw that the Jacobian matrix for an additive model can be separated column-wise in $R$ blocks \eqref{eq:blockRank}.
Because the model parametrized by \eqref{eq:defMu} is coupled, $\jacmt$ can also be separated in $T$ blocks.
Each block represents an observed marginal:
    \begin{equation}
        \jacmt = \begin{bmatrix}
        \jac{\mu}^{(\tau_1)}(\tht) \\
        \vdots \\
        \jac{\mu}^{(\tau_T)}(\tht)
    \end{bmatrix}
\end{equation}
where $\jac{\mu}^{(\tau_t)}(\tht)$ denotes the Jacobian of the parametrization of $\bcalH^{(\tau_t)}$ defined in \Cref{eq:mappinghtau}.

Recall from \cite{pctf3d_part1}, that the coupling $\calT$  as a connected 3-uniform hypergraph denoted.
The standard notions from hypergraph theory will be helpful to
\begin{definition}[Incidence matrix]
For a hypergraph $\calT$, is incidence matrix is a matrix of zeros and ones $\bfV\in\{0,1\}^{T\times M}$, where $T =\Card(\calT)$, where for the $t$-triplet $\jkl\in\calT$, the row $\bfv_{t,:}$ of $\bfV$ contains ones only in positions in $\jkl$.
\end{definition}

The incidence matrix is also helpful to represent various quantities related to hypergraphs.
Indeed, we have
\begin{enumerate}
\item The sequence of degree $\bfd$ of a coupling $\calT$ is a sum of columns of $\bfV$:
\begin{equation*}
    d_m = \Card\left\{ \triples{m}k\ell \in\calT\right\} = \vUn^T_T\bfV.
\end{equation*}
\item The number of pairs can be found as the number of non-orthogonal rows of $\bfV$: 
\begin{equation}
    P = \frac12 \left(\left\| \bfV^\T\bfV\right\|_0-M\right),
\end{equation}
where $\|\cdot\|_0$ denotes the number of non-zero elements in a matrix.
\end{enumerate}

Furthermore, the incidence matrix is also related to the structure of the Jacobian.
To give an example, consider $M = 4$ and the coupling strategy $\calT = \{\triples134,\triples234\}$.
Then the Jacobian matrix of the $r$-th rank one term (see \eqref{eq:blockRank}) is defined by 
\begin{equation}
    \label{eq:jac4Dexemple}
    \bcalJ_{\mu}(\tht_r) = \begin{bNiceMatrix}
        \amr{4}\kron\amr{3}\kron\amr{1} & & \Block{2-3}<\LARGE>{\lambda_r \bfB_r} & & \\ 
        \amr{4}\kron\amr{3}\kron\amr{2} & & & & &
    \end{bNiceMatrix},
\end{equation}
where
\begin{equation*}
    \bfB_r = \begin{bNiceMatrix}
        \amr{4}\kron\amr{3}\kron \bcalJ_{\calP} & \mathbf{0}_{I^3\times I} & \amr{4}\kron \bcalJ_{\calP}\kron\amr{1} & \bcalJ_{\calP}\kron\amr{3}\kron\amr{1} \\ 
        \mathbf{0}_{I^3\times I} & \amr{4}\kron\amr{3}\kron \bcalJ_{\calP} & \amr{4}\kron \bcalJ_{\calP}\kron\amr{2} & \bcalJ_{\calP}\kron\amr{3}\kron\amr{2}
    \end{bNiceMatrix}
\end{equation*}
and where $\jac{\calP}$ denotes the Jacobian matrix of the projection $\calP$ \eqref{eq:projTrunc}
\begin{equation}
    \label{eq:projS21}
    \jac{\calP} = \begin{bmatrix}
        1 & & \\
        & \ddots & \\
        & & 1 \\
        -1 & \cdots & -1
    \end{bmatrix}.
\end{equation}
Note that the structure of $\bfB_r$ in \eqref{eq:jac4Dexemple} is directly linked to the incidence matrix $\bfV$ of $\calT$.
\begin{equation*}
    \bfV = \begin{bmatrix}
        1 & 0 & 1 & 1 \\ 0 & 1 & 1 & 1
    \end{bmatrix}
\end{equation*}
This means that, excluding $\lbd$, each row block of $\jac{\mu}^{(\tau_t)}(\tht_r)$ only contains 3 non-zeros blocks out of $M$.

\subsection{Analysis of defective cases} \label{sec:defectCases}
In this subsection, we analyze the defective cases appearing in \Cref{fig:RmaxRand} and \Cref{fig:RmaxBal}.

In the following, let the sequence of degrees $\bfd$ be sorted in ascending order, which can always be the case up to a permutation of variables.
We present in this section two cases of a particular sequence of degrees that leads to identifiability and recoverability loss, which can explain the defective cases from the previous subsection.

\subsubsection*{First case: $d_1=1$ and $d_2>1$}
This setup means that the first variable (and only this one) appears once in $\calT$.
This case can occur as soon as $T\leq \binom{M-1}{2}$ but is more and more likely as $T$ is decreasing.
After analyzing the couplings of Figure \ref{fig:RmaxRand}, those cases are leading to low recoverability bounds equal to $\Rmax=16$.
\begin{proposition}
    Let $\calT$ a coupling strategy such that $d_1=1$ and $d_2>1$.
    If the coupled CP model is identifiable, then we have that 
    \begin{equation}
            \Rmax\leq I^2,
            \label{eq:defectcase1}
    \end{equation}
    for this coupling strategy.
    \label{prop:defectcase1}
\end{proposition}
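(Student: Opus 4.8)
The plan is to localize the loss of recoverability at the unique marginal containing the degree-one variable, and to turn the full-column-rank requirement on the Jacobian into a rank condition for a Khatri--Rao product. Up to the permutation of variables already assumed, let $\tau^{\star}=\triples{1}{a}{b}\in\calT$ be the single triplet containing variable $1$; this is the only place where $d_1=1$ enters, while the hypothesis $d_2>1$ merely isolates the present case and plays no role in the bound itself. By \Cref{prop:q5brei}, recoverability requires $\jacmt$ to be full column rank for a generic $\tht$, and by \Cref{rem:identIrecov} identifiability implies generic recoverability; it therefore suffices to prove that $\jacmt$ cannot be full column rank once $R>I^2$, which yields $\Rmax\leq I^2$.

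First I would single out the $R(I-1)$ columns of $\jacmt$ that differentiate with respect to the reduced factors $\uam{1}{1},\ldots,\uam{1}{R}$ of variable $1$. Since variable $1$ appears only in $\tau^{\star}$, these columns vanish outside the $I^3$ rows of the block $\bcalH^{(\tau^{\star})}$, so their linear independence is equivalent to the full column rank of the restricted submatrix $\bfK\in\dsR^{I^3\times R(I-1)}$ they form; full column rank of $\jacmt$ is thus impossible unless $\bfK$ itself has full column rank. Reading off the block pattern of \eqref{eq:jac4Dexemple}, the $r$-th column group of $\bfK$ is $\bfc_r\kron\jac{\calP}$, where $\bfc_r=\lambda_r(\am{b}{r}\kron\am{a}{r})\in\dsR^{I^2}$ and $\jac{\calP}$ is the $I\times(I-1)$ matrix of \eqref{eq:projS21}, which has full column rank $I-1$.

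The heart of the argument is then a rank identity. As every column group of $\bfK$ shares the common right Kronecker factor $\jac{\calP}$, the column space of $\bfK$ factors as $\mathrm{col}(\bfC)\otimes\mathrm{col}(\jac{\calP})$, where $\bfC=\begin{bmatrix}\bfc_1 & \cdots & \bfc_R\end{bmatrix}$ is the columnwise Khatri--Rao product of $\Am{b}$ and $\Am{a}$ scaled by $\lbd$. Hence $\rank(\bfK)=\rank(\bfC)\,(I-1)$, and $\bfK$ attains its full column rank $R(I-1)$ if and only if $\rank(\bfC)=R$. For generic $\tht$ the weights $\lambda_r$ are nonzero and $\bfC$ realizes the generic rank of a Khatri--Rao product of two $I\times R$ matrices, namely $\min(I^2,R)$. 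Consequently $\rank(\bfC)=R$ forces $R\leq I^2$, so for $R>I^2$ the submatrix $\bfK$---and therefore $\jacmt$---drops column rank, establishing \eqref{eq:defectcase1}.

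The only genuinely delicate step is the claim that the generic rank of the Khatri--Rao product equals $\min(I^2,R)$. I would settle it by exhibiting an explicit admissible point---taking the columns of $\Am{a}$ and $\Am{b}$ among the canonical vectors so that the products $\am{b}{r}\kron\am{a}{r}$ realize $\min(I^2,R)$ distinct rank-one matrices $\bfe_i\bfe_j^{\T}$, which are linearly independent in $\dsR^{I\times I}$---and then promoting this to a generic statement through the lower-semicontinuity of the rank (as in \Cref{rem:certificate}), noting that $\rank(\bfC)\le I^2$ always holds because $\bfC$ has $I^2$ rows. The remaining verifications, namely the subspace factorization of $\mathrm{col}(\bfK)$ and the genericity of $\lambda_r\neq0$, are routine.
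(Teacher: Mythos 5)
Your proof is correct and follows essentially the same route as the paper: both isolate the $R(I-1)$ Jacobian columns attached to the factors of the degree-one variable, observe that they are supported only on the rows of the single marginal containing that variable, and bound the rank of the resulting $I^3\times R(I-1)$ block by $I^2(I-1)$. The only (minor) difference is in the last counting step --- you use the Kronecker factorization $\bfK=\bfC\kron\jac{\calP}$ (up to column permutation) and the trivial bound $\rank(\bfC)\le I^2$, whereas the paper counts the $I^2$ dependent rows induced by $\vUn_I^\T\jac{\calP}=\bm{0}$; these yield the same inequality $R(I-1)\le I^3-I^2$, and your closing discussion of the generic Khatri--Rao rank is not needed for the stated upper bound but correctly addresses its tightness.
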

\begin{proof}
    Let a coupled model with such coupling $\calT$ and suppose this model is identifiable.
    Then, it is necessary that $\jacmt$ is full rank.
    As seen in \Cref{sec:jacobianMat}, the structure of $\jacmt$ is linked to the incidence matrix $\bfV$:
    \begin{equation*}
            \bfV = \begin{bmatrix}
                    1 & v_{12} & \cdots & v_{1M} \\
                    0 & v_{22} & \cdots & v_{2M} \\
                    \vdots & \vdots & & \vdots \\
                    0 & v_{T2} & \cdots & v_{TM} 
            \end{bmatrix}.
    \end{equation*}
    Therefore, it is possible to define $\bfB_r\in \mathbb{R}^{I^3\times(I-1)}$ and $\bfC_r\in \mathbb{R}^{(T-1)I^3\times (M-1)(I-1)+1}$ such that:
    \begin{equation}
            \jac{\mu}(\tht_r) = \left[\begin{array}{c|ccc}
                \bfB_r & & \cdots & \\ \hline
                & & & \\
                \boldsymbol{0} & & \bfC_r & \\
                & & & \\
                \end{array} \right].
            \label{eq:structJacDefect}
    \end{equation}
    It follows that a necessary condition for $\jacmt$ is full rank is that the matrix $\bfB = \begin{bmatrix}
            \bfB_1 & \cdots & \bfB_R
    \end{bmatrix}$ is full rank.
    Because of sum-to-one constraints and thus the structure of $\jac{\calP}$, $\bfB$ contains $I^2$ dependent rows.
    Therefore, necessarily the number of observations $I^3-I^2$ should not be less than the number of parameters $R(I-1)$ which completes the proof.
\end{proof}

To verify this proposition, we proceed to the following experiment.
For $M\in\cpdsetp{4,15}$, Algorithm \ref{alg:rmax} was applied to the case where all triplets with variables $\left\{2,\ldots,M\right\}$ are present and only the triplet $\triples{1}{M-1}{M}$ contains the first variable.
This represents the most favorable case where one variable is present once in the coupling.
The results of this experiment are plotted in \Cref{tab:defectcase1}.
Even for this most favorable case, recoverability is guaranteed up to $I^2$ except for small number of $M$ where the condition \eqref{eq:neceCond} is more restrictive.
\begin{table}
        \label{tab:defectcase1} 
        \centering
        \begin{tabular}{cccccccc}
            \toprule
            M & & 4 & 5 & 6 & 7 & $\cdots$ & 15 \\ \midrule
            $I=3$ & & 5 & 7 & 9 & 9 & $\cdots$ & 9 \\ \midrule
            $I=4$ & & 8 & 13 & 16 & 16 & $\cdots$ & 16 \\ \midrule
            $I=6$ & & 18 & 32 & 36 & 36 & $\cdots$ & 36 \\ \bottomrule
        \end{tabular}
        \caption{\centering Evolution over $M$ of $R_\text{max}$ for most favorable coupling case with $d_1=1$.}
\end{table}

\subsubsection*{Second case: $d_1=d_2=1$ and $\tau_1=\triples{1}{2}{\ell_1}$}

In this case, the first two variables are present once in the first triplet and only in this triplet.
Note that $d_1=d_2=1$ ensures that $d_3>1$ because the hypergraph $\calT$ must be connected.
Indeed, if $d_1 = d_2 = d_3 = 1$ and the first triplet is $\tau_1 = \triples123$, then variables $\triples123$ are not connected to the other variables.
After analyzing the results of \Cref{fig:RmaxRand}, the couplings with $d_1=d_2=1$ lead to a recoverability bound of $\Rmax = 10$.
\begin{proposition}
    Let $\calT$ a coupling strategy such that $d_1=d_2=1$ and $\tau_1=\triples{1}{2}{\ell_1}$.
    If the coupled CP model is identifiable, then we have
    \begin{equation}
            \Rmax\leq \frac{I(I+1)}{2}.
            \label{eq:defectcase2}
    \end{equation}
    \label{prop:defectcase2}
\end{proposition}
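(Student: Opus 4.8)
The plan is to follow the same template as the proof of \Cref{prop:defectcase1}: isolate the Jacobian columns attached to the two degree-one variables and bound the rank of the resulting block. First I would record the necessary condition: by \Cref{rem:identIrecov} together with \Cref{prop:q5brei} (and \Cref{rem:certificate}), identifiability of the coupled model forces $\jacmt$ to be full column rank at a generic $\tht$.

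Next I would exploit the hypothesis $d_1=d_2=1$ with $\tau_1 = \triples{1}{2}{\ell_1}$. Since variables $1$ and $2$ occur only in the single triplet $\tau_1$, the columns of $\jac{\mu}(\tht_r)$ corresponding to the truncated factors $\uam{1}{r}$ and $\uam{2}{r}$ are supported entirely inside the row block of $\bcalH^{(\tau_1)}$; all other row blocks vanish on these columns. Gathering them over $r$ gives $\bfB = \begin{bmatrix}\bfB_1 & \cdots & \bfB_R\end{bmatrix}\in\dsR^{I^3\times 2R(I-1)}$, where, ordering the modes of $\tau_1$ so that variable $1$ is the fastest index,
\[
\bfB_r = \lambda_r \begin{bmatrix} \amr{\ell_1}\kron\amr{2}\kron\jac{\calP} & \amr{\ell_1}\kron\jac{\calP}\kron\amr{1} \end{bmatrix}.
\]
Exactly as in \Cref{prop:defectcase1}, this support pattern puts $\jacmt$ in a block form analogous to \eqref{eq:structJacDefect} (now with $\bfB_r$ collecting two factor blocks), so full column rank of $\jacmt$ necessarily requires full column rank of $\bfB$.

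The heart of the argument — and the step I expect to be the main obstacle — is to bound $\rank(\bfB)$, because here the two degree-one variables interact in a way absent from \Cref{prop:defectcase1}. A left-null vector of $\bfB$ must annihilate both blocks of every $\bfB_r$ simultaneously. Summing over the mode of variable $1$ kills the first block (the rows of $\jac{\calP}$ sum to zero, i.e.\ $\vUn_I^\T\jac{\calP}=\boldsymbol{0}$) but leaves the second intact, since there $\jac{\calP}$ sits in the mode of variable $2$ while $\amr{1}$ merely contributes $\vUn_I^\T\amr{1}=1$; the symmetric statement holds for variable $2$. Hence only the functionals that sum over \emph{both} modes $1$ and $2$ annihilate all of $\imag(\bfB)$. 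Taking $\bfe_c\kron\vUn_I\kron\vUn_I$ for each index $c$ of variable $\ell_1$ produces $I$ linearly independent such vectors, whence $\rank(\bfB)\le I^3-I$.

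Finally, combining full column rank of $\bfB$ with this estimate gives $2R(I-1)\le I^3-I = I(I-1)(I+1)$, and dividing by $2(I-1)$ yields $R\le \tfrac{I(I+1)}{2}$, the claimed bound on $\Rmax$. This is consistent with the empirically observed defective value $\Rmax=10$ at $I=4$, indicating that the $I$ exhibited dependencies are generically the only ones.
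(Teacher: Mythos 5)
Your proposal is correct and follows essentially the same route as the paper's proof: use the incidence structure to reduce full column rank of $\jacmt$ to full column rank of the block $\bfB=\begin{bmatrix}\bfB_1 & \cdots & \bfB_R\end{bmatrix}$ supported on the $\bcalH^{(\tau_1)}$ rows, then bound its rank by $I^3-I$ to get $2R(I-1)\leq I(I-1)(I+1)$. Your explicit exhibition of the $I$ left-null vectors $\bfe_c\kron\vUn_I\kron\vUn_I$ is a welcome elaboration of the paper's bare assertion that ``only $I$ rows of $\bfB$ are redundant,'' but it is the same argument.
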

\begin{proof}
    With the same method used in \Cref{prop:defectcase1}, let us consider an identifiable coupled model such that $d_1=d_2=1$ and $\tau_1=\triples{1}{2}{\ell_1}$.
    The incidence matrix $\bfV$ is equal to
    \begin{equation}
            \bfV = \begin{bmatrix}
                    1 & 1 & v_{13} & \cdots & v_{1M} \\
                    0 & 0 & v_{23} & \cdots & v_{2M} \\
                    \vdots & \vdots & \vdots & & \vdots \\
                    0 & 0 & v_{T3} & \cdots & v_{TM} \\
            \end{bmatrix},
    \end{equation} and therefore we can define $\bfB_r \in\mathbb{R}^{I^3\times 2(I-1)}$ and $\bfC_r \in\mathbb{R}^{(T-1)I^3\times (M-2)(I-1)+1}$ such that $\jacmt$ has the structure \eqref{eq:structJacDefect}.
    Because now 2 factors are included, only $I$ rows of $\bfB = \begin{bmatrix}
            \bfB_1 & \cdots & \bfB_R
    \end{bmatrix}$ are redundant.
    Hence, we obtain that $2R(I-1)\leq I^3-I$ which leads to the condition \eqref{eq:defectcase2}.
\end{proof}

By conducting a similar experiment, Table \ref{tab:defectcase2} shows that the bound verifies the condition \eqref{eq:defectcase2}.
In this case, all triplets with variables $\left\{3,\ldots,M\right\}$ were present and only the triplet $\triples12M$ contains the first two variables.
Maximal recoverability ranks are also bounded by $\frac{I(I+1)}{2}$ and \eqref{eq:neceCond} as expected.
\begin{table}
        \centering
        \begin{tabular}{cccccccc}
            \toprule
            M & & 5 & 6 & 7 & 8 & $\cdots$ & 15 \\ \midrule
            $I=3$ & & 4 & 6 & 6 & 6 & $\cdots$ & 6 \\ \midrule
            $I=4$ & & 7 & 10 & 10 & 10 & $\cdots$ & 10 \\ \midrule
            $I=6$ & & 16 & 21 & 21 & 21 & $\cdots$ & 21 \\ \bottomrule
        \end{tabular}
        \caption{\centering Evolution over $M$ of $R_\text{max}$ for most favorable coupling case with $d_1=d_2=1$.}
        \label{tab:defectcase2} 
\end{table}

To conclude this subsection, we would like to emphasize the fact that this list of defective cases may not be exhaustive.
Indeed, for higher number of variables, additional defective cases may appear, but they may be more difficult to characterize.

\section{Identifiability of Cartesian product coupling} \label{sec:cartesianProd}

\subsection{Cartesian coupling: definition}
In this section, we analyze special case of Cartesian coupling, for which we can prove stronger results on identifiability.
The following coupling strategy is called \emph{cartesian}:
\begin{equation}\label{eq:cartesian_coupling}
    \calT = \calM_1\times\calM_2\times\calM_3,
\end{equation}
where $\calM_1$, $\calM_2$, $\calM_3$ are three disjoint subsets of variables given by
\[
\calM_1 = \{j_1,\ldots,j_{M_1}\}, \quad
\calM_2 = \{k_1,\ldots,k_{M_2}\}, \quad
\calM_3 = \{\ell_1,\ldots,\ell_{M_3}\}.
\]
These couplings served as a base for proofs of \cite{n_kargas_tensors_2018}, however, here we provide a comprehensive analysis of the constraints, which helps us to obtain stronger results.

\begin{remark}[degrees and number of pairs]
    The cartesian coupling has the following $\bfd$ and $P$:
    \begin{equation*}
        d_m = \left\{ 
        \begin{array}{ll}
            M_2M_3 & \text{ if $m\in\calM_1$} \\
            M_1M_3 & \text{ if $m\in\calM_2$} \\
            M_1M_2 & \text{ if $m\in\calM_3$}
        \end{array}\right., \quad \text{ and } \quad P = M_2M_3+M_1M_3+M_1M_2,
    \end{equation*}
    where $M_1 = \Card(\calM_1)$, $M_2 = \Card(\calM_2)$ and $M_3 = \Card(\calM_3)$.
\end{remark}

\subsection{Stacking marginals into a 3D tensor}
The key idea is that under Cartesian coupling all data (marginals) can be stacked in to a $3$-way tensor $\bcalY \in \dsR^{IM_1 \times IM_2 \times IM_3}$.
Formally, we define $\bcalY$ to be a tensor composed of $M_1M_2M_3$ blocks indexed by $(r,s,t) \in \cpdsetp{1,M_1} \times\cpdsetp{1,M_2}\times\cpdsetp{1,M_3}$ such that, in MATLAB-like notation,
\[
\bcalY_{1+(r-1)I:rI,1+(s-1)I:sI,,1+(t-1)I:tI} = \Htrip{j_r}{k_s}{\ell_t}.
\]
The resulting tensor is visualized in \Cref{fig:repartKargas}
\begin{figure}
    \centering
    \resizebox{.7\linewidth}{!}{\tikzset{every picture/.style={line width=0.75pt}} 

\begin{tikzpicture}[x=0.75pt,y=0.75pt,yscale=-1,xscale=1]

\coordinate (Co111) at (110,360);
\coordinate (Co11M) at (450, 20);
\coordinate (Co1MM) at (950, 20);
\coordinate (CoMMM) at (950, 350);
\coordinate (CoMM1) at (610, 690);
\coordinate (CoM11) at (110, 690);
\coordinate (Co1M1) at (610, 360);

\draw  [line width=3.75]  (Co111) -- (Co11M) -- (Co1MM) -- (CoMMM) -- (CoMM1) -- (110,690) -- cycle ; \draw [line width=3.75]  (Co1MM) -- (Co1M1) -- (Co111) ; \draw [line width=3.75] (Co1M1) -- (CoMM1) ;

\draw (85,360) .. controls (80,360) and (78,364) .. (78,368) -- (78,515) .. controls (78,522) and (76,525) .. (71,525) .. controls (76,525) and (78,528) .. (78,535)(78,532) -- (78,682) .. controls (78,686) and (80,690) .. (85,690) ;
\draw (30,512) node [anchor=north west][inner sep=0.75pt]  [font=\LARGE] [align=left] {$M_{1}$};

\draw (110,715) .. controls (110,720) and (114,722) .. (118,722) -- (350,722) .. controls (357,722) and (360,724) .. (360,729) .. controls (360,724) and (363,722) .. (370,722)(367,722) -- (602,722) .. controls (606,722) and (610,720) .. (610,715) ;
\draw (342,742) node [anchor=north west][inner sep=0.75pt]  [font=\LARGE] [align=left] {$M_{2}$};

\draw    (624,700) .. controls (627,703) and (630,703) .. (633,700) -- (787,543) .. controls (792,538) and (796,537) .. (799,540) .. controls (796,537) and (797,533) .. (801,528)(799,530) -- (955,371) .. controls (958,368) and (958,364) .. (955,361) ;
\draw (802,542) node [anchor=north west][inner sep=0.75pt]  [font=\LARGE] [align=left] {$M_{3}$};

\draw   [line width=1.5]  (110,359.53) -- (160.54,308.99) -- (290,308.99) -- (290,426.92) -- (239.46,477.46) -- (110,477.46) -- cycle ; \draw   [line width=1.5]  (290,308.99) -- (239.46,359.53) -- (110,359.53) ; \draw   [line width=1.5]  (239.46,359.53) -- (239.46,477.46) ;
\draw   [line width=1.5]  (110,569.53) -- (160.54,518.99) -- (290,518.99) -- (290,636.92) -- (239.46,687.46) -- (110,687.46) -- cycle ; \draw   [line width=1.5]  (290,518.99) -- (239.46,569.53) -- (110,569.53) ; \draw   [line width=1.5]  (239.46,569.53) -- (239.46,687.46) ;
\draw   [line width=1.5]  (480,569.53) -- (530.54,518.99) -- (660,518.99) -- (660,636.92) -- (609.46,687.46) -- (480,687.46) -- cycle ; \draw   [line width=1.5]  (660,518.99) -- (609.46,569.53) -- (480,569.53) ; \draw   [line width=1.5]  (609.46,569.53) -- (609.46,687.46) ;
\draw   [line width=1.5]  (480,359.53) -- (530.54,308.99) -- (660,308.99) -- (660,426.92) -- (609.46,477.46) -- (480,477.46) -- cycle ; \draw   [line width=1.5]  (660,308.99) -- (609.46,359.53) -- (480,359.53) ; \draw   [line width=1.5]  (609.46,359.53) -- (609.46,477.46) ;
\draw   [line width=1.5]  (400,69.53) -- (450.54,18.99) -- (580,18.99) -- (580,136.92) -- (529.46,187.46) -- (400,187.46) -- cycle ; \draw   [line width=1.5]  (580,18.99) -- (529.46,69.53) -- (400,69.53) ; \draw   [line width=1.5]  (529.46,69.53) -- (529.46,187.46) ;
\draw   [line width=1.5]  (770,69.53) -- (820.54,18.99) -- (950,18.99) -- (950,136.92) -- (899.46,187.46) -- (770,187.46) -- cycle ; \draw   [line width=1.5]  (950,18.99) -- (899.46,69.53) -- (770,69.53) ; \draw   [line width=1.5]  (899.46,69.53) -- (899.46,187.46) ;
\draw   [line width=1.5]  (770,279.53) -- (820.54,228.99) -- (950,228.99) -- (950,346.92) -- (899.46,397.46) -- (770,397.46) -- cycle ; \draw   [line width=1.5]  (950,228.99) -- (899.46,279.53) -- (770,279.53) ; \draw   [line width=1.5]  (899.46,279.53) -- (899.46,397.46) ;
\draw   [dash pattern={on 0.84pt off 2.51pt}]  (290,308.99) -- (529.46,69.53) ;
\draw   [dash pattern={on 0.84pt off 2.51pt}]  (530.54,308.99) -- (770,69.53) ;
\draw   [dash pattern={on 0.84pt off 2.51pt}]  (660,426.92) -- (899.46,187.46) ;
\draw   [dash pattern={on 0.84pt off 2.51pt}]  (660,518.99) -- (899.46,279.53) ;
\draw   [dash pattern={on 0.84pt off 2.51pt}]  (660,390) -- (770,279.53) ;
\draw   [dash pattern={on 0.84pt off 2.51pt}]  (570.54,477.99) -- (530.54,518.99) ;
\draw   [dash pattern={on 0.84pt off 2.51pt}]  (239.46,477.46) -- (480,477.46) ;
\draw   [dash pattern={on 0.84pt off 2.51pt}]  (239.46,570) -- (480,570) ;
\draw   [dash pattern={on 0.84pt off 2.51pt}]  (290,308.99) -- (530.54,308.99) ;
\draw   [dash pattern={on 0.84pt off 2.51pt}]  (529.46,187.46) -- (770,187.46) ;
\draw   [dash pattern={on 0.84pt off 2.51pt}]  (580,136.92) -- (770.54,136.92) ;
\draw   [dash pattern={on 0.84pt off 2.51pt}]  (529.46,69.53) -- (770,69.53) ;
\draw   [dash pattern={on 0.84pt off 2.51pt}]  (290,518.99) -- (530.54,518.99) ;
\draw   [dash pattern={on 0.84pt off 2.51pt}]  (290,426.92) -- (529.46,187.46) ;
\draw   [dash pattern={on 0.84pt off 2.51pt}]  (239.46,570) -- (239.46,477.46) ;
\draw   [dash pattern={on 0.84pt off 2.51pt}]  (290,518.99) -- (290,426.45) ;
\draw   [dash pattern={on 0.84pt off 2.51pt}]  (660,519.46) -- (660,426.92) ;
\draw   [dash pattern={on 0.84pt off 2.51pt}]  (770,279.53) -- (770,186.99) ;
\draw   [dash pattern={on 0.84pt off 2.51pt}]  (899.46,280) -- (899.46,187.46) ;
\draw    (450.54,349.46) -- (282,518) ;
\draw    (450,188) -- (450.54,349.46) ;
\draw    (450.54,349.46) -- (490,350) ;
\draw    (660,350) -- (770,350) ;
\draw   [dash pattern={on 0.84pt off 2.51pt}]  (480,570) -- (480,477.46) ;

\draw  [draw opacity=0][fill={rgb, 255:red, 255; green, 255; blue, 255 }  ,fill opacity=1 ][line width=0.75]   (778,108) -- (892,108) -- (892,148) -- (778,148) -- cycle  ;
\draw (785,116) node [anchor=north west][inner sep=0.75pt]  [font=\Large] [align=left] {$\bcalH^{( j_{1} k_{M_{2}} \ell _{M_{3}})}$};
\draw  [draw opacity=0][fill={rgb, 255:red, 255; green, 255; blue, 255 }  ,fill opacity=1 ][line width=0.75]   (772,318) -- (898,318) -- (898,358) -- (772,358) -- cycle  ;
\draw (785,325) node [anchor=north west][inner sep=0.75pt]  [font=\Large] [align=left] {$ \bcalH^{( j_{M_{1}} k_{M_{2}} \ell _{M_{3}})}$};
\draw  [draw opacity=0][fill={rgb, 255:red, 255; green, 255; blue, 255 }  ,fill opacity=1 ][line width=0.75]   (490,398) -- (592,398) -- (592,438) -- (490,438) -- cycle  ;
\draw (500,408) node [anchor=north west][inner sep=0.75pt]  [font=\Large] [align=left] {$ \bcalH^{( j_{1} k_{M_{2}} \ell _{1})}$};
\draw  [draw opacity=0][fill={rgb, 255:red, 255; green, 255; blue, 255 }  ,fill opacity=1 ][line width=0.75]   (488,608) -- (602,608) -- (602,648) -- (488,648) -- cycle  ;
\draw (500,617) node [anchor=north west][inner sep=0.75pt]  [font=\Large] [align=left] {$ \bcalH^{( j_{M_{1}} k_{M_{2}} \ell _{1})}$};
\draw  [draw opacity=0][fill={rgb, 255:red, 255; green, 255; blue, 255 }  ,fill opacity=1 ][line width=0.75]   (120,608) -- (222,608) -- (222,648) -- (120,648) -- cycle  ;
\draw (131,617) node [anchor=north west][inner sep=0.75pt]  [font=\Large] [align=left] {$ \bcalH^{( j_{M_{1}} k_{1} \ell _{1})}$};
\draw  [draw opacity=0][fill={rgb, 255:red, 255; green, 255; blue, 255 }  ,fill opacity=1 ][line width=0.75]   (128,404) -- (218,404) -- (218,442) -- (128,442) -- cycle  ;
\draw (135,408) node [anchor=north west][inner sep=0.75pt]  [font=\Large] [align=left] {$ \bcalH^{( j_{1} k_{1} \ell _{1})}$};
\draw  [draw opacity=0][fill={rgb, 255:red, 255; green, 255; blue, 255 }  ,fill opacity=1 ][line width=0.75]   (418,112) -- (520,112) -- (520,152) -- (418,152) -- cycle  ;
\draw (421,116) node [anchor=north west][inner sep=0.75pt]  [font=\Large] [align=left] {$ \bcalH^{( j_{1} k_{1} \ell _{M_{3}})}$};

\end{tikzpicture}}
    \caption{Structure of $\bcalY$ as a concatenation of 3D marginals.}
    \label{fig:repartKargas}
\end{figure}
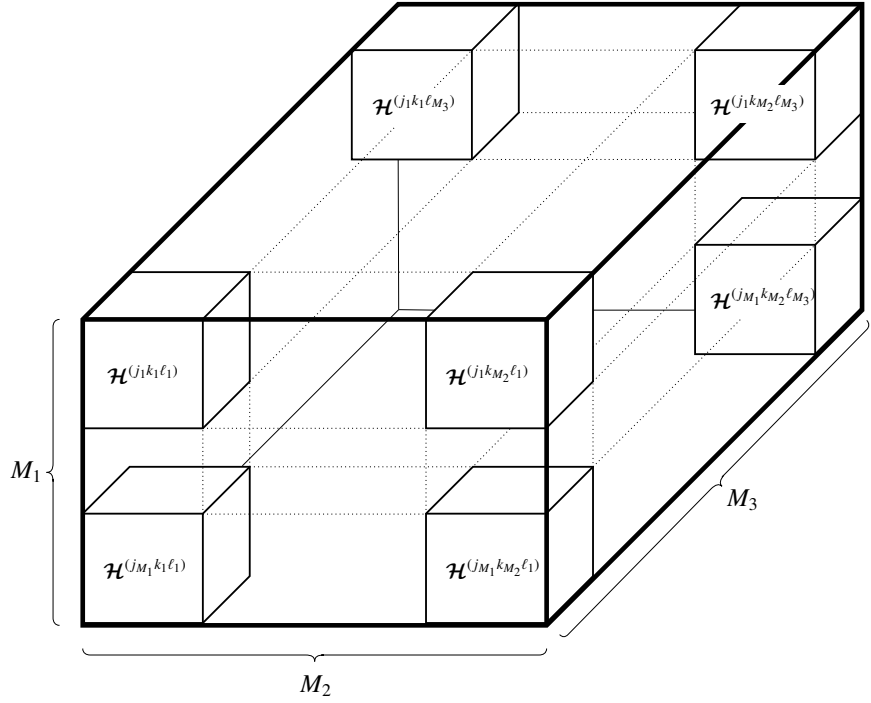

Under such stacking, we can formulate the following obvious lemma.
\begin{lemma}\label{lem:stacking}
The stacked tensor corresponds to the order-$M$ coupled model \eqref{eq:cpd_coupled} if and only if it has 
\begin{equation}
    \label{eq:modelConc}
    \bcalY = \cpdsetp{\lbd;\Bm1,\Bm2,\Bm3},
\end{equation}
with the factor matrices given as
\begin{equation}
    \Bm1 = \begin{bmatrix}
        \Am{j_1} \\ \vdots \\ \Am{M_1}
    \end{bmatrix}\in\dsR^{M_1I\times R}, \quad \Bm2 = \begin{bmatrix}
        \Am{k_1} \\ \vdots \\ \Am{M_2}
    \end{bmatrix}\in\dsR^{M_2I\times R} \quad \Bm3 = \begin{bmatrix}
        \Am{\ell_1} \\ \vdots \\ \Am{M_3}
    \end{bmatrix}\in\dsR^{M_3I\times R}.
    \label{eq:defFactorMatrixConc}
\end{equation}
\end{lemma}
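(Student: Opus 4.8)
The plan is to prove the equivalence by a direct bookkeeping of the block structure: expand the CP model \eqref{eq:modelConc} of the stacked tensor $\bcalY$ into rank-one terms and verify that each $I\times I\times I$ block of $\bcalY$ coincides exactly with the corresponding marginal $\Htrip{j_r}{k_s}{\ell_t}$ prescribed by the coupled model \eqref{eq:cpd_coupled}.

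First I would write $\bcalY = \cpdsetp{\lbd;\Bm1,\Bm2,\Bm3} = \sum_{q=1}^{R} \lambda_q\, \bfb^{(1)}_q \circ \bfb^{(2)}_q \circ \bfb^{(3)}_q$, where $\bfb^{(n)}_q$ denotes the $q$-th column of $\Bm{n}$ (I use $q$ for the rank index to avoid clashing with the block indices $r,s,t$). By the concatenated definition of the factor matrices in \eqref{eq:defFactorMatrixConc}, the column $\bfb^{(1)}_q$ is the vertical stacking of $\am{j_1}{q},\ldots,\am{j_{M_1}}{q}$, so its restriction to the $r$-th length-$I$ block is precisely $\am{j_r}{q}$; the same holds for $\bfb^{(2)}_q$ and $\bfb^{(3)}_q$ with blocks $s$ and $t$.

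Then the core observation is that the outer product acts block-wise with respect to this concatenation: reading the entrywise definition of $\circ$, the $(r,s,t)$-th $I\times I\times I$ block of $\bfb^{(1)}_q \circ \bfb^{(2)}_q \circ \bfb^{(3)}_q$ is exactly $\am{j_r}{q}\circ\am{k_s}{q}\circ\am{\ell_t}{q}$, since each block depends only on the corresponding sub-vectors. Summing over $q$ and using linearity of block extraction gives
\[
\bcalY_{1+(r-1)I:rI,\,1+(s-1)I:sI,\,1+(t-1)I:tI} = \sum_{q=1}^{R}\lambda_q\,\am{j_r}{q}\circ\am{k_s}{q}\circ\am{\ell_t}{q} = \cpdsetp{\lbd;\Am{j_r},\Am{k_s},\Am{\ell_t}},
\]
which is $\Htrip{j_r}{k_s}{\ell_t}$. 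Since the Cartesian structure $\calT = \calM_1\times\calM_2\times\calM_3$ makes the index triples $(r,s,t)$ range over exactly the triplets $\jkl\in\calT$, matching all $M_1M_2M_3$ blocks is equivalent to satisfying the coupled model on every marginal; the converse is immediate because $\bcalY$ is defined block by block. I do not expect a genuine obstacle here — the only care required is the indexing convention that pairs the $q$-th columns across $\Bm1,\Bm2,\Bm3$ with the single shared loading vector $\lbd$, which is exactly what forces all marginals to share the same $\lbd$ and the same factor columns.
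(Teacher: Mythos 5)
Your proof is correct and is exactly the block-wise bookkeeping the paper has in mind — the paper labels this lemma ``obvious'' and omits the argument entirely, so your write-up simply supplies the details (the outer product of concatenated vectors decomposes block-wise, and the Cartesian structure of $\calT$ makes the blocks range over precisely the coupled marginals). No gaps; the care you note about pairing the $q$-th columns with the shared $\lbd$ is indeed the only point worth making explicit.
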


\begin{example}
    \label{ex:m5repart}
    For $M=5$ and a partition $\calM_1 =\{1\}$, $\calM_2 =\{2,3\}$ and $\calM_3 =\{4,5\}$, the set of triplets becomes $\calT = \{\triples124, \triples125, \triples134, \triples135\}$.
    Therefore, the blocks of $\bcalY$ are 
    $$\{\Htrip124, \Htrip125, \Htrip134, \Htrip135 \}.$$
\end{example}

\begin{remark}
    As mentioned previously, \Cref{lem:stacking} was in fact used in \cite{n_kargas_tensors_2018}.

   For \cite[Theorem 1]{n_kargas_tensors_2018}, the partition is the following:
    \begin{equation*}
        \calM_1 = \{1\}, \quad \calM_2 = \{2\} \quad \text{and} \quad \calM_3 = \cpdsetp{3,M},
    \end{equation*} 
    which leads to the following triplets $\calT = \{\triples123, \triples124, \ldots, \triples12M\}$.
    Therefore, $\bcalY$ contains the set of ($M-2$) marginals $\{\Htrip123, \Htrip124, \ldots, \Htrip12M\}$.
    
    For \cite[Theorem 2]{n_kargas_tensors_2018}, a more balanced partition is used:
    \begin{equation*}
        \calM_1 = \cpdsetp{1,\left\lfloor\frac{M}3\right\rfloor}, \quad \calM_2 = \cpdsetp{\left\lfloor\frac{M}3\right\rfloor+1,\left\lfloor\frac{2M}3\right\rfloor} \quad \text{and} \quad \calM_3 = \cpdsetp{\left\lfloor\frac{2M}3\right\rfloor+1,M}.
    \end{equation*} 

    The idea is then to use generic uniqueness results for tensor decomposition of the stacked tensor \eqref{eq:modelConc}.
\end{remark}

\subsection{The importance of constraints and the main result}
The constraints in the original PCTF3D model impose additional constraints on the stacked CPD model \eqref{eq:modelConc}.
First, nonnegativity constraints \eqref{eq:lbdNN} and \eqref{eq:AmNN} require the factors of \eqref{eq:modelConc} to be also non-negative:
\begin{equation*}
    \lbd\geq 0, \quad \Bm1\geq0, \quad \Bm2\geq0, \quad \Bm3\geq0.
\end{equation*}
Second, the sum-to-one constraint \eqref{eq:lbdS21} on $\lbd$ remains ($\vUn^\T_R\lbd = 1$) and the sum-to-one constraints on the factors lead to the more complicated constraints as follows.
\begin{remark}\label{rem:constraints_Bm}
    The factors $\Am{m}$ satisfy the constraints \eqref{eq:AmS21} if and only if $\Bm1, \Bm2, \Bm3$ satisfy
    \begin{equation*}
    (\bfI_{M_1}\kron\vUn^\T_I)\Bm1 = \vUn_{M_1 \times R}, \quad (\bfI_{M_2}\kron\vUn^\T_I)\Bm2 = \vUn_{M_2 \times R}, \quad (\bfI_{M_3}\kron\vUn^\T_I)\Bm3 = \vUn_{M_3 \times R};
    \end{equation*}
    i.e. every column of $\Bm1$, $\Bm2$, $\Bm3$, must satisfy $M_1,M_2,M_3$ independent conditions, respectively.
\end{remark}

\Cref{rem:constraints_Bm} shows that we cannot formally use generic uniqueness results for $IM_1 \times IM_2 \times IM_3$ tensors (as it was done in \cite{n_kargas_tensors_2018}), since the factors are non-generic (belong to a set of measure $0$ due to the constraints).
Though the results in \cite{n_kargas_tensors_2018} remain correct, our main result shows that we can still use generic uniqueness results for tensors of reduced size.

\begin{theorem}
    \label{thm:IdentCoupledModel}
    Let $\calT = \calM_1\times\calM_2\times\calM_3$ be the Cartesian coupling.
    If, for a tensor of size $((I-1)M_1+1)\times ((I-1)M_2+1)\times ((I-1)M_3+1)$, the real rank-$R$ CP model is identifiable, then the PCTF3D model $(\mu,\Tht)$ (see\eqref{eq:pctf3d_model_unconstrained},\eqref{eq:Tht_pctf3d}) of rank $R$ for the Cartesian coupling is identifiable.
\end{theorem}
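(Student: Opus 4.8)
The plan is to collapse the constrained coupled model onto an \emph{unconstrained} CP model of a smaller tensor by combining the stacking of \Cref{lem:stacking} with the ``ones-appended'' parameterization \eqref{eq:rank1_parameterization_ones}. First, by \Cref{prop:relaxing_sum_to_one} together with \Cref{rem:reparameterization_PCTF3D}, identifiability of the original PCTF3D model $(\mu,\Tht)$ is equivalent to identifiability of the reparameterized additive model over $\Tht_1 = \dsR_+\times\blacktriangle_{I-1}^{M}$; since this set is a positive-measure subset of the full space $\dsR^{1+M(I-1)}$, \Cref{lem:identifiable-submodel} reduces the goal to proving identifiability of the additive model whose $r$-th parameter block is $\tht_r=(\lambda_r,\uam{1}{r},\ldots,\uam{M}{r})$ ranging over all of $\dsR^{1+M(I-1)}$ and whose output is the stacked tensor $\bcalY = \cpdsetp{\lbd;\Bm1,\Bm2,\Bm3}$ of \eqref{eq:modelConc}.

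The crux is a mode-wise linear reduction that removes exactly the redundancy created by the simplex constraints. For each mode $p\in\{1,2,3\}$ I would define a matrix $\bfL_p\in\dsR^{((I-1)M_p+1)\times M_pI}$ acting on a column of $\Bm{p}$ (a stack of $M_p$ blocks of length $I$): its first $(I-1)M_p$ rows extract, block by block, the leading $I-1$ entries, while its last row equals $\tfrac1{M_p}\vUn^\T_{M_pI}$. By \Cref{rem:constraints_Bm}, every column of $\Bm{p}$ satisfies the $M_p$ independent constraints $(\bfI_{M_p}\kron\vUn^\T_I)\Bm{p}=\vUn_{M_p\times R}$, so $\tfrac1{M_p}\vUn^\T_{M_pI}\Bm{p}=\vUn^\T_R$ and the last row manufactures a constant $1$. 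On the affine constraint subspace, of dimension $M_p(I-1)$, the map $\bfL_p$ is injective, since two admissible columns with the same leading entries in each block share identical trailing entries by the sum-to-one property; hence $\bfL_p$ is a bijection onto $\{[\ua;1]:\ua\in\dsR^{(I-1)M_p}\}$. Applying the multilinear transformation $(\bfL_1,\bfL_2,\bfL_3)$ to $\bcalY$ then yields $\widetilde{\bcalY} = \cpdsetp{\lbd;\bfL_1\Bm1,\bfL_2\Bm2,\bfL_3\Bm3}$, an $((I-1)M_1+1)\times((I-1)M_2+1)\times((I-1)M_3+1)$ tensor whose factor columns are precisely of the form \eqref{eq:rank1_parameterization_ones}.

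Writing $\phi$ for the affine bijection reorganizing $(\lambda_r,\uam{1}{r},\ldots,\uam{M}{r})_r$ into the parameters $(\lambda_r,\ua_r,\underline{\bfb}_r,\underline{\bfc}_r)_r$ of the smaller-tensor model $\widetilde\mu$, and $\bfL$ for the linear map induced on vectorized stacked tensors by $(\bfL_1,\bfL_2,\bfL_3)$, the computation above gives the commuting identity $\bfL\circ\mu = \widetilde\mu\circ\phi$. The hypothesis that the rank-$R$ CPD of the reduced-size tensor is generically unique yields, via \Cref{lem:reduced_parameterization}, that $\widetilde\mu$ is identifiable. If $\mu(\tht)=\mu(\tht')$ then $\widetilde\mu(\phi(\tht))=\widetilde\mu(\phi(\tht'))$; since $\phi$ is an affine bijection it sends a generic $\tht$ to a generic $\phi(\tht)$, so identifiability of $\widetilde\mu$ forces $\phi(\tht')$ to be a block permutation of $\phi(\tht)$, hence $\tht'$ a block permutation of $\tht$. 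This establishes identifiability of the unconstrained additive model, and the equivalences of the first paragraph transfer it back to $(\mu,\Tht)$.

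The main obstacle is the second step: checking that the single averaging row $\tfrac1{M_p}\vUn^\T$ correctly produces the appended $1$ while the block-wise truncation rows annihilate exactly the $M_p-1$ redundant directions, so that $\bfL_p$ is a genuine bijection between the constrained stack and the reduced factor, and that this mode-wise reduction is compatible with the CP/additive structure so that the Kronecker-structured $\bfL$ intertwines $\mu$ and $\widetilde\mu$. Once the relation $\bfL\circ\mu=\widetilde\mu\circ\phi$ is secured, the identifiability transfer is routine; the only point requiring care is that $\bfL$ is not invertible on the output space, so the argument runs in a single direction (smaller-tensor identifiability $\Rightarrow$ PCTF3D identifiability), which is exactly the implication claimed.
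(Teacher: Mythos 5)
Your proposal is correct and follows essentially the same route as the paper's proof: reduce to the truncated reparameterization via \Cref{prop:relaxing_sum_to_one} and \Cref{rem:reparameterization_PCTF3D}, relate the stacked tensor $\bcalY$ by a mode-wise linear map to a tensor of size $((I-1)M_1+1)\times((I-1)M_2+1)\times((I-1)M_3+1)$ whose factors have the appended-ones form \eqref{eq:rank1_parameterization_ones}, and conclude with \Cref{lem:reduced_parameterization} and \Cref{lem:identifiable-submodel}. The only (cosmetic) difference is the direction of the linear map: the paper lifts the small factors, writing $\Bm{p}=\bfQ_p\Cm{p}$ so that $\vectorize\bcalY=(\bfQ_3\kron\bfQ_2\kron\bfQ_1)\vectorize\bcalZ$, whereas you project the stacked factors down via $\bfL_p$, which spares you the injectivity check on the output side at the cost of obtaining only the one-directional implication (which is all the theorem claims).
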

The proof of the theorem is simple and relies on the properties of polynomial models. We postpone the proof to the next subsection, and discuss first the consequences of this result.

\subsection{Identifiability result for an even partition of variables} \label{sec:identBest}

In this subsection, let $M = 3\left\lfloor \frac{M}3\right\rfloor+\varepsilon$ (with $0\leq \varepsilon< 3$) and the partition of variables is considered as follows:
\begin{itemize}
    \item If $\varepsilon=0$, $M_1=M_2=M_3=\left\lfloor M/3\right\rfloor$,
    \item If $\varepsilon=1$, $M_1=\left\lfloor M/3\right\rfloor+1$ and $M_2=M_3=\left\lfloor M/3\right\rfloor$,
    \item If $\varepsilon=2$, $M_1=M_2=\left\lfloor M/3\right\rfloor+1$ and $M_3=\left\lfloor M/3\right\rfloor$.
\end{itemize}
Note that this partition groups variables in the most balanced way possible in the sense that the resulting 3D tensor (see \Cref{fig:repartKargas}) is the most cubic regarding the repartition of variables.
\begin{theorem}
    Let $\bcalH$ an order-$M$ PMF tensor of size $I\times \cdots \times I$ ($M>3$) and let $\calT$ the Cartesian coupling with an even partition of variables.
    Then, the CPD of $\bcalH$ is identifiable from the 3D marginal tensors $\{\Hjkl\}_{\jkl\in\calT}$ if:
    \begin{equation}
        R\leq \frac{\left\lfloor M/3\right\rfloor^2(I-1)^2}3 -\left\lfloor M/3\right\rfloor(I-1).
        \label{eq:ident3Deven}
    \end{equation}
\end{theorem}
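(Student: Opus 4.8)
The plan is to chain the reduction provided by \Cref{thm:IdentCoupledModel} together with a sharp generic identifiability result for unconstrained three-way tensors. First I would invoke \Cref{thm:IdentCoupledModel}: for the Cartesian coupling it suffices to prove that the real rank-$R$ CP model is identifiable for a single tensor of size $I_1 \times I_2 \times I_3$, where $I_m = (I-1)M_m + 1$ and $M_1 \ge M_2 \ge M_3$ are the cardinalities of the even partition. Since \Cref{rem:tens:equivCRRplus} guarantees that complex generic identifiability implies real (and nonnegative) generic identifiability, I would establish the complex statement and transfer it. This makes the strongest available three-way bound applicable, namely \eqref{eq:thm_ident_3d} from \cite{bocci_refined_2014}, whose hypothesis $2 < I_3 \le I_2 \le I_1$ holds here once $M>3$ and $I\ge 2$ force $I_3 = (I-1)M_3+1 > 2$ and the ordering $M_1\ge M_2\ge M_3$ gives $I_1\ge I_2\ge I_3$.

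Next I would substitute $I_m = (I-1)M_m+1$ into \eqref{eq:thm_ident_3d} and show that the claimed rank bound \eqref{eq:ident3Deven} never exceeds the resulting right-hand side, so that \eqref{eq:thm_ident_3d} is automatically satisfied whenever \eqref{eq:ident3Deven} holds. The cleanest and, as it turns out, binding instance is the balanced case $\varepsilon=0$, where $M_1=M_2=M_3=\tilde M := \lfloor M/3\rfloor$ and hence $I_1=I_2=I_3=J$ with $J=(I-1)\tilde M+1$. Writing $x = \tilde M(I-1)$ so that $J=x+1$, the bound \eqref{eq:thm_ident_3d} reduces to $\frac{J(J-1)(J-2)}{3J-2}=\frac{x(x^2-1)}{3x+1}$, while the target \eqref{eq:ident3Deven} equals $\frac{x^2}{3}-x$. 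Clearing denominators, the difference of the two equals $\frac{8x^2}{3(3x+1)}\ge 0$, so \eqref{eq:ident3Deven} indeed implies \eqref{eq:thm_ident_3d} with a comfortable linear margin.

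For $\varepsilon\in\{1,2\}$ the partition enlarges one or two of the $M_m$ by one, which only increases the corresponding $I_m$; since the linear margin $\frac{8x^2}{3(3x+1)}=O(\tilde M I)$ obtained above dominates the lower-order change caused by incrementing a single dimension, the inequality persists, and a short direct computation confirms it in each parity case. The hard part will be precisely this case analysis: because the closed form \eqref{eq:ident3Deven} is stated purely in terms of $\tilde M=\lfloor M/3\rfloor$, for $\varepsilon\neq 0$ one must verify that replacing the true (larger) cardinalities by $\tilde M$ in the target does not overshoot the right-hand side of \eqref{eq:thm_ident_3d}, and one must also re-check the boundary hypothesis $I_3>2$ and the ordering $I_1\ge I_2\ge I_3$ for small $I$ or $M$. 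Once these routine verifications are in place, complex identifiability of the reduced $I_1\times I_2\times I_3$ tensor follows from \cite{bocci_refined_2014}, and \Cref{rem:tens:equivCRRplus} together with \Cref{thm:IdentCoupledModel} yields identifiability of the PCTF3D model, completing the proof.
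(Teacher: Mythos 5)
Your proposal is correct and follows essentially the same route as the paper: reduce via \Cref{thm:IdentCoupledModel} to a single $((I-1)M_1+1)\times((I-1)M_2+1)\times((I-1)M_3+1)$ tensor, apply the bound \eqref{eq:thm_ident_3d} of \cite{bocci_refined_2014}, and verify algebraically (in the balanced case $\varepsilon=0$, with the cases $\varepsilon\in\{1,2\}$ deferred to a routine check) that \eqref{eq:ident3Deven} implies that bound — your explicit margin $\frac{8x^2}{3(3x+1)}$ is just a rephrasing of the paper's factorization $x^2\frac{3+x}{1+3x}-x\ge\frac{x^2}{3}-x$. Your added care in invoking \Cref{rem:tens:equivCRRplus} for the complex-to-real transfer and in checking the hypothesis $2<I_3\le I_2\le I_1$ is a minor improvement in rigor, not a different argument.
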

\begin{proof}
    In this proof, we restrain ourselves to the case of $M =3K$, hence $M_1 = M_2 = M_3 = M/3$.
    First, by invoking both Theorem \ref{thm:IdentCoupledModel}, the identifiability of the CP model of $\bcalH$ is equivalent with the identifiability of an order-3 of size $(1+\frac{M}3(I-1))\times (1+\frac{M}3(I-1)) \times (1+\frac{M}3(I-1))$.
    Theorem \ref{eq:thm_ident_3d} may then be applied and gives identifiability of the model for:
    \begin{align*}
        R\;\leq \quad & \frac{\left(1+\left\lfloor M/3\right\rfloor(I-1)\right)^3}{1+M(I-1)} - \left\lfloor M/3\right\rfloor(I-1)-1 \\
        \Longrightarrow \quad R\;\leq \quad & \frac{1+M(I-1)+3\left\lfloor M/3\right\rfloor^2(I-1)^2+\left\lfloor M/3\right\rfloor^3(I-1)^3}{1+M(I-1)} - \left\lfloor M/3\right\rfloor(I-1)-1 \\
        \Longrightarrow \quad R\;\leq \quad & \left\lfloor M/3\right\rfloor^2(I-1)^2\underbrace{\frac{3+\left\lfloor M/3\right\rfloor(I-1)}{1+M(I-1)}}_{\geq\frac13} - \left\lfloor M/3\right\rfloor(I-1),
    \end{align*}
    which finishes the proof for this case.
    For the other two cases, the proof is similar and \eqref{eq:ident3Deven} is easier to prove as it is less optimal than the case of $M$ being a multiple of 3.
\end{proof}
As mentioned in \Cref{sec:cartesianProd}, if any partially coupled case has identifiability, it guarantees the identifiability of the fully coupled model.
Therefore, the uniqueness result of \Cref{eq:ident3Deven} is also true for the fully coupled case.
\Cref{fig:identBound} shows the different uniqueness results obtained in the case of a full coupling -- hence for the FCTF3D method.
This plot shows that the identifiability bounds we can obtain are better than the bound of the original FCTF3D paper \cite{n_kargas_tensors_2018}.
\begin{figure}
    \centerline{\includegraphics[width = \linewidth,clip,trim = {85 5 90 5}]{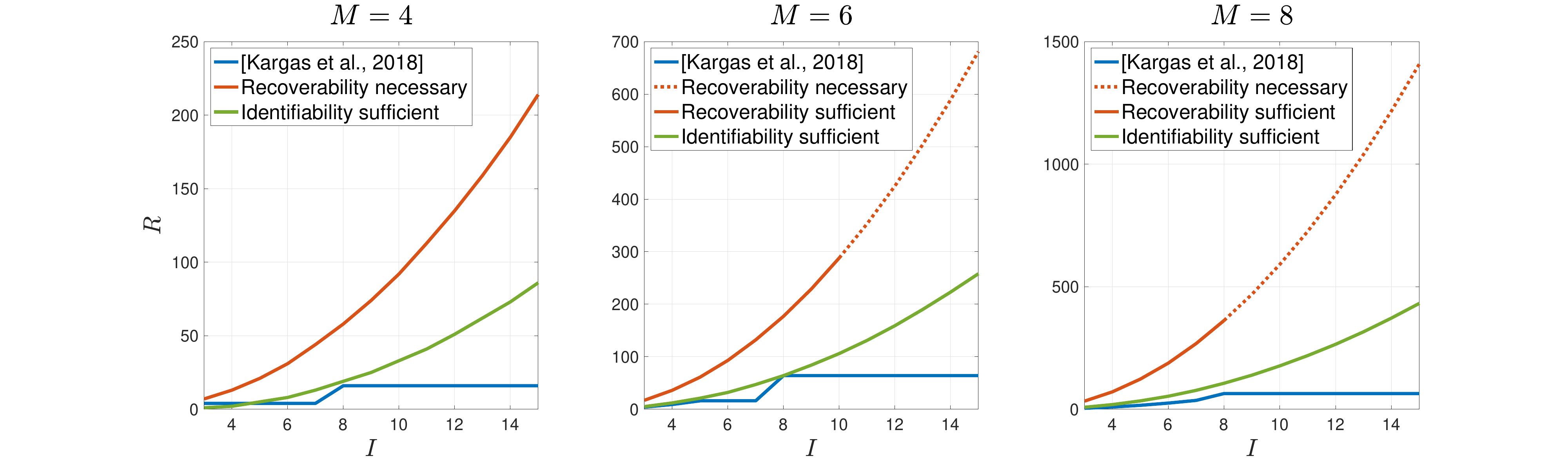}}
    \caption{Identifiability bounds in the fully coupled case along with recoverability results.}
    \label{fig:identBound}
\end{figure}

\subsection{Proof of the main result for Cartesian coupling}

\begin{proof}[Proof of \Cref{thm:IdentCoupledModel}]
Thanks to the results of \Cref{sec:recoAlgo}, we will consider identifiability of the reparameterized model 
$(\mu,\Tht)$ defined by \eqref{eq:defTht}-\eqref{eq:mappinghtau}.
Therefore, we take the truncated $\tht \in \dsR^{R(1+M(I-1))}$ (see \eqref{eq:deftht}) as our parameter vector and 
assume $\uAm{m} = \calP(\uAm{m})$ in \eqref{eq:modelConc} and \eqref{eq:defFactorMatrixConc}.
 
For $\tht$ belonging to the set of constraints, the factors $\Bm1, \Bm2, \Bm3$ in \eqref{eq:modelConc} can be expressed as follows: 
\begin{equation}
    \label{eq:relBQuB}
    \Bm1 = \bfQ_1\bfC^{(1)}, \quad 
    \Bm2 = \bfQ_2\bfC^{(2)} \quad \text{and} \quad 
    \Bm3 = \bfQ_3\bfC^{(3)},
\end{equation}
where, $\bfQ_1$ (respectively $\bfQ_2$ and $\bfQ_3$) are constant matrices of size $M_1I\times (M_1(I-1)+1)$ (respectively $M_2I\times (M_2(I-1)+1)$ and $M_3I\times (M_3(I-1)+1)$), defined below, and
    \begin{equation*}
        \Cm1 = \underbrace{\begin{bNiceArray}{ccc}[margin]
            1 & \Cdots & 1 \\
            \Block{1-3}{\uAm{j_1}} \\
            & \vdots & \\
            \Block{1-3}{\uAm{j_{M_1}}} \\
        \end{bNiceArray}}_{(M_1(I-1)+1)\times R}, \quad
        \Cm2 = \underbrace{\begin{bNiceArray}{ccc}[margin]
            1 & \Cdots & 1 \\
            \Block{1-3}{\uAm{k_1}} \\
            & \vdots & \\
            \Block{1-3}{\uAm{k_{M_2}}} \\
        \end{bNiceArray}}_{(M_2(I-1)+1)\times R}, \quad 
        \Cm3 = \underbrace{\begin{bNiceArray}{ccc}[margin]
            1 & \Cdots & 1 \\
            \Block{1-3}{\uAm{\ell_1}} \\
            & \vdots & \\
            \Block{1-3}{\uAm{\ell_{M_3}}} \\
        \end{bNiceArray}}_{(M_3(I-1)+1)\times R},
    \end{equation*}
For example, the matrix $\bfQ_1$ has the form shown in \Cref{fig:matrQ}.

    \begin{figure}[bt]
        \centering
\begin{equation*}
    \bfQ_1 = \begin{bNiceArray}{c|cccc|cccc|cccc|cccc}[margin]
            0      & \Block{4-4}{\jac{\calP}} & & & & \Block{4-4}{\bm{0}_{I\times I-1}} & & & & & & & & \Block{4-4}{\bm{0}_{I\times I-1}} & & & \\
            \vdots & & & & & & & & & & \Hdotsfor{2} & & & & & \\
            0      & & & & & & & & & & & & & & & & \\
            1      & & & & & & & & & & & & & & & & \\ 
            \hline
            0      & \Block{4-4}{\bm{0}_{I\times I-1}} & & & & \Block{4-4}{\jac{\calP}} & & & & \phantom{0} & & & & & & \phantom{\vdots} & \\
            \vdots & & & & & & & & & & \Ddots & & & & & \Vdots & \\
            0      & & & & & & & & & & & & & & & & \\
            1      & & & & & & & & & & & & \phantom{0} & & & \phantom{\vdots} & \\ 
            \hline
            0      & & \phantom{\vdots} & & & \phantom{0} & & & & \phantom{0} & & & & \Block{4-4}{\bm{0}_{I\times I-1}} & & & \\
            \vdots & & \Vdots & & & & \Ddots & & & & \Ddots & & & & & & \\
            0      & & & & & & & & & & & & & & & & \\
            1      & & \phantom{\vdots} & & & & & & \phantom{0} & & & & \phantom{0} & & & & \\ 
            \hline
            0      & \Block{4-4}{\bm{0}_{I\times I-1}} & & & & & & & & \Block{4-4}{\bm{0}_{I\times I-1}} & & & & \Block{4-4}{\jac{\calP}} & & & \\
            \vdots & & & & & & \Hdotsfor{2} & & & & & & & & & \\
            0      & & & & & & & & & & & & & & & & \\
            1      & & & & & & & & & & & & & & & & 
        \end{bNiceArray},
\end{equation*}
        \caption{The matrix $\bfQ_1$.
        The matrix $\jac{\calP}\in\dsR^{I\times (I-1)}$ is the Jacobian matrix of the mapping for truncated factors (see \Cref{eq:projS21}).}
        \label{fig:matrQ}
    \end{figure}

Then, by multilinearity, the tensor $\bcalY$ can be expressed in terms of a smaller tensor:
\[
\vectorize \bcalY (\tht) = \bfQ_3\kron\bfQ_2\kron\bfQ_1 \mu_2(\tht),
\]
where
 \begin{equation}
        \mu_2(\tht) = \vectorize(\bcalZ) \quad \text{where } \bcalZ = \cpdsetp{\lbd; \Cm1, \Cm2, \Cm3}.
        \label{eq:CPcone1}
    \end{equation}
The model $(\mu_2,\Tht)$ is thus a re-parametrization of the original model, and, by and \Cref{rem:reparameterization_PCTF3D} PCTF3D is identifiable if and only if $(\mu_2,\Tht)$ is identifiable.

Finally, we remark that model $(\mu_2,\Tht)$ is a submodel of the polynomial additive model in \Cref{lem:reduced_parameterization} for $I_1 = (M_1(I-1)+1)$, $I_2 = (M_2(I-1)+1)$ and $I_3 = (M_3(I-1)+1)$.
Moreover, the constraint set \eqref{eq:defTht} in this case is of positive Lebesgue measure,
hence identifiability of $I_1 \times I_2 \times I_3$ CPD implies identifiability of the Cartesian coupling by \Cref{lem:identifiable-submodel}. 
\end{proof}

\section{Conclusion}

In this article, we proposed a new model for PMF estimation which is a fundamental issue in signal processing.
Our new model breaks the curse of dimensionality at a newer level.
Indeed, by coupling only a subset of 3D marginals, we are able to reduce and control the complexity of our approach.
With this new model, we proposed to choose the marginals with the framework of hypergraphs.
Hypergraphs permit to introduced examples of coupling strategies such as random couplings and balanced couplings which ensures that variables are represented evenly in terms of occurrences in the graph.

This article, as the second part of a two-parts article (see \cite{pctf3d_part1}), mainly focus on the question of uniqueness of our newly-proposed model. 
In the sense of recoverability, the model is studied with a new algorithm that returns the maximal recoverable rank and giving a certificate of recoverability in a particular setting.
When applied to random couplings, this algorithm revealed the existence of so-called defective cases which have been explained with the structure of the Jacobian matrix.
Because no defective cases where seen for balanced couplings, it shows their utility in practical cases.

Concerning identifiability, it has been studied in the case of Cartesian couplings, a strategy that partitions variables into 3 groups.
By studying the order-3 tensor model obtained with these couplings, we have been able to prove a sufficient condition on identifiability, by studying the algebraic geometry properties of the sum-to-one constraints of the model.
This new identifiability results reveals better results than the literature (and especially \cite{n_kargas_tensors_2018}) without relaxing the sum-to-one constraints on the factors.

\appendix
\section{Technical proofs}\label{sec:tech_proofs}

\begin{proof}[Proof of \Cref{lem:reduced_parameterization}]
Suppose, by contradiction, that \eqref{eq:rank1_parameterization_ones} is not identifiable.
Then there exists a Euclidean ball of $\mathcal{B} \in \dsR^{R(I+J+K-2)}$ (of positive Lebesgue measure) of the parameters
such that for all $(\lambda_{r},\underline{\bfa}_r, \underline{\bfb}_r, \underline{\bfc}_r)_{k=1}^R \in \mathcal{B}$ the additive decomposition for $\mu_1$ as in \eqref{eq:rank1_parameterization_ones} 
\[
\vectorize(\bcalX) = \mu_1(\lambda_{1},\underline{\bfa}_1, \underline{\bfb}_1, \underline{\bfc}_1) + \cdots + 
 \mu_1(\lambda_{R},\underline{\bfa}_R, \underline{\bfb}_R, \underline{\bfc}_R)
\]
is non-unique.

But then in this case, the CPD of underlying tensor
\[
\bcalX = \sum\limits_{r=1}^R \lambda_r {\bfa}_r \out{\bfb}_r\out{\bfc}_r
\]
is non-unique for any $\lambda_{r}, \bfa_r = \alpha_r \left[\begin{smallmatrix}\underline{\bfa}_r \\1 \end{smallmatrix}\right], \bfb_r = \beta_r \left[\begin{smallmatrix}\underline{\bfb}_r \\1 \end{smallmatrix}\right], \bfc_r = \gamma_r \left[\begin{smallmatrix}\underline{\bfc}_r \\1 \end{smallmatrix}\right]$, with
\[
(\lambda_{r},\underline{\bfa}_r, \underline{\bfb}_r, \underline{\bfc}_r)_{k=1}^R \in \mathcal{B}, \alpha_r, \beta_r, \gamma_r \neq 0,
\]
which defines the set of parameters $(\lambda_{r},\underline{\bfa}_r, \underline{\bfb}_r, \underline{\bfc}_r)_{k=1}^R \in \dsR^{R(I+J+K+1)}$ of positive measure, thus we arrive at the contradiction.
\end{proof}

\begin{proof}[Proof of \Cref{prop:relaxing_sum_to_one}]
Define by $\Tht^{(\alpha)} \subset \Tht'$, the set of all parameters $\tht \in \Tht'$ such that $\left(\sum\limits_{r=1}^R \lambda_r\right) = \alpha > 0$ (in particular, we have $\Tht = \Tht^{(1)}$).
Note that from sum-to-one constraints on the factors, we get for all $\tht \in \Tht^{(\alpha)}$
\[
\alpha = \left(\left(\left(\Hjkl \bullet_3 \vUn^\T_I\right)\bullet_2\vUn^\T_I\right)\bullet_1\vUn^\T_I\right).
\]
for all $\jkl$ and $\Hjkl$ obtained in \eqref{eq:pctf3d_model_unconstrained}.
Therefore, we have $\mu^{-1}(\tht) \subset \Tht^{(\alpha)}$ if $\tht \in \Tht^{(\alpha)}$ (all equivalent parameters must be from the same class).

Let $\rho(\tht)$ be a normalization of the parameter (that replaces all $\lambda_r$ by $\frac{\lambda_r}{\sum\limits_{r=1}^R \lambda_r}$). Then for any $\alpha> 0$, the map $\rho$ is one-to-one from $\Tht^{(\alpha)}$ to $\Tht^{(1)}$.
Therefore, $\mu(\tht)$ has unique (finite number of) decomposition(s) if $\mu(\rho(\tht))$ has unique (finite number of) decomposition(s).
 
Finally, we note that $\Tht' \setminus \cup_{\alpha > 0} \Tht^{(\alpha)} $ is a subset of measure zero, which completes the proof.
\end{proof}

\bibliographystyle{elsarticle-num}

\end{document}